\documentclass[12pt,a4paper]{article}
\usepackage{xcolor,paralist}
\usepackage[pdftex,colorlinks=true,hypertexnames=false]{hyperref}
\definecolor{darkblue}{rgb}{0,0,.6}
\hypersetup{citecolor=darkblue,linkcolor=darkblue,urlcolor=darkblue}
\usepackage{amssymb,enumitem,booktabs,subfig,setspace,bm,longtable,xr,dsfont,orcidlink,fancyvrb}
\usepackage{algorithm}
\usepackage{algpseudocode,multirow}
\usepackage{tikz}
\usetikzlibrary{arrows.meta,positioning,shapes.geometric}

\usepackage[final,nopatch=footnote]{microtype}
\usepackage[top=0.85in, bottom=0.85in, left=0.85in, right=0.85in]{geometry}
\usepackage{natbib}
\usepackage{comment}
\usepackage{url}
\usepackage{amsmath}
\usepackage{amsfonts}
\usepackage{epsfig}
\usepackage{graphics}
\usepackage[normalem]{ulem}
\usepackage{palatino,eulervm}
\usepackage{graphicx}
\usepackage{lscape}
\usepackage{rotating}
\usepackage[linewidth=1pt]{mdframed}
\allowdisplaybreaks[4]

\providecommand{\U}[1]{\protect\rule{.1in}{.1in}}

\graphicspath{{plots/}}
\newsavebox\CBox

\usepackage{amsthm,thmtools}
\usepackage{mathrsfs}
\declaretheorem{theorem}
\declaretheorem{lemma}
\makeatletter
\def\th@newremark{\th@remark\thm@headfont{\bfseries}}
\makeatletter
\theoremstyle{newremark}
\newtheorem{remark}{Remark}

\newtheorem{corollary}{Corollary}
\newtheorem{assumption}{Assumption}
\declaretheoremstyle[
  spaceabove=6pt, spacebelow=6pt,
  headfont=\bfseries,
  notefont=\mdseries, notebraces={(}{)},
bodyfont=\normalfont,
  postheadspace=0.5em
]{mystyle}

\newcommand{\X}{\mathcal{X}}

\bibpunct{(}{)}{;}{a}{,}{ and }

\makeatletter
\newcommand*{\addFileDependency}[1]{
\typeout{(#1)}
\@addtofilelist{#1}
\IfFileExists{#1}{}{\typeout{No file #1.}}
}\makeatother

\newcommand{\Rlogo}{\protect\includegraphics[height=1.8ex,keepaspectratio]{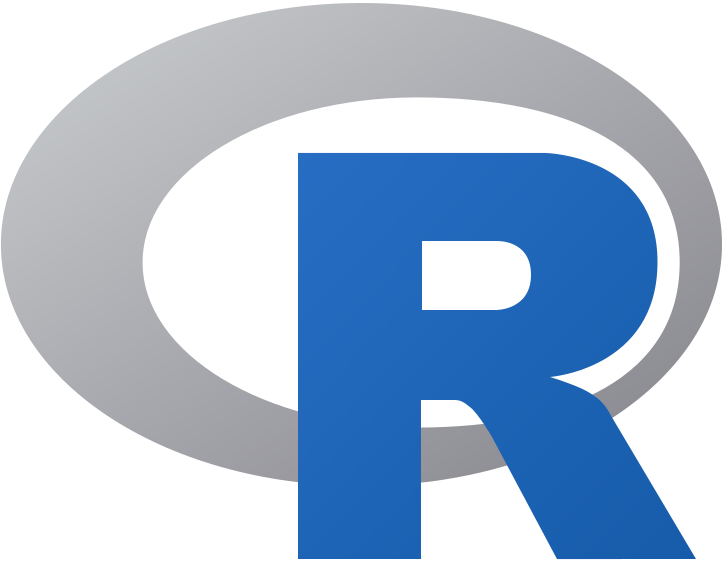}}

\newcommand{\commUB}[1]{{\leavevmode\color{purple}#1}}

\begin{document}

\title{Robust spatial scalar-on-function regression: A Fisher-consistent redescending M-estimation approach}

\author{\normalsize  Muge Mutis\footnote{Corresponding address: Department of Statistics, Yildiz Technical University, 34220, Esenler-Istanbul, Turkiye; Email address: muge.mutis@yildiz.edu.tr} \orcidlink{0000-0002-9801-4835} \\
\normalsize  Department of Statistics, Yildiz Technical University, Istanbul, Turkiye \\
\normalsize Ufuk Beyaztas \orcidlink{0000-0002-5208-4950} \\
\normalsize Department of Statistics, Marmara University, Istanbul, Turkiye\\
\normalsize Han Lin Shang \orcidlink{0000-0003-1769-6430} \\
\normalsize Department of Actuarial Studies and Business Analytics, Macquarie University, Sydney, Australia
}
\date{}
\maketitle

\begin{abstract}
We develop a Fisher-consistent redescending robust estimator for the spatial scalar-on-function regression model, where a scalar response depends on both a functional predictor and a spatial autoregressive lag. Existing estimation procedures for this model are typically based on likelihood methods or monotone-loss robust M-estimators. They may be highly sensitive to vertical outliers, leverage points in the functional predictor, and numerical instability induced by strong spatial dependence. To address these issues, we propose a new estimation framework that first applies robust functional principal component analysis to obtain a contamination-resistant finite-dimensional representation of the functional predictor and then estimates the resulting spatial regression model through a bias-corrected system of M-estimating equations. The proposed method allows redescending loss functions, including Andrews' sine and Danish losses, and jointly estimates the regression coefficients, spatial dependence parameter, and scale parameter within a unified Fisher-consistent framework. For computation, we develop a hybrid IRLS-Newton algorithm that combines weighted least-squares updates for the regression parameters with a Newton-Raphson update for the spatial parameter. We establish Fisher consistency, consistency, asymptotic normality, and the asymptotic distribution of the reconstructed slope function. Monte Carlo experiments show that the proposed estimators remain competitive under clean data and substantially outperform classical and Huber-type robust competitors under contamination, particularly in severe outlier settings. An application to French air-quality data further demonstrates improved predictive performance and stable estimation of spatial dependence. Our method has been implemented in the \texttt{fcsar} \Rlogo \ package.
\vspace{.1in}

\noindent \textit{Keywords}: 
Bias-corrected estimating equations;
Hybrid IRLS-Newton algorithm;
Robust functional principal components;
Spatial autoregressive model;
Outliers.\\
MSC 2020 codes: Primary 62M30; Secondary 62R10; Tertiary 62H25
\end{abstract}

\onehalfspacing

\section{Introduction}\label{sec:1}

Recent developments in spatial functional data analysis have been largely motivated by the need to incorporate complex functional covariates into regional comparative studies. The spatial scalar-on-function regression model (SSoFRM) extends the classical scalar-on-function regression model \citep{Hastie1993} by incorporating spatial dependence in the response. For $n$ spatial units (areal regions), the SSoFRM can be written as
\begin{equation}\label{eq:SSoFRM}
Y_i = \beta_0 + \vartheta \sum_{j=1}^n w_{ij} Y_j + \int_{\mathcal{I}} \mathcal{X}_i(t) \beta(t) dt + \varepsilon_i,
\qquad i \in \{1,\ldots,n\},
\end{equation}
where $Y_i$ is the scalar response at region $i$, $\mathcal{X}_i(t)$ is a functional predictor observed over a domain~$\mathcal{I}$, $\beta_0$ is an intercept, $\beta(t)$ is an unknown regression coefficient function, and $\mathcal{I}$ denotes a function support. The term $\vartheta \sum_{j=1}^n w_{ij}Y_j$ is a spatial autoregressive (SAR) lag with spatial weight matrix $\mathbf{W}=[w_{ij}]_{1\le i,j\le n}$, which is typically assumed to have zero diagonal entries and to be row-normalized \citep{Cliff1973, Anselin1998}; $\vartheta$ denotes the spatial dependence parameter. Model \eqref{eq:SSoFRM} allows the response at each location to depend not only on its own functional covariate $\mathcal{X}_i(t)$ but also on neighboring responses $Y_j$. This structure is appropriate for areal applications in which the responses are spatially dependent and the explanatory information is carried by spatially indexed functional covariates. In the asymptotic development presented later, however, the spatial dependence is modeled through the response side of the SSoFRM, whereas the functional predictors are treated under a simplifying independent and identically distributed (i.i.d.) assumption to make the robust projection step tractable.

Historically, most estimation procedures for SSoFRM \citep[e.g.,][]{Medina2011, Medina2012, Pineda2019, Hu2021, Huang2021, Bouka2023} have relied on likelihood-based methods. While these approaches can perform well under idealized assumptions, they are well known to be sensitive to departures from model regularity, particularly the presence of outliers. In practice, even a small fraction of anomalous observations, including vertical outliers in the response or leverage points arising from atypical functional predictors, may severely bias classical estimators, distort coefficient estimates, and compromise inference \citep{DeLuna2002}. The vulnerability of likelihood-based estimation in spatial models under contamination has long motivated the development of robust alternatives that limit the impact of aberrant observations \citep{Genton2003}.

Among the existing robust contributions, \cite{Huang2021} should be viewed as an important early reference for the present work. They proposed a robust SSoFRM by assuming a $t$-distribution for the error term, projecting the functional predictor onto an orthonormal basis, and estimating the model through an expectation-maximization algorithm. Their approach, therefore, introduces robustness by adopting a heavy-tailed likelihood specification within the SSoFRM framework. However, it remains likelihood-based and is not formulated through redescending estimating equations, direct bias-corrected spatial score equations, or a unified Fisher-consistent estimating system for the regression, spatial, and scale parameters.

A more direct precursor is \cite{Beyaztas2026}, who considered the same model class and combined robust functional dimension reduction with robust second-stage spatial estimation. In particular, they projected the functional predictor onto a low-dimensional space using robust functional principal component analysis (RFPCA), and then estimated the resulting finite-dimensional spatial autoregressive model using the Huber-type robust M-estimator of \cite{Tho2023}. That strategy already delivers an important robustification of SSoFRM by protecting the first stage against functional outliers and the second stage against vertical contamination.

Against this background, the contribution of the present paper is more specific but also more focused. We do not introduce a new robust dimension-reduction step for SSoFRM. Rather, conditional on an RFPCA-based finite-dimensional representation, we develop a new second-stage spatial estimation framework. Specifically, we replace the Huber-type second-stage estimator with a jointly bias-corrected redescending system of M-estimating equations, in which the bias correction for the SAR parameter is incorporated directly into the estimating equation itself, and the scale parameter is estimated within the same loss-driven Fisher-consistent system.

To compute the resulting estimator, we develop a hybrid numerical scheme that combines iteratively reweighted least squares (IRLS) updates for the regression coefficients with a Newton-Raphson update for the spatial parameter. In contrast to fixed-point or IRLS-only updating, the Newton step directly targets the bias-corrected spatial estimating equation, yielding a fully explicit and reproducible update for the spatial parameter. As shown later through additional computational diagnostics, this design should be viewed primarily as a stability- and transparency-oriented implementation choice rather than as a uniform speed enhancement. The scale parameter is updated via a Fisher-consistent M-scale equation aligned with the chosen loss, thereby avoiding loss-agnostic external scale estimators such as the median absolute deviation that are commonly used in related SAR M-estimation procedures \citep{Zhan2025}. This unified treatment of regression, spatial dependence, and scale parameters provides a coherent basis for asymptotic analysis and inference.

Working within a finite-dimensional projected setting, we treat the RFPCA truncation level as fixed and impose an exact-span condition so that truncation does not introduce approximation bias. Under additional assumptions including i.i.d. functional predictors, i.i.d.\ symmetric errors independent of the regressors, identification of the population estimating equations, uniform convergence of the sample estimating map, and a suitable limiting empirical-process condition, we establish Fisher consistency, consistency, asymptotic normality, and the asymptotic distribution of the reconstructed slope function. Accordingly, relative to \cite{Huang2021} and especially to the more direct precursor \cite{Beyaztas2026}, the novelty of this paper lies in providing a large-sample treatment for the proposed second-stage redescending estimator, rather than a fully general asymptotic theory for SSoFRM with growing dimension or data-driven truncation.

Theoretical developments also distinguish the present paper from the closest literature. Treating the RFPCA truncation level as fixed, we establish Fisher consistency, consistency, asymptotic normality, and the asymptotic distribution of the reconstructed slope function under spatial dependence. Thus, relative to \cite{Huang2021} and especially to the more direct precursor \cite{Beyaztas2026}, the novelty of this paper lies primarily in the second-stage spatial estimation and inference machinery: a jointly bias-corrected redescending estimating system, a unified treatment of scale, and a hybrid IRLS-Newton computational scheme.

By design, the proposed approach (``FCSAR'') offers several methodological advantages relative to the closest existing robust SSoFRM approaches. Compared with the likelihood-based $t$-error formulation of \cite{Huang2021}, it does not encode robustness primarily through a parametric heavy-tailed error distribution, but through redescending estimating equations that can assign negligible weight to extreme residual outliers. Compared with the Huber-type second-stage estimator used in \cite{Beyaztas2026}, it enforces Fisher consistency through a single bias-corrected estimating system for the full parameter vector, rather than through a monotone-loss spatial M-estimation step combined with external scale handling. In addition, the hybrid IRLS-Newton algorithm mitigates sensitivities of purely reweighting-based procedures to initialization and local convergence behavior that have been noted for SAR M-estimation \citep{Tho2023, Zhan2025}, and its computational behavior can be examined directly through reported iteration counts, runtime summaries, and convergence rates. Finally, embedding scale estimation within the same loss-driven framework eliminates the need for ad hoc calibration constants in external scale estimators. It yields a scale estimator tailored to the chosen redescending loss.

The rest of this paper is organized as follows. In Section~\ref{sec:2}, we propose the SSoFRM framework considered in this study and, building on a robust spectral decomposition of the functional predictors, derive a bias-corrected unbiased M-estimator that is computed using a hybrid IRLS-Newton optimization scheme. Section~\ref{sec:asymptotic} establishes the large-sample properties of the proposed estimator, including Fisher consistency, consistency, asymptotic normality, and the asymptotic behavior of the reconstructed slope function in the sense needed for linear functionals and $\mathcal{L}^2$-type summaries. Section~\ref{sec:simulation} reports Monte Carlo studies that examine the finite-sample estimation and prediction performance of the proposed method under both clean and contaminated settings. Section~\ref{sec:real_data} presents an empirical application to French air-quality data. Section~\ref{sec:conclusion} concludes the paper, along with some ideas on how the methodology presented can be extended. Technical assumptions, auxiliary lemmas, and all proofs are collected in the Appendix.

\section{Methodology}\label{sec:2}

\subsection{Spatial scalar-on-function regression model}\label{subsec:model}

Let $(\Omega,\mathcal{F},\mathbb{P})$ be a complete probability space supporting all random elements. Consider a spatial domain $\mathcal{D}\subset\mathbb{R}^d$ equipped with the Borel $\sigma$-algebra, and $n$ spatial units indexed by $i\in\{1,\ldots,n\}$. For each unit $i$, we observe a scalar response $Y_i:\Omega\to\mathbb{R}$ and a functional predictor $\mathcal{X}_i:\Omega\to\mathcal{H}$, where $\mathcal{H}=\mathcal{L}^2(\mathcal{I})$ is the Hilbert space of square-integrable real-valued functions on $\mathcal{I}=[0,1]$, endowed with inner product $\langle f,g\rangle_{\mathcal{H}}=\int_{\mathcal{I}} f(t)g(t) dt$ and the norm $\|f\|_{\mathcal{H}}=\langle f,f\rangle_{\mathcal{H}}^{1/2}$.
Throughout, we assume $\mathbb{E}\|\mathcal{X}_i\|_{\mathcal{H}}^2<\infty$ and, when needed, $\sup_{1\le i\le n}\|\mathcal{X}_i\|_{\mathcal{H}}<\infty$ (a.s.), so that all inner products below are well-defined.

Let $\mathbf{W}=[w_{ij}]_{1\le i,j\le n}$ be a given row-normalized spatial weight matrix with $w_{ii}=0$ and $\sum_{j=1}^n w_{ij}=1$ for each $i$. The SSoFRM with autoregressive dependence is defined by
\begin{equation}\label{eq:ssofrm}
Y_i = \beta_0 + \vartheta \sum_{j=1}^n w_{ij} Y_j + \langle \mathcal{X}_i,\beta\rangle_{\mathcal{H}} + \varepsilon_i,
\qquad i \in \{1,\ldots,n\},
\end{equation}
where $\beta_0\in\mathbb{R}$ is an intercept, $\vartheta\in\mathbb{R}$ is a spatial autoregressive parameter, and $\beta\in\mathcal{H}$ is an unknown slope function. We assume that the structural errors admit the scale representation $\varepsilon_i=\sigma u_i$, where $\sigma>0$ and $u_i$ are i.i.d.\ with a symmetric distribution $F$ satisfying $\mathbb{E}(u_i)=0$; moreover, $\mathbb{E}(\varepsilon_i\mid \mathcal{X}_1,\ldots,\mathcal{X}_n)=0$.

Define $\mathbf{Y}=(Y_1,\ldots,Y_n)^\top$, $\mathbf{1}_n=(1,\ldots,1)^\top$, and $\boldsymbol{\varepsilon}=(\varepsilon_1,\ldots,\varepsilon_n)^\top$. Let $\mathcal{T}:\mathcal{H}\to\mathbb{R}^n$ be the linear operator defined by $(\mathcal{T}f)_i=\langle \mathcal{X}_i,f\rangle_{\mathcal{H}}$ for $f\in\mathcal{H}$. Then \eqref{eq:ssofrm} can be written in matrix form as
\begin{equation}\label{eq:matrix_model}
\mathbf{Y} = \beta_0 \mathbf{1}_n + \vartheta \mathbf{W}\mathbf{Y} + \mathcal{T}\beta + \boldsymbol{\varepsilon}.
\end{equation}
Let $\mathbf{S}(\vartheta):=(\mathbf{I}_n-\vartheta\mathbf{W})^{-1}$ whenever the inverse exists. A sufficient condition for nonsingularity is $|\vartheta|<1/\rho(\mathbf{W})$, where $\rho(\mathbf{W})$ denotes the spectral radius of $\mathbf{W}$. In particular, for many row-normalized weight matrices, one has $\rho(\mathbf{W})=1$, in which case the condition reduces to $|\vartheta|<1$, which we assume throughout. Under this condition,
\begin{equation}\label{eq:reduced_form}
\mathbf{Y} = \mathbf{S}(\vartheta)\bigl(\beta_0 \mathbf{1}_n + \mathcal{T}\beta + \boldsymbol{\varepsilon}\bigr),
\end{equation}
which follows by rearranging \eqref{eq:matrix_model} and left-multiplying by $\mathbf{S}(\vartheta)$.

\subsection{Robust spectral decomposition of the functional predictor}\label{subsec:rfpca}

The infinite-dimensional nature of $\beta\in\mathcal{H}$ requires dimension reduction. Let $\mathcal{X}$ denote a generic random element in $\mathcal{H}$, and suppose $\mathcal{X}_1 ,\ldots, \mathcal{X}_n$ are identically distributed copies of $\mathcal{X}$. Classical FPCA is based on the Karhunen-Lo\`eve expansion
\begin{equation*}
\mathcal{X} = \mu + \sum_{k=1}^{K} \xi_k \phi_k, \qquad \xi_k=\langle \mathcal{X}-\mu,\phi_k\rangle_{\mathcal{H}}, \qquad
\mathrm{Var}(\xi_k)=\lambda_k,
\end{equation*}
where $K$ is the truncation level and $\mu=\mathbb{E}(\mathcal{X})$ and $(\lambda_k,\phi_k)$ are the eigenpairs of the covariance operator $\Gamma=\mathbb{E}\{(\mathcal{X}-\mu)\otimes(\mathcal{X}-\mu)\}$. Because $\Gamma$ is highly sensitive to atypical curves, we adopt the RFPCA based on projection pursuit with an M-scale dispersion functional \citep[see, e.g.,][]{Bali2011}. We emphasize that the loss function used in this step (denoted $\rho_1$ below) is distinct from the loss function $\rho_2$ (and its derivative $\psi_2$) used later for robust spatial M-estimation.

Let $\rho_1:\mathbb{R}\to\mathbb{R}_{+}$ be an even, bounded, continuously differentiable function that is nondecreasing on $\mathbb{R}_+$. For a real random variable $Z$, define its M-scale $\sigma_M(Z)$ as the unique positive solution of
\begin{equation*}
\mathbb{E}\left\lbrace\rho_1 \left(\frac{Z-\ell(Z)}{\sigma_M(Z)}\right)\right\rbrace=\delta,
\end{equation*}
where $\ell(Z)$ is a robust location functional (e.g., the median) and $\delta\in(0,\sup\rho_1)$ is chosen to ensure Fisher consistency under a reference model (typically Gaussian). A common choice is Tukey's bisquare loss
\begin{equation*}
\rho_1(u)=
\begin{cases}
\dfrac{c^2}{6}\left\{1-\left(1-\dfrac{u^2}{c^2}\right)^3\right\}, & |u|\le c,\\
\dfrac{c^2}{6}, & |u|>c,
\end{cases}
\end{equation*}
with tuning constants $(c,\delta)$ selected to attain the desired robustness-efficiency trade-off (e.g., $50\%$ breakdown under Gaussian calibration) \citep{Bali2011}.

For each direction $a\in\mathcal{H}$ with $\|a\|_{\mathcal{H}}=1$, define the projection score $Z_a=\langle \mathcal{X}, a\rangle_{\mathcal{H}}$ and the robust dispersion functional $\mathcal{V}(a):=\sigma_M^2(Z_a)$. The first robust principal direction is defined by
\begin{equation}\label{eq:rpp1}
\phi_1=\arg\max_{\|a\|_{\mathcal{H}}=1}\mathcal{V}(a),
\end{equation}
and for $k\ge 2$,
\begin{equation}\label{eq:rppk}
\phi_k=\arg\max_{\substack{\|a\|_{\mathcal{H}}=1\\ a\perp\{\phi_1,\ldots,\phi_{k-1}\}}}\mathcal{V}(a).
\end{equation}
The associated robust eigenvalues are $\lambda_k=\mathcal{V}(\phi_k)$. Let $\mu_R$ denote the corresponding robust functional location (population version) used for centering; then the robust scores are $\xi_{ik}=\langle \mathcal{X}_i-\mu_R,\phi_k\rangle_{\mathcal{H}}$, for $i \in \{1, \ldots, n\}$.

Given the observed curves $\mathcal{X}_1,\ldots,\mathcal{X}_n$, let $\widehat{\mu}$ be a robust functional location estimator and define the empirical projections $Z_{i,a}=\langle \mathcal{X}_i-\widehat{\mu},a\rangle_{\mathcal{H}}$. The empirical M-scale $\widehat{\sigma}_M(a)$ is defined as the unique positive solution of
\begin{equation}\label{eq:empMscale}
\frac{1}{n}\sum_{i=1}^n \rho_1 \left(\frac{Z_{i,a}-\widehat{\ell}(a)}{\widehat{\sigma}_M(a)}\right)=\delta,
\end{equation}
where $\widehat{\ell}(a)$ is a robust univariate location estimate computed from $Z_{1,a}, \dots, Z_{n, a}$. The estimated robust directions $\widehat{\phi}_k$ are then obtained by replacing $\mathcal{V}(a)$ with $\widehat{\sigma}_M(a)$ in \eqref{eq:rpp1}-\eqref{eq:rppk} and solving the resulting sequential optimization problems.

In practice, we compute directions up to a fixed $K$ and select the truncation level $K$ by a robust explained-dispersion rule: choose the smallest $K$ such that 
\[
\frac{\sum_{k = 1}^K \widehat{\lambda}_k}{\sum_{k\geq1} \widehat{\lambda}_k}\ge 1-\varrho, 
\]
for a small $\varrho>0$ (e.g., $\varrho=0.05$). The functional predictor is then approximated by
\begin{equation}\label{eq:rfpca_approx}
\mathcal{X}_i \approx \widehat{\mu}+\sum_{k=1}^K \widehat{\xi}_{ik}\widehat{\phi}_k,
\qquad
\widehat{\xi}_{ik}=\langle \mathcal{X}_i-\widehat{\mu},\widehat{\phi}_k\rangle_{\mathcal{H}}, \qquad i = 1, \ldots, n, \quad k = 1, \ldots, K.
\end{equation}

\subsection{Finite-dimensional projection and unbiased M-estimation}\label{subsec:Mest}

If we expand the slope function in the robust eigenbasis, $\beta(t)\approx \sum_{k=1}^K b_k \widehat{\phi}_k(t)$, where $\mathbf{b}=(b_1,\ldots,b_K)^\top\in\mathbb{R}^K$. Then, using \eqref{eq:rfpca_approx} and orthonormality of ${\widehat{\phi}_k}$, the functional linear term in~\eqref{eq:ssofrm} reduces to $\langle \mathcal{X}_i,\beta\rangle_{\mathcal{H}} \approx \sum_{k=1}^K \widehat{\xi}_{ik} b_k$. Let $\boldsymbol{\Xi}=[\widehat{\xi}_{ik}]\in\mathbb{R}^{n\times K}$ and define the (estimated) design matrix $\mathbf{Z}=[\mathbf{1}_n,\boldsymbol{\Xi}]\in\mathbb{R}^{n\times (K+1)}$ and $\boldsymbol{\gamma}=(\beta_0,\mathbf{b}^\top)^\top\in\mathbb{R}^{K+1}$. Then, the projected model is a standard SAR regression:
\begin{equation}\label{eq:proj_SAR}
\mathbf{Y}=\vartheta \mathbf{W}\mathbf{Y}+\mathbf{Z}\boldsymbol{\gamma}+\boldsymbol{\varepsilon}, \qquad \text{equivalently} \qquad (\mathbf{I}_n-\vartheta\mathbf{W})\mathbf{Y} = \mathbf{Z}\boldsymbol{\gamma}+\boldsymbol{\varepsilon}.
\end{equation}

Define the residuals
\begin{equation*}
e_i(\boldsymbol{\gamma},\vartheta) = \{(\mathbf{I}_n-\vartheta\mathbf{W})\mathbf{Y} - \mathbf{Z}\boldsymbol{\gamma}\}_i,\qquad i \in \{1,\ldots,n\},
\end{equation*}
and let $\mathbf{e}(\boldsymbol{\gamma},\vartheta) = (e_1,\ldots,e_n)^\top$. Let $\rho_2:\mathbb{R}\to\mathbb{R}+$ be an even, continuously differentiable loss function with derivative $\psi_2 = \rho_2'$. We consider Danish (soft‑redescending) and Andrews (hard‑redescending) losses as follows:
\begin{equation*}
\begin{aligned}
\text{Danish}:\quad
\rho_2^{(\text{D})}(u)&=
\begin{cases}
\frac{1}{2}u^{2}, & |u|\le c,\\[2mm]
v-\frac{1}{2}c\exp \bigl(\frac{-u^{2}}{c}\bigr), & |u|>c,
\end{cases}
&\quad
\psi_2^{(\text{D})}(u) = \rho_2^{(\text{D})\prime}(u)&=
\begin{cases}
u, & |u|\le c,\\[2mm]
u\exp \bigl(\frac{-u^{2}}{c}\bigr), & |u|>c;
\end{cases}
\\[4mm]
\text{Andrews}:\quad
\rho_2^{(\text{A})}(u)&=
\begin{cases}
c^{2}\bigl[1-\cos \bigl(\frac{u}{c}\bigr)\bigr], & |u|\le \pi c,\\[2mm]
2c^{2}, & |u|>\pi c,
\end{cases}
&\quad
\psi_2^{(\text{A})}(u) = \rho_2^{(\text{A})\prime}(u)&=
\begin{cases}
c\sin \bigl(\frac{u}{c}\bigr), & |u|\le \pi c,\\[2mm]
0, & |u|>\pi c.
\end{cases}
\end{aligned}
\end{equation*}
with tuning constants chosen to achieve a balance between robustness and efficiency (e.g., $c=1.339$ for Andrews, $c=2.767$ and $v=3.915$ for Danish). Let $\sigma>0$ denote a scale parameter, and set $u_i = e_i/\sigma$.

For the SAR model, the naive M-estimating equation for the spatial parameter is generally biased. Indeed, at the true parameter value $\boldsymbol{\theta}_0$, one has $e_i(\boldsymbol{\theta}_0)=\varepsilon_i$, but because the spatial lag $(\mathbf{W}\mathbf{Y})_i$ is endogenous, the corresponding score is not centered: $\mathbb{E}\{\psi_2(\varepsilon_i/\sigma_0)(\mathbf{W}\mathbf{Y})_i\} \neq 0$. Therefore, a bias correction is required in the estimating equation for $\vartheta$ to obtain Fisher consistency. Following \citet{Zhan2025}, we define the bias term for observation $i$ as
\begin{equation}\label{eq:bias_term}
\mathrm{bias}_i(\boldsymbol{\theta}) = \bigl\{\mathbf{W} (\mathbf{I}_n - \vartheta \mathbf{W})^{-1}\bigr\}_{ii} \sigma \kappa,
\end{equation}
where $\boldsymbol{\theta}=(\boldsymbol{\gamma}^\top,\sigma,\vartheta)^\top$ and $\kappa = \mathbb{E}\{\psi_2(\varepsilon_i/\sigma_0)(\varepsilon_i/\sigma_0)\}$ is the consistency constant associated with the scale equation. In practical terms, $\kappa$ is a calibration constant that makes the scale equation Fisher-consistent under the reference error model. Writing $U=\varepsilon_i/\sigma_0$, one has $\kappa=\mathbb{E}\{\psi_2(U)U\}$ where the expectation is taken under the standardized reference distribution for $U$. Thus, $\kappa$ depends only on the chosen score function $\psi_2$ and not on the unknown regression or spatial parameters. For a given loss, it can therefore be computed once in advance and then treated as fixed throughout estimation. In our implementation, no iterative numerical approximation of $\kappa$ is required inside the algorithm: we use precomputed values corresponding to the selected redescending loss, namely $\kappa=0.829$ for Andrews' sine loss and $\kappa=0.872$ for Danish loss. The unbiased estimating equations are then expressed as
\begin{align}
\sum_{i=1}^n \psi_2 \left(\frac{e_i}{\sigma}\right)\mathbf{z}_i &= \mathbf{0}_{K+1}, \label{eq:unbiased_gamma} \\
\sum_{i=1}^n \left\lbrace \psi_2 \left(\frac{e_i}{\sigma}\right)\frac{e_i}{\sigma} - \kappa \right\rbrace &= 0, \label{eq:unbiased_sigma} \\
\sum_{i=1}^n \left\lbrace \psi_2 \left(\frac{e_i}{\sigma}\right)(\mathbf{W}\mathbf{Y})_i - \mathrm{bias}_i(\boldsymbol{\theta}) \right\rbrace &= 0, \label{eq:unbiased_theta}
\end{align}
where $\mathbf{z}_i^\top$ is the $i$\textsuperscript{th} row of $\mathbf{Z}$. At the true parameter $\boldsymbol{\theta}_0$, each of these sums has expectation zero, ensuring Fisher consistency of the resulting M‑estimator $\widehat{\boldsymbol{\theta}}$.

\subsection{Hybrid IRLS–Newton algorithm}\label{subsec:algorithm}

Solving the unbiased system \eqref{eq:unbiased_gamma}–\eqref{eq:unbiased_theta} requires an iterative scheme that handles the nonlinear dependence on $\vartheta$ through the bias term. We propose a hybrid algorithm that combines IRLS for the regression coefficients $\boldsymbol{\gamma}$ with a Newton–Raphson update for the spatial parameter $\vartheta$, while updating the scale $\sigma$ using a Fisher‑consistent fixed‑point equation.

To start the iterations, we need a $\sqrt{n}$‑consistent initial estimator. We use the bias‑corrected naive least squares (NLS) estimator proposed by \citet{Zhan2025}. First, form the augmented matrix $\mathbf{A}=[\mathbf{W}\mathbf{Y},\mathbf{Z}]$ and compute the raw NLS estimates $(\widehat{\vartheta}_{\mathrm{NLS}},\widehat{\boldsymbol{\gamma}}_{\mathrm{NLS}})$ by ordinary least squares solving $\mathbf{A}^\top\mathbf{A} \boldsymbol{\theta}_{\text{reg}} = \mathbf{A}^\top\mathbf{Y}$. Then, using the plug‑in formulas derived in \citet{Zhan2025} (see also Appendix), compute the estimated bias $\widehat{\operatorname{bias}}(\widehat{\vartheta}_{\mathrm{NLS}})$ and set $\vartheta^{(0)} = \widehat{\vartheta}_{\mathrm{NLS}} - \widehat{\operatorname{bias}}(\widehat{\vartheta}_{\mathrm{NLS}})$, where $\boldsymbol{\gamma}^{(0)} = \widehat{\boldsymbol{\gamma}}_{\mathrm{NLS}}$. The initial residuals are $\mathbf{e}^{(0)} = \mathbf{Y} - \vartheta^{(0)}\mathbf{W}\mathbf{Y} - \mathbf{Z}\boldsymbol{\gamma}^{(0)}$, and the scale is initialized as the scaled median absolute deviation, $\sigma^{(0)} = \operatorname{median}(|\mathbf{e}^{(0)}|)/ 0.6745$.

Given current iterates $(\boldsymbol{\gamma}^{(h)},\vartheta^{(h)},\sigma^{(h)})$, compute the residuals $e_i^{(h)} = \{(\mathbf{I}_n-\vartheta^{(h)}\mathbf{W})\mathbf{Y} - \mathbf{Z}\boldsymbol{\gamma}^{(h)}\}_i$, standardized residuals $u_i^{(h)} = e_i^{(h)}/\sigma^{(h)}$, and weights $\varphi_i^{(h)} = p(u_i^{(h)})$ where $p(u)=\psi_2(u)/u$ (with $p(0)=\psi_2'(0)$). Let $\boldsymbol{\Pi}_{\text{diag}}^{(h)} = \operatorname{diag}(\varphi_i^{(h)})$.

For fixed $\vartheta^{(h)}$ and $\sigma^{(h)}$, \eqref{eq:unbiased_gamma} is solved by weighted least squares. Specifically, with $u_i^{(h)}=e_i^{(h)}/\sigma^{(h)}$ and $\varphi_i^{(h)}=p(u_i^{(h)})$, define $\boldsymbol{\Pi}^{(h)}_{\mathrm{diag}}=\operatorname{diag}(\varphi_i^{(h)})$. We then compute
\begin{equation}\label{eq:gamma_update}
\boldsymbol{\gamma}^{(h+1)} = \left(\mathbf{Z}^{\top}\boldsymbol{\Pi}^{(h)}_{\mathrm{diag}}\mathbf{Z}\right)^{-1}
\mathbf{Z}^{\top}\boldsymbol{\Pi}^{(h)}_{\mathrm{diag}}
\bigl(\mathbf{Y}-\vartheta^{(h)}\mathbf{W}\mathbf{Y}\bigr),
\end{equation}
whenever the weighted normal matrix is numerically invertible. In implementation, if $\mathbf{Z}^{\top}\boldsymbol{\Pi}^{(h)}_{\mathrm{diag}}\mathbf{Z}$ is singular or the linear solve fails numerically, we set $\boldsymbol{\gamma}^{(h+1)}=\boldsymbol{\gamma}^{(h)}$ for that outer iteration. This gives a fully specified fallback rule and avoids introducing an additional regularization parameter into the main algorithm.

Equation~\eqref{eq:unbiased_theta} is nonlinear in $\vartheta$. Using the newly computed $\boldsymbol{\gamma}^{(h+1)}$ and the current scale $\sigma^{(h)}$, define
\begin{equation*}
S_\vartheta(\vartheta) = \sum_{i=1}^{n} \psi_2 \left(\frac{e_i}{\sigma^{(h)}}\right)(\mathbf{W}\mathbf{Y})_i - \sum_{i=1}^{n}\mathrm{bias}_i\!\left(\boldsymbol{\gamma}^{(h+1)},\vartheta,\sigma^{(h)}\right),
\end{equation*}
where $e_i=\{(\mathbf{I}_n-\vartheta\mathbf{W})\mathbf{Y}-\mathbf{Z}\boldsymbol{\gamma}^{(h+1)}\}_i$. Its derivative is
\begin{equation*}
\frac{\partial S_\vartheta}{\partial \vartheta} = \sum_{i=1}^{n}
\psi_2' \left(\frac{e_i}{\sigma^{(h)}}\right) \frac{-(\mathbf{W}\mathbf{Y})_i}{\sigma^{(h)}}(\mathbf{W}\mathbf{Y})_i
- \sum_{i=1}^{n} \frac{\partial}{\partial\vartheta} \mathrm{bias}_i \left(\boldsymbol{\gamma}^{(h+1)},\vartheta,\sigma^{(h)}\right).
\end{equation*}
We update $\vartheta$ by a damped Newton step inside a fixed compact admissible interval $\Theta_\vartheta=[-0.99, ~0.99]$ which is compatible with the standing assumption $|\vartheta|<1$ for the row-standardized spatial weight matrices used in this paper. Let $J_{\max}$ denote the maximum number of inner Newton iterations; in implementation, we use $J_{\max}=10$. Starting from $\vartheta_{\mathrm{old}}=\vartheta^{(h)}$, define $\vartheta_{\mathrm{cand}} = \vartheta_{\mathrm{old}} - \tau \left(
\frac{\partial S_\vartheta}{\partial\vartheta} \right)^{-1} S_\vartheta(\vartheta_{\mathrm{old}})$ and then project the candidate back into the admissible region by $\vartheta_{\mathrm{new}}=\Pi_{\Theta_\vartheta}(\vartheta_{\mathrm{cand}}) = \min\{0.99,\max\{-0.99,\vartheta_{\mathrm{cand}}\}\}$. Thus, every Newton iterate remains in the region where $(\mathbf{I}_n-\vartheta\mathbf{W})$ is invertible. We terminate the inner loop when $|\vartheta_{\mathrm{new}}-\vartheta_{\mathrm{old}}|<10^{-8}$ or when $J_{\max}$ iterations have been reached.

With $\boldsymbol{\gamma}^{(h+1)}$ and $\vartheta^{(h+1)}$ available, compute the updated residuals $\mathbf{e}^{(h+1)} = \mathbf{Y} - \vartheta^{(h+1)} \mathbf{W} \mathbf{Y} - \mathbf{Z}\boldsymbol{\gamma}^{(h+1)}$. For reproducibility, we use recomputed weights in the scale step rather than leaving this optional. That is, we set $u_i^{(h+1)}= e_i^{(h+1)} / \sigma^{(h)}$ and $\varphi_i^{(h+1)}=p \left(u_i^{(h+1)}\right)$, and update the scale by
\begin{equation}\label{eq:sigma_update}
\sigma^{(h+1)} = \left\{ \frac{1}{n\kappa} \sum_{i=1}^{n} \varphi_i^{(h+1)} \bigl(e_i^{(h+1)}\bigr)^2 \right\}^{1/2}.
\end{equation}
This ensures that the scale estimator remains aligned with the chosen loss function. Here, the constant $\kappa$ is determined entirely by the chosen redescending score function and the reference standardized error law. Accordingly, for each loss function, we evaluate $\kappa$ once before optimization and keep it fixed during the iterations. This means that the scale update does not require any additional numerical integration inside the IRLS-Newton loop.

\begin{algorithm}[!htbp]
\caption{\small FCSAR: Fisher-consistent robust estimation for SSoFRM} \label{alg:FCSAR}
\begin{algorithmic}[1]
\Require $\mathbf{Y}$, $\{\mathcal{X}_i\}$, $\mathbf{W}$, loss type (Andrews/Danish), RFPCA tuning $(\rho_1,\delta)$, truncation level $K$, outer tolerance $\texttt{tol}$, maximum outer iterations $H$, damping factor $\tau$, maximum inner Newton iterations $J_{\max}$.
\Ensure $\widehat{\boldsymbol{\gamma}}$, $\widehat{\sigma}$, $\widehat{\vartheta}$, $\widehat{\beta}(\cdot)$.

\Statex \textbf{Step A: RFPCA.}
\State Compute robust functional location $\widehat{\mu}$ and robust directions $\widehat{\phi}_1,\ldots,\widehat{\phi}_K$ by projection-pursuit maximization \eqref{eq:rpp1}--\eqref{eq:rppk} using empirical M-scales \eqref{eq:empMscale}.
\State Compute scores $\widehat{\xi}_{ik}$ and form $\mathbf{Z}=[\mathbf{1}_n,\boldsymbol{\Xi}]$.

\Statex \textbf{Step B: Bias-corrected NLS initialization.}
\State Set $\mathbf{A}=[\mathbf{W}\mathbf{Y},\mathbf{Z}]$.
\State Compute raw NLS $(\widehat{\vartheta}_{\mathrm{NLS}},\widehat{\boldsymbol{\gamma}}_{\mathrm{NLS}})$ by solving $\mathbf{A}^{\top}\mathbf{A}\,\boldsymbol{\theta}_{\mathrm{reg}}=\mathbf{A}^{\top}\mathbf{Y}$.
\State Compute $\widehat{\operatorname{bias}}(\widehat{\vartheta}_{\mathrm{NLS}})$ using the formula in Appendix~\ref{sec:appendix_bias}.
\State Set $\vartheta^{(0)}=\Pi_{[-0.99, ~0.99]} \Bigl( \widehat{\vartheta}_{\mathrm{NLS}}-\widehat{\operatorname{bias}}(\widehat{\vartheta}_{\mathrm{NLS}}) \Bigr)$ and $\boldsymbol{\gamma}^{(0)}=\widehat{\boldsymbol{\gamma}}_{\mathrm{NLS}}$.
\State Compute $\mathbf{e}^{(0)}=\mathbf{Y}-\vartheta^{(0)}\mathbf{W}\mathbf{Y}-\mathbf{Z}\boldsymbol{\gamma}^{(0)}$ and set $\sigma^{(0)}=\operatorname{median}(|\mathbf{e}^{(0)}|)/0.6745$.

\Statex \textbf{Step C: Hybrid IRLS-Newton iterations.}
\For{$h=0,1,\ldots,H-1$}
    \State Compute $u_i^{(h)}=e_i^{(h)}/\sigma^{(h)}$, $\varphi_i^{(h)}=p(u_i^{(h)})$, and $\boldsymbol{\Pi}^{(h)}_{\mathrm{diag}}=\operatorname{diag}(\varphi_i^{(h)})$.    
    \State Form $\mathbf{M}^{(h)}=\mathbf{Z}^{\top}\boldsymbol{\Pi}^{(h)}_{\mathrm{diag}}\mathbf{Z}$ and $\mathbf{r}^{(h)}=\mathbf{Z}^{\top}\boldsymbol{\Pi}^{(h)}_{\mathrm{diag}}(\mathbf{Y}-\vartheta^{(h)}\mathbf{W}\mathbf{Y})$.
    \If{$\mathbf{M}^{(h)}$ is numerically invertible}
        \State Set $\boldsymbol{\gamma}^{(h+1)}=(\mathbf{M}^{(h)})^{-1}\mathbf{r}^{(h)}$.
    \Else
        \State Set $\boldsymbol{\gamma}^{(h+1)}=\boldsymbol{\gamma}^{(h)}$.
    \EndIf
    \Statex \textbf{Inner Newton loop for $\vartheta$.}
    \State Set $\vartheta_{\mathrm{old}}=\vartheta^{(h)}$.
    \For{$j=1,\ldots,J_{\max}$}
        \State Compute $e_i=\{(\mathbf{I}_n-\vartheta_{\mathrm{old}}\mathbf{W})\mathbf{Y}-\mathbf{Z}\boldsymbol{\gamma}^{(h+1)}\}_i$ and $u_i=e_i/\sigma^{(h)}$.
        \State Compute $S_\vartheta(\vartheta_{\mathrm{old}})$ and $\partial S_\vartheta/\partial\vartheta$.
        \State Set $\vartheta_{\mathrm{cand}} = \vartheta_{\mathrm{old}} - \tau \left( \frac{\partial S_\vartheta}{\partial\vartheta}
        \right)^{-1} S_\vartheta(\vartheta_{\mathrm{old}})$.
        \State Project onto the admissible region $\vartheta_{\mathrm{new}}=\Pi_{[-0.99, ~0.99]}(\vartheta_{\mathrm{cand}})$.
        \If{$|\vartheta_{\mathrm{new}}-\vartheta_{\mathrm{old}}|<10^{-8}$}
            \State \textbf{break}
        \EndIf
        \State Set $\vartheta_{\mathrm{old}}=\vartheta_{\mathrm{new}}$.
    \EndFor
    \State Set $\vartheta^{(h+1)}=\vartheta_{\mathrm{new}}$.
    \State Compute updated residuals $\mathbf{e}^{(h+1)}=\mathbf{Y} -\vartheta^{(h+1)}\mathbf{W}\mathbf{Y} -\mathbf{Z}\boldsymbol{\gamma}^{(h+1)}$.
    \State Recompute weights using $u_i^{(h+1)}=e_i^{(h+1)}/\sigma^{(h)}$ and $\varphi_i^{(h+1)}=p(u_i^{(h+1)})$.
    \State Update $\sigma^{(h+1)} = \left\{ \frac{1}{n\kappa} \sum_{i=1}^{n} \varphi_i^{(h+1)}(e_i^{(h+1)})^2 \right\}^{1/2}$.
    \If{$\max \bigl( \|\boldsymbol{\gamma}^{(h+1)} - \boldsymbol{\gamma}^{(h)}\|_{\infty}, |\vartheta^{(h+1)}-\vartheta^{(h)}|, |\sigma^{(h+1)}-\sigma^{(h)}| \bigr)<\texttt{tol}$}
        \State \textbf{break}
    \EndIf
\EndFor
\Statex \textbf{Step D: Recover the slope function.}
\State Set $\widehat{\boldsymbol{\gamma}}=\boldsymbol{\gamma}^{(h+1)}$, $\widehat{\sigma}=\sigma^{(h+1)}$, $\widehat{\vartheta}=\vartheta^{(h+1)}$, and $\widehat{\beta}(t)=\sum_{k=1}^{K}\widehat{\gamma}_{k+1}\widehat{\phi}_k(t)$.
\State \Return $\widehat{\boldsymbol{\gamma}}$, $\widehat{\sigma}$, $\widehat{\vartheta}$, $\widehat{\beta}(\cdot)$.
\end{algorithmic}
\end{algorithm}

In all numerical experiments reported in Sections~\ref{sec:simulation} and \ref{sec:real_data}, we use $J_{\max}=10$, $\tau=0.5$, and the interval $[-0.99,0.99]$ for $\vartheta$. We iterate the three updates until the maximum relative change in $(\boldsymbol{\gamma},\vartheta,\sigma)$ falls below a tolerance $\texttt{tol}$ (e.g., $10^{-6}$). The complete procedure is summarized in Algorithm~\ref{alg:FCSAR}. Upon convergence, the slope function is recovered as $\widehat{\beta}(t) = \sum_{k=1}^K \widehat{\gamma}_{k+1}\widehat{\phi}_k(t)$, where $\widehat{\gamma}_{k}$ are the elements of $\widehat{\boldsymbol{\gamma}}$ corresponding to the RFPCA scores. The resulting estimator is Fisher-consistent by construction, since it solves the bias-corrected estimating system. Moreover, the use of redescending score functions substantially reduces the impact of extreme residual outliers, while the hybrid IRLS-Newton scheme provides a fully specified and reproducible update for the spatial parameter under strong spatial dependence and contamination. Its computational cost and convergence behavior are examined empirically in Section~\ref{sec:simulation}. In all numerical experiments, the redescending stage uses the standard tuning constants described in Section~\ref{subsec:Mest}: for Andrews' sine loss we take $c=1.339$, and for Danish loss we take $c=2.767$ together with $v=3.915$. These are conventional choices that balance robustness and efficiency and are consistent with the calibration constants used in the scale equation. For the RFPCA stage, the retained dimension is selected by the $95\%$ explained-variation rule described in Section~\ref{subsec:rfpca}. In the empirical implementation, this choice is intended as a practical and stable dimension-selection device rather than as a tuning parameter optimized separately for each method.

For ease of presentation, Figure~\ref{fig:fcsar_flowchart} summarizes the main computational steps of Algorithm~\ref{alg:FCSAR}.

\begin{figure}[!htb]
\centering
\begin{tikzpicture}[scale=0.76, transform shape, node distance=3mm and 8mm, >=Latex, every node/.style={font=\normalfont}, block/.style={rectangle, rounded corners, draw, align=center, minimum width=7.0cm, minimum height=0.85cm, text width=7.0cm, inner sep=3pt},
decision/.style={diamond, draw, align=center, aspect=2.8, inner sep=1pt, text width=5.4cm}, line/.style={->, thick}]

\node[block] (input) {Input $\mathbf{Y}, \{\mathcal{X}_i\}, \mathbf{W}$,\\ loss type, tuning constants};
\node[block, below=of input] (rfpca) {Robust FPCA:\\ compute $\widehat{\mu}$, $\widehat{\phi}_1,\ldots,\widehat{\phi}_K$,\\ scores $\widehat{\xi}_{ik}$, and design matrix $\mathbf{Z}$};
\node[block, below=of rfpca] (init) {Bias-corrected NLS initialization:\\ obtain $\vartheta^{(0)}, \boldsymbol{\gamma}^{(0)}, \sigma^{(0)}$};
\node[block, below=of init] (weights) {Outer iteration $h$: compute residuals,\\ standardized residuals, and weights};
\node[block, below=of weights] (gamma) {Update $\boldsymbol{\gamma}^{(h+1)}$ by weighted least squares\\ (or keep previous iterate if solve fails)};
\node[block, below=of gamma] (newtoninit) {Set $\vartheta_{\text{old}}=\vartheta^{(h)}$};
\node[block, below=of newtoninit] (newton) {Inner Newton loop:\\ compute $S_\vartheta$ and $\partial S_\vartheta/\partial\vartheta$,\\ update $\vartheta_{\text{new}}$, project to $[-0.99,0.99]$};
\node[decision, below=of newton] (innerstop) {$|\vartheta_{\text{new}}-\vartheta_{\text{old}}|<10^{-8}$\\ or $j=J_{\max}$?};
\node[block, below=of innerstop] (sigma) {Set $\vartheta^{(h+1)}=\vartheta_{\text{new}}$,\\ recompute residuals and weights,\\ update $\sigma^{(h+1)}$};
\node[decision, below=of sigma] (outerstop) {Converged?\\ $\max(\|\Delta\boldsymbol{\gamma}\|_\infty,|\Delta\vartheta|,|\Delta\sigma|)<\texttt{tol}$};
\node[block, below=of outerstop] (output) {Recover $\widehat{\beta}(t)=\sum_{k=1}^K \widehat{\gamma}_{k+1}\widehat{\phi}_k(t)$\\ and return $(\widehat{\boldsymbol{\gamma}},\widehat{\sigma},\widehat{\vartheta},\widehat{\beta})$};

\draw[line] (input) -- (rfpca);
\draw[line] (rfpca) -- (init);
\draw[line] (init) -- (weights);
\draw[line] (weights) -- (gamma);
\draw[line] (gamma) -- (newtoninit);
\draw[line] (newtoninit) -- (newton);
\draw[line] (newton) -- (innerstop);
\draw[line] (innerstop) -- node[right] {yes} (sigma);
\draw[line] (sigma) -- (outerstop);
\draw[line] (outerstop) -- node[right] {yes} (output);

\draw[line] (innerstop.east) -- ++(1.8,0) node[midway,right] {no} |- (newton.east);
\draw[line] (outerstop.east) -- ++(2.0,0) node[midway,right] {no} |- (weights.east);

\end{tikzpicture}
\caption{\small Flowchart of the proposed FCSAR estimation algorithm.}
\label{fig:fcsar_flowchart}
\end{figure}

\section{Asymptotic properties}\label{sec:asymptotic}

This section develops a large-sample theory for the proposed FCSAR estimator under spatial dependence. For readability, all technical regularity conditions are collected in Appendix~\ref{sec:appendix_conditions}, and all proofs are deferred to Appendix~\ref{sec:proofs}. The scope of the theory is intentionally limited to a finite-dimensional projected setting. Throughout, we treat the truncation level $K$ as fixed, so that after RFPCA the problem reduces to a finite-dimensional SAR model with generated regressors. In addition, Assumption~\ref{as:4} rules out truncation bias by requiring that the slope function be represented exactly in the span used for projection. Consequently, the present theory does not cover growing-$K$ asymptotics, data-driven truncation selection, or approximation error from an infinite-dimensional slope lying outside the retained basis. The asymptotic theory is not intended to justify the data-driven component-selection rule used later in the numerical studies, where the retained dimension is chosen by a $95\%$ explained-variation criterion. That rule is adopted there as a practical finite-sample implementation device, whereas the theory here isolates the behavior of the estimator under a fixed-$K$, exact-projection regime.

A further limitation concerns the dependence structure of the functional predictors. Although the model is formulated for spatially indexed functional covariates, the asymptotic theory does not address a general spatially dependent functional predictor field. Instead, for theoretical tractability and to rely on available consistency results for the robust projection-pursuit RFPCA step, Assumption~\ref{as:2} treats $\mathcal{X}_1,\ldots,\mathcal{X}_n$ as i.i.d. copies of a square-integrable process. Thus, in the present theory, spatial dependence is introduced through the SAR response mechanism and the sequence of spatial weight matrices, rather than through a spatial dependence model for the predictor curves themselves.

The asymptotic development is also derived under i.i.d. functional predictors and i.i.d. symmetric errors that are independent of the regressors, together with a triangular-array framework for the spatial weight matrices. Moreover, the consistency and asymptotic normality results rely on high-level conditions ensuring identification of the population estimating map, uniform convergence of the sample estimating equations, and a suitable limiting empirical-process behavior. These assumptions are restrictive, but they provide a tractable first framework for isolating the large-sample behavior of the proposed bias-corrected redescending estimator in the projected SSoFRM.

\subsection{Fisher consistency of the proposed estimator}\label{subsec:fisher}

Because the spatial lag $(\mathbf{W}\mathbf{Y})_i$ is endogenous in SAR models, the naive M-score equation 
\[
\sum_{i=1}^n \psi_2(e_i/\sigma)(\mathbf{W}\mathbf{Y})_i=0,
\]
is generally biased and does not vanish at the true parameter. We therefore incorporate the analytical bias correction term $\mathrm{bias}_i(\boldsymbol{\theta})$ directly into the estimating equation for $\vartheta$. The next result formalizes that the resulting procedure is Fisher consistent.

\begin{theorem}\label{thm:FCSAR_fisher}
Suppose that: 
\begin{inparaenum}
\item[(i)] the functional predictor $\mathcal{X}$ satisfies the assumptions of Lemma~\ref{lem:rfpca_fisher}; 
\item[(ii)] assumptions~\ref{as:4},~\ref{as:5}, and~\ref{as:6} hold; and \item[(iii)] the population bias-corrected estimating equation has a unique root at $\boldsymbol{\theta}_0$. 
\end{inparaenum}
Then the M-functional defined by the bias-corrected estimating system is Fisher consistent:
\begin{equation*}
\widehat{\beta}(\mathbb{P}_{\mathcal{X} Y})(t) = \beta(t)\ \ (\forall t), \qquad \widehat{\beta}_0(\mathbb{P}_{\mathcal{X} Y}) = \beta_0, \qquad \widehat{\vartheta}(\mathbb{P}_{\mathcal{X} Y}) = \vartheta_0, \qquad \widehat{\sigma}(\mathbb{P}_{\mathcal{X} Y}) = \sigma_0,
\end{equation*}
where $\mathbb{P}_{\mathcal{X} Y}$ denotes the joint distribution of $(Y,\mathcal{X})$ under the true parameter $\boldsymbol{\theta}_0 = (\beta_0,\mathbf{b}^\top,\sigma_0,\vartheta_0)^\top$.
\end{theorem}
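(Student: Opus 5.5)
The plan has two layers. First I show that the population versions of the RFPCA quantities reproduce the true finite-dimensional representation. Then I verify that the population bias-corrected estimating map vanishes at $\boldsymbol{\theta}_0$, and I conclude with the uniqueness hypothesis (iii).

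\textbf{Step 1: the first stage.} By (i) and Lemma~\ref{lem:rfpca_fisher}, the population projection-pursuit functionals return $\mu_R$ and $\phi_1,\ldots,\phi_K$. Assumption~\ref{as:4} places $\beta$ exactly in $\mathrm{span}\{\phi_1,\ldots,\phi_K\}$, say $\beta=\sum_{k} b_k\phi_k$. Hence
\[
\langle \mathcal{X}_i,\beta\rangle_{\mathcal{H}}=\langle\mu_R,\beta\rangle_{\mathcal{H}}+\sum_k \xi_{ik}b_k .
\]
The constant $\langle\mu_R,\beta\rangle_{\mathcal{H}}$ is absorbed into the intercept, and I will assume the parametrization of Assumption~\ref{as:4} already does this. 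The population model is then exactly the SAR model \eqref{eq:proj_SAR} with regressors $\mathbf{z}_i=(1,\xi_{i1},\ldots,\xi_{iK})^\top$ and parameter $\boldsymbol{\gamma}_0$. In particular $e_i(\boldsymbol{\theta}_0)=\varepsilon_i=\sigma_0u_i$, with no truncation residual.

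\textbf{Step 2: each population equation vanishes at $\boldsymbol{\theta}_0$.}
\begin{enumerate}
\item[(a)] For \eqref{eq:unbiased_gamma}, independence of $u_i$ and the regressors (Assumption~\ref{as:5}/\ref{as:6}) gives $\mathbb{E}\{\psi_2(u_i)\mathbf{z}_i\}=\mathbb{E}\psi_2(u_i)\,\mathbb{E}\mathbf{z}_i=0$. This holds because $\psi_2$ is odd and $F$ is symmetric.
\item[(b)] For \eqref{eq:unbiased_sigma}, $\mathbb{E}\{\psi_2(u_i)u_i\}=\kappa$ by the definition of $\kappa$ under the reference law.
\item[(c)] For \eqref{eq:unbiased_theta}, use the reduced form \eqref{eq:reduced_form}:
\[
(\mathbf{W}\mathbf{Y})_i=\{\mathbf{G}(\mathbf{Z}\boldsymbol{\gamma}_0)\}_i+\sigma_0\sum_j G_{ij}u_j,\qquad \mathbf{G}=\mathbf{W}\mathbf{S}(\vartheta_0).
\]
The first term is a function of the regressors only, so its product with $\psi_2(u_i)$ has mean zero as in (a). In the second term, the cross terms $j\neq i$ satisfy $\mathbb{E}\{\psi_2(u_i)u_j\}=\mathbb{E}\psi_2(u_i)\,\mathbb{E}u_j=0$ by independence of the errors. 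The diagonal term contributes $\sigma_0G_{ii}\mathbb{E}\{\psi_2(u_i)u_i\}=G_{ii}\sigma_0\kappa=\mathrm{bias}_i(\boldsymbol{\theta}_0)$, which is exactly \eqref{eq:bias_term}.
\end{enumerate}
Summing over $i$, the $\vartheta$-equation has expectation zero. In all three parts, Assumption~\ref{as:6} is used to guarantee the integrability of $\psi_2(u_i)(\mathbf{W}\mathbf{Y})_i$: $\psi_2$ is bounded and $\mathbf{S}(\vartheta_0)$ has uniformly bounded row and column sums.

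\textbf{Step 3: conclusion.} Step 2 shows that $\boldsymbol{\theta}_0$ is a root of the population estimating system. By hypothesis (iii) it is the unique root, so the M-functional equals $(\beta_0,\mathbf{b},\sigma_0,\vartheta_0)$. Reconstructing gives $\widehat{\beta}(\mathbb{P}_{\mathcal{X}Y})=\sum_k b_k\phi_k=\beta$ pointwise, using the exact span again.

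\textbf{Main obstacle.} The delicate point is the scale of $\kappa$. The statement defines $\kappa$ under the reference distribution, whereas Step 2(b) needs $\mathbb{E}\{\psi_2(u)u\}=\kappa$ under the true $F$. I would resolve this by reading Assumption~\ref{as:5} as stating that $F$ is the reference law, or alternatively by defining $\sigma_0$ as the scale that solves this identity under $F$. Step 2(c) then inherits the same constant automatically. A second, smaller point is that the RFPCA centering $\mu_R$ need not equal $\mathbb{E}\mathcal{X}$, so $\mathbb{E}\mathbf{z}_i$ may be nonzero. This is harmless because (a) uses only $\mathbb{E}\psi_2(u_i)=0$, but it requires the intercept convention noted in Step 1.
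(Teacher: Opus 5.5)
Your proposal is correct and follows the paper's proof essentially step for step. You use Lemma~\ref{lem:rfpca_fisher} with Assumption~\ref{as:4} to identify the population design with the true scores, reprove inline the componentwise unbiasedness computation that the paper packages as Lemma~\ref{lem:fisher_consistency}, and finish with the uniqueness hypothesis (iii) and the reconstruction $\sum_k b_k\phi_k=\beta$. Two of your side remarks deserve comment: the $\kappa$ ``obstacle'' is already settled by Assumption~\ref{as:6}(4), which defines $\kappa$ under the true error law; your intercept remark, that $\langle\mu_R,\beta\rangle$ must be absorbed into $\beta_0$, is a genuine point the paper's proof passes over silently.
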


Theorem~\ref{thm:FCSAR_fisher} shows that the proposed estimating system is correctly centered at the model, in the sense that its population counterpart vanishes at the true parameter. The proof is given in Appendix~\ref{sec:proofs}, building on Lemmas~\ref{lemma:bias_term}, \ref{lem:rfpca_fisher}, and \ref{lem:fisher_consistency}.

\subsection{Consistency}\label{subsec:consistency}

Let $\boldsymbol{\theta}=(\boldsymbol{\gamma}^\top,\sigma,\vartheta)^\top \in \Theta \subset \mathbb{R}^{K+3}$. Define the sample estimating map
\begin{equation}\label{eq:Psi_n_def}
\boldsymbol{\Psi}_n(\boldsymbol{\theta}) := \frac{1}{n}\sum_{i=1}^n \boldsymbol{\psi}_{2,n,i}(\boldsymbol{\theta}),
\end{equation}
where $\boldsymbol{\psi}_{2,n,i}(\boldsymbol{\theta})$ is the $(K+3)$-vector stacking the three unbiased estimating contributions in \eqref{eq:unbiased_gamma}-\eqref{eq:unbiased_theta}. Let the population map be $\boldsymbol{\Psi}(\boldsymbol{\theta}) := \lim_{n\to\infty}\mathbb{E}\{\boldsymbol{\Psi}_n(\boldsymbol{\theta})\}$. By Lemma~\ref{lem:fisher_consistency}, $\boldsymbol{\Psi}(\boldsymbol{\theta}_0)=\mathbf{0}$.
We view $\widehat{\boldsymbol{\theta}}_n$ as a Z-estimator defined by solving $\boldsymbol{\Psi}_n(\widehat{\boldsymbol{\theta}}_n)=\mathbf{0}$ (or approximately solving it in the sense that
$\|\boldsymbol{\Psi}_n(\widehat{\boldsymbol{\theta}}_n)\|=o_p(1)$).

\begin{theorem}\label{thm:consistency_main}
Let Assumptions~\ref{as:1}-\ref{as:13} hold. Define $\boldsymbol{\Psi}_n(\boldsymbol{\theta})$ as in \eqref{eq:Psi_n_def}. Then, any sequence of estimators $\{\widehat{\boldsymbol{\theta}}_n\}$ satisfying $\boldsymbol{\Psi}_n(\widehat{\boldsymbol{\theta}}_n)=\mathbf{0}$ or $\|\boldsymbol{\Psi}_n(\widehat{\boldsymbol{\theta}}_n)\|=o_p(1)$ also satisfies
\begin{equation*}
\widehat{\boldsymbol{\theta}}_n \xrightarrow{p} \boldsymbol{\theta}_0 \qquad\text{as } n\to\infty.
\end{equation*}
\end{theorem}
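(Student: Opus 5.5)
The plan is the standard Z-estimator consistency argument (van der Vaart, Theorem 5.9), adapted to generated regressors and the triangular-array spatial structure. Two ingredients are needed.

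\begin{itemize}
\item[(a)] Uniform convergence: $\sup_{\boldsymbol{\theta}\in\Theta}\|\boldsymbol{\Psi}_n(\boldsymbol{\theta})-\boldsymbol{\Psi}(\boldsymbol{\theta})\|\xrightarrow{p}0$.
\item[(b)] Well-separated zero: for every $\epsilon>0$,
\[
\inf_{\boldsymbol{\theta}\in\Theta:\|\boldsymbol{\theta}-\boldsymbol{\theta}_0\|\ge\epsilon}\|\boldsymbol{\Psi}(\boldsymbol{\theta})\|>0 .
\]
\end{itemize}

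Given both, the conclusion is immediate. Fix $\epsilon>0$ and let $\eta_\epsilon$ denote the infimum in (b). On the event $\{\|\widehat{\boldsymbol{\theta}}_n-\boldsymbol{\theta}_0\|\ge\epsilon\}$ we have
\[
\eta_\epsilon\le\|\boldsymbol{\Psi}(\widehat{\boldsymbol{\theta}}_n)\|\le \|\boldsymbol{\Psi}_n(\widehat{\boldsymbol{\theta}}_n)\|+\sup_{\Theta}\|\boldsymbol{\Psi}_n-\boldsymbol{\Psi}\| .
\]
The first term on the right is $o_p(1)$ by hypothesis, and the second is $o_p(1)$ by (a). Hence the probability of this event tends to zero, which gives $\widehat{\boldsymbol{\theta}}_n\xrightarrow{p}\boldsymbol{\theta}_0$.

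\textbf{Step (b).} I would take $\Theta$ compact, with $\vartheta\in[-0.99,0.99]$ and $\sigma$ bounded away from $0$ and $\infty$, as in the compactness assumption among Assumptions~\ref{as:1}--\ref{as:13}. I would show $\boldsymbol{\Psi}$ is continuous on $\Theta$: $\psi_2$ is bounded and continuous, and $\vartheta\mapsto\{\mathbf{W}(\mathbf{I}_n-\vartheta\mathbf{W})^{-1}\}_{ii}$ is smooth with derivatives bounded uniformly in $n$ under the uniform boundedness of the row and column sums of $\mathbf{W}$ and of $\mathbf{S}(\vartheta)$. Lemma~\ref{lem:fisher_consistency} gives $\boldsymbol{\Psi}(\boldsymbol{\theta}_0)=\mathbf{0}$, and the identification assumption says this root is unique. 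A continuous function on a compact set with a unique zero has a positive minimum of $\|\boldsymbol{\Psi}\|$ on the compact set $\{\|\boldsymbol{\theta}-\boldsymbol{\theta}_0\|\ge\epsilon\}\cap\Theta$, which is exactly (b).

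\textbf{Step (a).} I would split it in two.
\begin{itemize}
\item[(a1)] Replace the estimated design $\mathbf{Z}$ (built from $\widehat{\mu},\widehat{\phi}_k$) by the oracle design built from $\mu_R,\phi_k$. By the RFPCA consistency in Lemma~\ref{lem:rfpca_fisher} and its sample analogue in the assumptions, $\max_i|\widehat{\xi}_{ik}-\xi_{ik}|\le \|\mathcal{X}_i-\widehat{\mu}\|\,\|\widehat{\phi}_k-\phi_k\|+\|\widehat{\mu}-\mu_R\|=o_p(1)$, using $\sup_i\|\mathcal{X}_i\|<\infty$. Because $\psi_2$ is Lipschitz and bounded, and $n^{-1}\sum_i|(\mathbf{W}\mathbf{Y})_i|=O_p(1)$ from the reduced form \eqref{eq:reduced_form} with bounded $\|\mathbf{S}(\vartheta_0)\|_\infty$, the difference between the two versions of $\boldsymbol{\Psi}_n$ is $o_p(1)$ uniformly on $\Theta$.
\item[(a2)] For the oracle map, obtain pointwise convergence $\boldsymbol{\Psi}_n(\boldsymbol{\theta})-\mathbb{E}\boldsymbol{\Psi}_n(\boldsymbol{\theta})\to0$ in probability, either from a weak LLN for near-epoch dependent triangular arrays (Jenish--Prucha), or directly by a variance bound. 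The $Y_i$ are linear in the i.i.d.\ $\varepsilon$ through $\mathbf{S}(\vartheta_0)$ with summable rows and columns, so covariances of the summands are controlled. Pointwise convergence is then upgraded to uniform convergence via stochastic equicontinuity: the summands are Lipschitz in $\boldsymbol{\theta}$ with an $O_p(1)$ averaged Lipschitz constant of the form $n^{-1}\sum_i(1+|(\mathbf{W}\mathbf{Y})_i|+\|\mathbf{z}_i\|)^2$. Finally, $\mathbb{E}\boldsymbol{\Psi}_n\to\boldsymbol{\Psi}$ uniformly by the definition of $\boldsymbol{\Psi}$ together with equicontinuity. Where the paper's high-level uniform-convergence assumption covers (a2), I would simply invoke it.
\end{itemize}

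\textbf{Main obstacle.} I expect the main obstacle to be (a1) together with the $\vartheta$-equation. The term $\psi_2(e_i/\sigma)(\mathbf{W}\mathbf{Y})_i$ involves an unbounded endogenous factor. Moreover, the generated-regressor error enters $e_i$ for all $i$ simultaneously through $\mathbf{Z}\boldsymbol{\gamma}$, so the control must be uniform over $\boldsymbol{\gamma}$ in a compact set. I would handle this with Cauchy--Schwarz: bound the error by
\[
\Bigl(n^{-1}\sum_i(\mathbf{W}\mathbf{Y})_i^2\Bigr)^{1/2}\sup_{\boldsymbol{\gamma}}\|\boldsymbol{\gamma}\|\,\max_i\|\widehat{\mathbf{z}}_i-\mathbf{z}_i\|/\underline{\sigma}
\]
and use the second-moment bound $\mathbb{E}\{n^{-1}\|\mathbf{W}\mathbf{Y}\|^2\}=O(1)$, which follows from $\sup_n\|\mathbf{W}\mathbf{S}(\vartheta_0)\|<\infty$.
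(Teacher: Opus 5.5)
Your core argument is the paper's own proof: the paper also takes the separation constant $\delta(\epsilon)=\inf_{\|\boldsymbol{\theta}-\boldsymbol{\theta}_0\|\ge\epsilon}\|\boldsymbol{\Psi}(\boldsymbol{\theta})\|>0$ from Assumption~\ref{as:10}, combines it with the uniform convergence in Assumption~\ref{as:13}, and uses the same triangle inequality to show that the event $\|\widehat{\boldsymbol{\theta}}_n-\boldsymbol{\theta}_0\|\ge\epsilon$ has vanishing probability. Your Steps (a1), (a2) and (b) are unnecessary, because Assumption~\ref{as:13} is already stated for $\boldsymbol{\Psi}_n$ built from the estimated RFPCA design and Assumption~\ref{as:10} already contains the well-separated form, so you should just invoke them; deriving them from primitives would need more care (for instance, the Danish $\psi_2$ as defined in the paper jumps at $|u|=c$, so it is neither continuous nor Lipschitz).
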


The proof follows a standard $Z$-estimation argument: Assumption~\ref{as:10} guarantees identification of the unique zero of the population estimating map, while Assumption~\ref{as:13} provides uniform convergence of the sample estimating map over the compact parameter space $\Theta$. A full proof is given in Appendix~\ref{sec:proofs}.

\subsection{Asymptotic normality}\label{subsec:asymp_normal}

We next derive the limiting distribution of the full parameter estimator. Let 
\[
\boldsymbol{\mathcal A}(\boldsymbol{\theta}_0) :=\lim_{n\to\infty} \mathbb{E} \left\{ \frac{\partial}{\partial\boldsymbol{\theta}^\top}
\boldsymbol{\Psi}_n(\boldsymbol{\theta}_0) \right\},
\]
denote the limiting Jacobian matrix, and let
\begin{equation}\label{eq:longrunB}
\boldsymbol{\mathcal B}(\boldsymbol{\theta}_0) := \lim_{n\to\infty}
\operatorname{Var} \left\lbrace \frac{1}{\sqrt{n}} \sum_{i=1}^n \boldsymbol{\psi}_{2,n,i}(\boldsymbol{\theta}_0) \right\rbrace,
\end{equation}
be the asymptotic covariance matrix of the empirical estimating process. Under Assumption~\ref{as:14}, the limit in \eqref{eq:longrunB} exists and is finite.

\begin{theorem}\label{thm:asymptotic_normality_main}
Assume Assumptions~\ref{as:1}-\ref{as:14} hold. Let $\widehat{\boldsymbol{\theta}}_n$ satisfy either $\boldsymbol{\Psi}_n(\widehat{\boldsymbol{\theta}}_n)=\mathbf{0}$, or, more generally, $\|\boldsymbol{\Psi}_n(\widehat{\boldsymbol{\theta}}_n)\|=o_p(n^{-1/2})$, and suppose $\widehat{\boldsymbol{\theta}}_n \xrightarrow{p} \boldsymbol{\theta}_0$. Then
\begin{equation*}
\sqrt{n}(\widehat{\boldsymbol{\theta}}_n-\boldsymbol{\theta}_0)
\ \Rightarrow\ N \left(\mathbf{0}, \boldsymbol{\mathcal V}\right),
\qquad \boldsymbol{\mathcal V} := \boldsymbol{\mathcal A}^{-1}\boldsymbol{\mathcal B}\boldsymbol{\mathcal A}^{-\top},
\end{equation*}
where $\boldsymbol{\mathcal A}=\boldsymbol{\mathcal A}(\boldsymbol{\theta}_0)$ and $\boldsymbol{\mathcal B}=\boldsymbol{\mathcal B}(\boldsymbol{\theta}_0)$.
\end{theorem}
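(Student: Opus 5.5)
The argument is the standard Z-estimator linearization. Its only nonstandard ingredients are the endogenous spatial lag inside the score and the generated RFPCA regressors. I start from the exact (or approximate) root condition $\boldsymbol{\Psi}_n(\widehat{\boldsymbol{\theta}}_n)=o_p(n^{-1/2})$. Since $\widehat{\boldsymbol{\theta}}_n\xrightarrow{p}\boldsymbol{\theta}_0$, with probability tending to one $\widehat{\boldsymbol{\theta}}_n$ lies in a convex neighbourhood of $\boldsymbol{\theta}_0$ inside the interior of $\Theta$. There I expand each coordinate by the mean value theorem:
\begin{equation*}
o_p(n^{-1/2})=\boldsymbol{\Psi}_n(\boldsymbol{\theta}_0)+\dot{\boldsymbol{\Psi}}_n(\bar{\boldsymbol{\theta}}_n)(\widehat{\boldsymbol{\theta}}_n-\boldsymbol{\theta}_0),
\end{equation*}
where $\dot{\boldsymbol{\Psi}}_n=\partial\boldsymbol{\Psi}_n/\partial\boldsymbol{\theta}^\top$ and the rows of $\bar{\boldsymbol{\theta}}_n$ are intermediate points. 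Rows are treated separately because the expansion is vector-valued. Differentiability is available because $\psi_2$ is $C^1$ for the Danish and Andrews losses (Andrews away from the kinks $\pm\pi c$, which carry probability zero under a continuous error law). It also uses that $\vartheta\mapsto\{\mathbf{W}(\mathbf{I}_n-\vartheta\mathbf{W})^{-1}\}_{ii}$ is smooth on $|\vartheta|\le 0.99$, with derivative $\{\mathbf{W}^2\mathbf{S}(\vartheta)^2\}_{ii}$ uniformly bounded under the bounded row/column-sum conditions on $\mathbf{W}$ that I take to be among Assumptions~\ref{as:5}--\ref{as:6}.

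The second step is the Jacobian: I will show $\dot{\boldsymbol{\Psi}}_n(\bar{\boldsymbol{\theta}}_n)\xrightarrow{p}\boldsymbol{\mathcal A}$ and that $\boldsymbol{\mathcal A}$ is nonsingular, which I take from the identification/regularity assumption (Assumption~\ref{as:10} together with its local Jacobian counterpart). I split this into $\sup_{\|\boldsymbol{\theta}-\boldsymbol{\theta}_0\|\le\delta_n}\|\dot{\boldsymbol{\Psi}}_n(\boldsymbol{\theta})-\dot{\boldsymbol{\Psi}}_n(\boldsymbol{\theta}_0)\|$ and $\dot{\boldsymbol{\Psi}}_n(\boldsymbol{\theta}_0)-\mathbb{E}\dot{\boldsymbol{\Psi}}_n(\boldsymbol{\theta}_0)$.
\begin{itemize}
\item The first piece is controlled by uniform (Lipschitz-type) continuity of $\psi_2'$ and by the moment bounds $n^{-1}\sum_i\{(\mathbf{W}\mathbf{Y})_i^2+\|\mathbf{z}_i\|^2\}=O_p(1)$. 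These bounds follow from the reduced form \eqref{eq:reduced_form}, the uniform boundedness of $\|\mathbf{S}(\vartheta_0)\|_\infty$ and $\|\mathbf{S}(\vartheta_0)\|_1$, and the finite second moments of $\mathcal{X}$ and $u_i$.
\item The second piece is a weak LLN for a triangular array. Each summand is a function of $\boldsymbol{\varepsilon}$ and the i.i.d.\ scores, entering linearly through $\mathbf{S}(\vartheta_0)$, so the variance of the average is $O(1/n)$ by the bounded-norm property of $\mathbf{S}$. This is essentially Assumption~\ref{as:13} applied to derivatives.
\end{itemize}
The effect of the estimated RFPCA basis is handled in the same way. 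Replacing $\widehat{\xi}_{ik}$ by $\xi_{ik}$ changes $\mathbf{z}_i$ by $O_p(n^{-1/2})$ uniformly (Lemma~\ref{lem:rfpca_fisher} together with the RFPCA consistency assumption), and this is negligible in the Jacobian.

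The main obstacle is the CLT for $\sqrt n\,\boldsymbol{\Psi}_n(\boldsymbol{\theta}_0)$. The $\boldsymbol{\gamma}$ and $\sigma$ components are sums of independent, mean-zero terms: $\psi_2(u_i)\mathbf{z}_i$ has mean zero by symmetry and independence of $u_i$ from the regressors, and $\psi_2(u_i)u_i-\kappa$ has mean zero by the definition of $\kappa$. The spatial component, however, is
\begin{equation*}
\sum_i\psi_2(u_i)\{\mathbf{G}(\mathbf{Z}\boldsymbol{\gamma}_0+\boldsymbol{\varepsilon})\}_i-\sigma_0\kappa\,\mathrm{tr}(\mathbf{G}),\qquad \mathbf{G}=\mathbf{W}\mathbf{S}(\vartheta_0).
\end{equation*}
This is a linear term plus a linear--quadratic form $\sigma_0\sum_{i\neq j}G_{ij}\psi_2(u_i)u_j+\sigma_0\sum_i G_{ii}\{\psi_2(u_i)u_i-\kappa\}$. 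Lemma~\ref{lemma:bias_term} is what makes it centered. I would therefore apply the Kelejian--Prucha CLT for linear--quadratic forms, in the martingale-difference version, to an arbitrary linear combination $\mathbf{c}^\top\sqrt n\boldsymbol{\Psi}_n(\boldsymbol{\theta}_0)$. Define the filtration by the pairs $(u_i,\mathcal{X}_i)$ added sequentially, so that the summands $\psi_2(u_i)\sum_{j<i}(G_{ij}u_j+G_{ji}\ldots)$ become martingale differences; note that cross terms $\psi_2(u_i)u_j$ with $j>i$ are rearranged using the independence and zero means of $\psi_2(u_i)$ and $u_j$. Then verify the conditional-variance convergence to $\mathbf{c}^\top\boldsymbol{\mathcal B}\mathbf{c}$ (Assumption~\ref{as:14}) and a Lyapunov condition. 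The latter uses bounded $\psi_2$, $\mathbb{E}|u|^{4+\eta}<\infty$, and $\sup_n(\|\mathbf{G}\|_1+\|\mathbf{G}\|_\infty)<\infty$. The generated-regressor error contributes $n^{-1/2}\sum_i\psi_2(u_i)(\widehat{\mathbf{z}}_i-\mathbf{z}_i)$, which is $o_p(1)$ because $\widehat{\mathbf{z}}_i-\mathbf{z}_i$ depends only on the curves, which are independent of $u_i$, and $\mathbb{E}\psi_2(u_i)=0$. Under the exact-span Assumption~\ref{as:4}, the corresponding contribution to the residual through $\boldsymbol{\gamma}_0$ is absorbed into $\mathcal{B}$ (or vanishes) per the stated empirical-process condition. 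Cramér--Wold then yields $\sqrt n\boldsymbol{\Psi}_n(\boldsymbol{\theta}_0)\Rightarrow N(\mathbf{0},\boldsymbol{\mathcal B})$.

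Finally, I invert the Jacobian in the expansion:
\begin{equation*}
\sqrt n(\widehat{\boldsymbol{\theta}}_n-\boldsymbol{\theta}_0)=-\{\boldsymbol{\mathcal A}+o_p(1)\}^{-1}\sqrt n\boldsymbol{\Psi}_n(\boldsymbol{\theta}_0)+o_p(1).
\end{equation*}
Slutsky's theorem then gives the limit $N(\mathbf{0},\boldsymbol{\mathcal A}^{-1}\boldsymbol{\mathcal B}\boldsymbol{\mathcal A}^{-\top})$.
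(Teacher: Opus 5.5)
Your architecture (a row-wise mean-value expansion of $\boldsymbol{\Psi}_n$ about $\boldsymbol{\theta}_0$, convergence of the Jacobian at the intermediate points to $\boldsymbol{\mathcal A}$, a CLT for $\sqrt{n}\,\boldsymbol{\Psi}_n(\boldsymbol{\theta}_0)$, then Slutsky) is exactly the paper's. Treating rows separately is in fact more careful than the paper's single $\widetilde{\boldsymbol{\theta}}_n$. The difference is that the paper derives none of these inputs. Assumption~\ref{as:14} posits that $\boldsymbol{\mathcal A}$ is nonsingular, that $n^{-1/2}\sum_i\boldsymbol{\psi}_{2,n,i}(\boldsymbol{\theta}_0)\Rightarrow N(\mathbf{0},\boldsymbol{\mathcal B})$, and that $\sup_{\boldsymbol{\theta}\in\mathcal N(\boldsymbol{\theta}_0)}\|\partial\boldsymbol{\Psi}_n(\boldsymbol{\theta})/\partial\boldsymbol{\theta}^\top-\boldsymbol{\mathcal A}\|\xrightarrow{p}0$, and the proof just quotes these.

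Where you instead try to prove them from primitives, the argument as written does not hold up. Your Jacobian bullet rests on Lipschitz continuity of $\psi_2'$, which fails for the losses in question. For Andrews, $\psi_2'(u)=\cos(u/c)$ on $|u|\le\pi c$ and $0$ outside, a jump of size one at $\pm\pi c$. For the Danish loss as displayed in Section~\ref{subsec:Mest}, even $\psi_2$ jumps at $\pm c$, from $\pm c$ to $\pm c\,e^{-c}$, so your $C^1$ claim fails there too. The paper's own expansion shares this smoothness issue and implicitly assumes it away in Assumption~\ref{as:14}. Without a Lipschitz $\psi_2'$, you would need a bounded-density argument showing that the proportion of standardized residuals within $O\{\delta_n(1+\|\mathbf{z}_i\|+|(\mathbf{W}\mathbf{Y})_i|)\}$ of the break points is $o_p(1)$; you do not supply one. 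Nonsingularity of $\boldsymbol{\mathcal A}$ also does not follow from Assumption~\ref{as:10}: a unique zero does not force a nonsingular derivative (think of $x\mapsto x^3$). It is a separate clause of Assumption~\ref{as:14}.

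Your CLT step has similar problems.
\begin{itemize}
\item It imports $\mathbb{E}|u|^{4+\eta}<\infty$, which is not assumed; Assumption~\ref{as:5} gives only fourth moments.
\item It claims a uniform $O_p(n^{-1/2})$ accuracy for the estimated scores. Neither Assumption~\ref{as:2} (which gives only $o_p(1)$) nor Lemma~\ref{lem:rfpca_fisher} (a population statement) provides this.
\item More importantly, with estimated scores $e_i(\boldsymbol{\theta}_0)=\varepsilon_i+(\mathbf{z}_i-\widehat{\mathbf{z}}_i)^\top\boldsymbol{\gamma}_0$. The induced term $\sigma_0^{-1}n^{-1/2}\sum_i\psi_2'(u_i)\widehat{\mathbf{z}}_i(\mathbf{z}_i-\widehat{\mathbf{z}}_i)^\top\boldsymbol{\gamma}_0$ is in general $O_p(1)$ rather than $o_p(1)$ at root-$n$ RFPCA rates. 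This is the usual generated-regressor effect. Assumption~\ref{as:4} does not remove it, and ``absorbed into $\boldsymbol{\mathcal B}$ (or vanishes)'' is not an argument.
\item Your filtration would also have to put all curves into $\mathcal{F}_0$, because $\{\mathbf{G}\mathbf{Z}\boldsymbol{\gamma}_0\}_i$ involves every $\mathcal{X}_j$.
\end{itemize}

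None of this is needed, however. The theorem assumes Assumption~\ref{as:14}, which is a hypothesis about the score actually being solved, i.e.\ the one built from the estimated design $\mathbf{Z}_n$. Each row's intermediate point lies in $\mathcal N(\boldsymbol{\theta}_0)$ with probability tending to one. So its last clause gives the Jacobian limit, its first clause gives invertibility, and its CLT clause gives the score limit. Together with $\|\boldsymbol{\Psi}_n(\widehat{\boldsymbol{\theta}}_n)\|=o_p(n^{-1/2})$, your expansion and Slutsky step then finish the proof exactly as in the paper.
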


The proof is given in Appendix~\ref{sec:proofs}. Theorem~\ref{thm:asymptotic_normality_main} yields Wald-type inference for any smooth functional of $\boldsymbol{\theta}$
once $\boldsymbol{\mathcal{V}}$ is estimated consistently.

\begin{corollary}\label{cor:sandwich}
Under the conditions of Theorem~\ref{thm:asymptotic_normality_main}, a consistent estimator of $\boldsymbol{\mathcal{V}}$ is
\begin{equation*}
\widehat{\boldsymbol{\mathcal{V}}} := \widehat{\boldsymbol{\mathcal{A}}}^{-1} \widehat{\boldsymbol{\mathcal{B}}} \widehat{\boldsymbol{\mathcal{A}}}^{-\top},
\end{equation*}
where
$\widehat{\boldsymbol{\mathcal A}} := \frac{\partial} {\partial\boldsymbol{\theta}^\top} \boldsymbol{\Psi}_n(\widehat{\boldsymbol{\theta}}_n)$, and $\widehat{\boldsymbol{\mathcal B}}$ is any consistent estimator of $\operatorname{Var} \left( \frac{1}{\sqrt{n}} \sum_{i=1}^n \boldsymbol{\psi}_{2,n,i}(\widehat{\boldsymbol{\theta}}_n) \right)$. Consequently, for any fixed vector $\mathbf{c}\in\mathbb{R}^{K+3}$,
\begin{equation*}
\sqrt{n} \mathbf{c}^\top(\widehat{\boldsymbol{\theta}}_n-\boldsymbol{\theta}_0)
\ \Rightarrow\ N \left(0,\ \mathbf{c}^\top\boldsymbol{\mathcal{V}}\mathbf{c}\right),
\qquad \mathbf{c}^\top\widehat{\boldsymbol{\mathcal{V}}}\mathbf{c} \xrightarrow{p} \mathbf{c}^\top\boldsymbol{\mathcal{V}}\mathbf{c}.
\end{equation*}
\end{corollary}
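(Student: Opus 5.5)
The plan is to derive Corollary~\ref{cor:sandwich} from Theorem~\ref{thm:asymptotic_normality_main}, the consistency of $\widehat{\boldsymbol{\theta}}_n$ (Theorem~\ref{thm:consistency_main}), and the continuous mapping theorem. Three things need to be shown: $\widehat{\boldsymbol{\mathcal A}}\xrightarrow{p}\boldsymbol{\mathcal A}$, then $\widehat{\boldsymbol{\mathcal V}}\xrightarrow{p}\boldsymbol{\mathcal V}$, and finally the scalar results for $\mathbf c^\top(\widehat{\boldsymbol\theta}_n-\boldsymbol\theta_0)$.

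\textbf{Step 1: consistency of $\widehat{\boldsymbol{\mathcal A}}$.} I would split
\[
\widehat{\boldsymbol{\mathcal A}}-\boldsymbol{\mathcal A}
=\Bigl\{\tfrac{\partial}{\partial\boldsymbol\theta^\top}\boldsymbol\Psi_n(\widehat{\boldsymbol\theta}_n)-\tfrac{\partial}{\partial\boldsymbol\theta^\top}\boldsymbol\Psi_n(\boldsymbol\theta_0)\Bigr\}
+\Bigl\{\tfrac{\partial}{\partial\boldsymbol\theta^\top}\boldsymbol\Psi_n(\boldsymbol\theta_0)-\mathbb E\,\tfrac{\partial}{\partial\boldsymbol\theta^\top}\boldsymbol\Psi_n(\boldsymbol\theta_0)\Bigr\}
+\Bigl\{\mathbb E\,\tfrac{\partial}{\partial\boldsymbol\theta^\top}\boldsymbol\Psi_n(\boldsymbol\theta_0)-\boldsymbol{\mathcal A}\Bigr\}.
\]
\begin{itemize}
\item The third term vanishes by the definition of $\boldsymbol{\mathcal A}$.
\item The second term is $o_p(1)$ by the Jacobian law of large numbers used in the proof of Theorem~\ref{thm:asymptotic_normality_main}. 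That proof needs the same property to linearize $\boldsymbol\Psi_n$ around $\boldsymbol\theta_0$.
\item For the first term, I would use stochastic equicontinuity of the Jacobian in a neighbourhood of $\boldsymbol\theta_0$, which the theorem's regularity assumptions supply. It makes $\sup_{\|\boldsymbol\theta-\boldsymbol\theta_0\|\le\delta_n}\|\partial\boldsymbol\Psi_n(\boldsymbol\theta)-\partial\boldsymbol\Psi_n(\boldsymbol\theta_0)\|=o_p(1)$ for any $\delta_n\to0$. Combined with $\widehat{\boldsymbol\theta}_n\xrightarrow{p}\boldsymbol\theta_0$, the first term is $o_p(1)$.
\end{itemize}
Concretely, the needed smoothness comes from two facts. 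First, $\psi_2'$ is bounded and piecewise continuous; its discontinuities for Andrews and Danish lie on a null set under a continuous $F$. Second, the bias term $\{\mathbf W(\mathbf I_n-\vartheta\mathbf W)^{-1}\}_{ii}\sigma\kappa$ is smooth in $(\vartheta,\sigma)$, uniformly on $|\vartheta|\le0.99$, by the uniform boundedness of $\mathbf S(\vartheta)$ in row and column sums.

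\textbf{Step 2: consistency of $\widehat{\boldsymbol{\mathcal V}}$.} The invertibility of $\boldsymbol{\mathcal A}$ is part of the hypotheses of Theorem~\ref{thm:asymptotic_normality_main}. Hence $\widehat{\boldsymbol{\mathcal A}}$ is invertible with probability tending to one, and $\widehat{\boldsymbol{\mathcal A}}^{-1}\xrightarrow{p}\boldsymbol{\mathcal A}^{-1}$ by continuity of matrix inversion at nonsingular matrices. The corollary takes $\widehat{\boldsymbol{\mathcal B}}$ to be consistent for the variance evaluated at $\widehat{\boldsymbol\theta}_n$; I would argue that this target converges to $\boldsymbol{\mathcal B}$ using Assumption~\ref{as:14} and the same equicontinuity argument as in Step 1. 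Then $(\widehat{\boldsymbol{\mathcal A}},\widehat{\boldsymbol{\mathcal B}})\mapsto\widehat{\boldsymbol{\mathcal A}}^{-1}\widehat{\boldsymbol{\mathcal B}}\widehat{\boldsymbol{\mathcal A}}^{-\top}$ is continuous, and the continuous mapping theorem gives $\widehat{\boldsymbol{\mathcal V}}\xrightarrow{p}\boldsymbol{\mathcal V}$.

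\textbf{Step 3: the scalar statements.} The map $\mathbf x\mapsto\mathbf c^\top\mathbf x$ is linear, so Theorem~\ref{thm:asymptotic_normality_main} and the continuous mapping theorem give $\sqrt n\,\mathbf c^\top(\widehat{\boldsymbol\theta}_n-\boldsymbol\theta_0)\Rightarrow N(0,\mathbf c^\top\boldsymbol{\mathcal V}\mathbf c)$. The convergence $\mathbf c^\top\widehat{\boldsymbol{\mathcal V}}\mathbf c\xrightarrow{p}\mathbf c^\top\boldsymbol{\mathcal V}\mathbf c$ follows directly from Step 2.

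I expect the main obstacle to be the first term of Step 1. This is uniform convergence of the sample Jacobian near $\boldsymbol\theta_0$ when $\psi_2'$ is only piecewise continuous, as with the hard cutoff in Andrews' $\psi$. My first approach is a dominated-convergence argument: $\mathbb E\sup_{\|\boldsymbol\theta-\boldsymbol\theta_0\|\le\delta}|\psi_2'(e_i(\boldsymbol\theta)/\sigma)-\psi_2'(\varepsilon_i/\sigma_0)|\to0$ as $\delta\to0$. This holds because $F$ has a density, so the kinks carry zero probability. Combining it with the bounded moments of $\|\mathbf z_i\|$ and $(\mathbf W\mathbf Y)_i$ from the regularity assumptions controls the mean. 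A spatial weak law of large numbers for the centered sum, under the triangular-array dependence in the assumptions, then controls the fluctuation.
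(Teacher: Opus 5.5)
\textbf{Steps 1 and 3.} These are sound, but Step 1 does far more work than needed. The paper gives no separate proof of this corollary; it is meant to follow directly from Theorem~\ref{thm:asymptotic_normality_main}. The last clause of Assumption~\ref{as:14} already gives $\sup_{\boldsymbol\theta\in\mathcal N(\boldsymbol\theta_0)}\|\partial\boldsymbol\Psi_n(\boldsymbol\theta)/\partial\boldsymbol\theta^\top-\boldsymbol{\mathcal A}(\boldsymbol\theta_0)\|\xrightarrow{p}0$. Consistency gives $\mathbb P\{\widehat{\boldsymbol\theta}_n\in\mathcal N(\boldsymbol\theta_0)\}\to1$, so $\widehat{\boldsymbol{\mathcal A}}\xrightarrow{p}\boldsymbol{\mathcal A}$ is immediate. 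The ``main obstacle'' you identify is therefore assumed away. Your three-term decomposition, the dominated-convergence argument, and the spatial law of large numbers (which is not among the stated assumptions) are all unnecessary. The rest of the intended argument matches your Steps 2 and 3: nonsingularity of $\boldsymbol{\mathcal A}$, continuous mapping, and a linear map applied to Theorem~\ref{thm:asymptotic_normality_main}.

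\textbf{The gap in Step 2.} You assert that the target $\operatorname{Var}\bigl(n^{-1/2}\sum_i\boldsymbol\psi_{2,n,i}(\widehat{\boldsymbol\theta}_n)\bigr)$ converges to $\boldsymbol{\mathcal B}$ ``using Assumption~\ref{as:14} and the same equicontinuity argument as in Step~1.'' This fails under either reading of the target.
\begin{itemize}
\item \emph{Literal reading.} Theorem~\ref{thm:asymptotic_normality_main} allows exact roots. Then $n^{-1/2}\sum_i\boldsymbol\psi_{2,n,i}(\widehat{\boldsymbol\theta}_n)=\sqrt n\,\boldsymbol\Psi_n(\widehat{\boldsymbol\theta}_n)=\mathbf 0$ identically, so the target is the zero matrix; for approximate roots the vector is $o_p(1)$. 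A consistent estimator of that target would send $\widehat{\boldsymbol{\mathcal V}}$ to $\mathbf 0$, not to $\boldsymbol{\mathcal V}$.
\item \emph{Plug-in reading.} Set $V_n(\boldsymbol\theta):=\operatorname{Var}\bigl(n^{-1/2}\sum_i\boldsymbol\psi_{2,n,i}(\boldsymbol\theta)\bigr)$ and take the target to be $V_n(\widehat{\boldsymbol\theta}_n)$. You then need $V_n(\widehat{\boldsymbol\theta}_n)-V_n(\boldsymbol\theta_0)\xrightarrow{p}\mathbf 0$, i.e.\ continuity of the long-run variance at $\boldsymbol\theta_0$, uniformly in $n$. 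Assumption~\ref{as:14} only controls $V_n(\boldsymbol\theta_0)$. Your Step~1 argument concerns a first-moment object, the sample Jacobian. It does not control the double sum $n^{-1}\sum_{i,j}\operatorname{Cov}\{\boldsymbol\psi_{2,n,i}(\boldsymbol\theta),\boldsymbol\psi_{2,n,j}(\boldsymbol\theta)\}$. For $\boldsymbol\theta\neq\boldsymbol\theta_0$, every term there depends on all the errors, because $e_i(\boldsymbol\theta)=\varepsilon_i+(\vartheta_0-\vartheta)(\mathbf W\mathbf Y)_i+\mathbf z_i^\top(\boldsymbol\gamma_0-\boldsymbol\gamma)$ involves the spatial lag.
\end{itemize}

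\textbf{How to close it.} Either read the hypothesis as $\widehat{\boldsymbol{\mathcal B}}\xrightarrow{p}\boldsymbol{\mathcal B}$, the reading under which the corollary follows from the stated conditions alone; Step~2 is then just the continuous mapping theorem. Or add an explicit condition such as $\lim_{\delta\to0}\limsup_{n}\sup_{\|\boldsymbol\theta-\boldsymbol\theta_0\|\le\delta}\|V_n(\boldsymbol\theta)-V_n(\boldsymbol\theta_0)\|=0$.
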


\subsection{Asymptotics of the reconstructed slope function}\label{subsec:asymp_beta}

Recall that the reconstructed slope function is $\widehat{\beta}(t)=\sum_{k=1}^K \widehat{b}_k \widehat{\phi}_k(t)$. Because $\widehat{\beta}$ depends on both the regression coefficients and the estimated RFPCA eigenfunctions, its limit law follows from a functional delta-method argument.

\begin{theorem}\label{thm:asymp_beta_main}
Let Assumptions~\ref{as:1}-\ref{as:14} hold. Assume further that the eigenvalues $\lambda_1>\cdots>\lambda_K>0$ of the covariance operator are distinct and that the eigenfunctions $\phi_k$ are uniquely defined up to sign. Then, as $n\to\infty$,
\begin{equation*}
\sqrt{n}\bigl(\widehat{\beta}-\beta\bigr) \xrightarrow{d} \mathcal{Z} \qquad\text{in }\mathcal{L}^2[0,1],
\end{equation*}
where $\mathcal{Z}$ is a mean-zero Gaussian random element with covariance operator characterized by
\begin{align*}
\operatorname{Cov}\bigl(\langle \mathcal{Z},f\rangle,\langle \mathcal{Z},g\rangle\bigr)
&= \sum_{k,j=1}^K (\boldsymbol{\mathcal{V}}_{\mathbf{bb}})_{kj} \langle \phi_k,f\rangle \langle \phi_j,g\rangle \\
&\quad + \sum_{k,j=1}^K b_k b_j
\mathbb{E}\Bigl\lbrace\langle \operatorname{IF}(\mathcal{X};\phi_k,\mathbb{P}_{\mathcal{X}}),f\rangle \langle \operatorname{IF}(\mathcal{X};\phi_j,\mathbb{P}_{\mathcal{X}}),g\rangle\Bigr\rbrace, \qquad \forall f,g\in\mathcal{L}^2[0,1].
\end{align*}
Here, $\boldsymbol{\mathcal{V}}_{\mathbf{bb}}$ denotes the $K\times K$ block of
$\boldsymbol{\mathcal{V}}=\boldsymbol{\mathcal{A}}^{-1}\boldsymbol{\mathcal{B}}\boldsymbol{\mathcal{A}}^{-\top}$ corresponding to $\sqrt{n}(\widehat{\mathbf{b}}-\mathbf{b})$ in Theorem~\ref{thm:asymptotic_normality_main},
and $\operatorname{IF}(\cdot;\phi_k,\mathbb{P}_{\mathcal{X}})$ is the influence function of the $k$\textsuperscript{th} RFPCA eigenfunction.
\end{theorem}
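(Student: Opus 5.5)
The plan is to decompose the error into a coefficient part and an eigenfunction part and then apply a functional delta method. Write
\[
\widehat{\beta}-\beta=\sum_{k=1}^K(\widehat b_k-b_k)\widehat\phi_k+\sum_{k=1}^K b_k(\widehat\phi_k-\phi_k),
\]
where Assumption~\ref{as:4} (exact span) gives $\beta=\sum_{k}b_k\phi_k$, so no truncation remainder appears. Before doing anything else, the signs of $\widehat\phi_k$ are aligned with $\phi_k$ (e.g.\ by requiring $\langle\widehat\phi_k,\phi_k\rangle\ge0$). This sign fixing is what makes the second sum meaningful; it is legitimate because the eigenvalues are distinct and each $\phi_k$ is unique up to sign.

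\emph{Coefficient term.} By Theorem~\ref{thm:asymptotic_normality_main}, $\sqrt n(\widehat{\mathbf b}-\mathbf b)=O_p(1)$ and is asymptotically $N(\mathbf 0,\boldsymbol{\mathcal V}_{\mathbf{bb}})$. Consistency of the RFPCA directions (Lemma~\ref{lem:rfpca_fisher} and the projection-pursuit consistency results of \citet{Bali2011} invoked under Assumption~\ref{as:2}) gives $\|\widehat\phi_k-\phi_k\|_{\mathcal H}\to_p0$. Hence the first sum equals $\sum_k\sqrt n(\widehat b_k-b_k)\phi_k+o_p(1)$ in $\mathcal L^2$.

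\emph{Eigenfunction term.} Here I would use the Hadamard differentiability of the projection-pursuit eigenfunction functionals, which the distinct-eigenvalue condition secures. This yields the asymptotically linear expansion
\[
\sqrt n(\widehat\phi_k-\phi_k)=n^{-1/2}\sum_{i=1}^n\operatorname{IF}(\mathcal X_i;\phi_k,\mathbb P_{\mathcal X})+o_p(1)\quad\text{in }\mathcal H .
\]
Since the $\mathcal X_i$ are i.i.d., the Hilbert-space CLT gives a Gaussian limit for $\sum_kb_k\sqrt n(\widehat\phi_k-\phi_k)$, with covariance equal to the second term in the displayed formula. The robust centering $\widehat\mu$ enters only through these influence functions.

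\emph{Joint convergence and the covariance formula.} The two pieces must converge jointly. I would stack the finite-dimensional vector $\sqrt n(\widehat{\boldsymbol\theta}_n-\boldsymbol\theta_0)$ with the $\mathcal H^K$-valued eigenfunction process and show tightness in $\mathbb R^{K+3}\times\mathcal H^K$. Tightness holds because each coordinate is tight. Finite-dimensional distributions are handled by Cramér--Wold: take $\langle\cdot,f\rangle$ projections and apply the linearization of $\boldsymbol\Psi_n$ underlying Theorem~\ref{thm:asymptotic_normality_main}, jointly with the i.i.d.\ influence-function sums. The continuous mapping $(\mathbf v,h_1,\dots,h_K)\mapsto\sum_kv_k\phi_k+\sum_kb_kh_k$ then delivers $\mathcal Z$.

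The stated covariance has no cross term. To obtain it, I would argue that the limiting covariance between the spatial score $n^{-1/2}\sum_i\boldsymbol\psi_{2,n,i}(\boldsymbol\theta_0)$ and $n^{-1/2}\sum_i\operatorname{IF}(\mathcal X_i;\phi_k)$ vanishes. Conditional on $\mathcal X_1,\dots,\mathcal X_n$, the $\boldsymbol\gamma$- and $\sigma$-scores have zero mean, because the errors are independent of the regressors, symmetric, and Fisher-calibrated through $\kappa$. The $\vartheta$-score is conditionally centered by Lemma~\ref{lemma:bias_term}. Both are odd or centered in $\varepsilon$ while the influence-function sums depend only on $\mathcal X$, so the cross-covariances are zero and the limits are asymptotically independent.

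\textbf{Main obstacle.} The hard step is the interplay between the generated regressors and $\widehat{\mathbf b}$. The scores in $\mathbf Z$ use $\widehat\phi_k$ and $\widehat\mu$, so $\widehat{\mathbf b}$ targets coefficients relative to the estimated basis, and the first-stage error could feed into $\boldsymbol{\mathcal V}_{\mathbf{bb}}$. Two effects must be checked: the first-stage error enters the second-stage estimating equations through the $\partial\boldsymbol\Psi/\partial\mathbf Z$ terms, and $\widehat{\mathbf b}$ can rotate to compensate for $\widehat\phi_k$. To handle this, I would expand $\boldsymbol\Psi_n$ in the generated regressors and appeal to the high-level condition (Assumption~\ref{as:14} / the empirical-process assumption) that Theorem~\ref{thm:asymptotic_normality_main} already absorbs any first-stage effect into $\boldsymbol{\mathcal V}$. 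If that does not go through, I would add the resulting cross term explicitly as an extra covariance contribution.
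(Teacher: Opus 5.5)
Your outline follows the paper's argument. You split $\widehat\beta-\beta$ into a coefficient part and an eigenfunction part, use the influence-function (Bahadur) expansion of $\widehat\phi_k$, push a joint limit through the linear map $(\mathbf v,h_1,\dots,h_K)\mapsto\sum_k v_k\phi_k+\sum_k b_kh_k$, and remove the cross term by conditional centering of the scores given the curves.

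Several of your details are tighter than the paper's:
\begin{itemize}
\item Your exact two-term identity needs only consistency of $\widehat\phi_k$ for the coefficient term.
\item You align signs, which the paper never does.
\item You condition on all of $\mathcal X_1,\dots,\mathcal X_n$, so you show that every component of $n^{-1/2}\sum_i\boldsymbol\psi_{2,n,i}(\boldsymbol\theta_0)$ is uncorrelated with every influence-function term. That is what is actually needed. $A_{\gamma\vartheta}$ is nonzero in general, so $\boldsymbol{\mathcal A}$ is not block diagonal, and $\sqrt n(\widehat{\mathbf b}-\mathbf b)$ loads on the $\vartheta$-score through the $\mathbf b$-rows of $\boldsymbol{\mathcal A}^{-1}$. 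The paper only pairs $\boldsymbol{\mathcal A}_{\mathbf{bb}}^{-1}\boldsymbol\psi_{\mathbf b,i}$ with $\operatorname{IF}(\mathcal X_i;\phi_k)$.
\end{itemize}

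\textbf{Joint convergence does not go through as written.} The paper does not derive it; it postulates (``Assume, in addition, that \dots'') joint convergence of $\sqrt n(\widehat{\mathbf b}-\mathbf b,\widehat\phi_1-\phi_1,\dots,\widehat\phi_K-\phi_K)$. Your tightness claim is fine. The Cram\'er--Wold step, however, needs a joint CLT for the spatial score sum and the i.i.d.\ influence-function sum. Assumption~\ref{as:14} gives only the marginal CLT of the score. Its $\vartheta$-component contains $\sum_i\psi_2(\varepsilon_i/\sigma_0)\{\mathbf W\mathbf S(\vartheta_0)\boldsymbol\varepsilon\}_i$, a quadratic-type form in the whole error vector, not an i.i.d.\ sum. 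Zero cross-covariance gives asymptotic independence only after joint Gaussianity is established. You have two options:
\begin{itemize}
\item adopt the paper's extra hypothesis; or
\item assume the score CLT holds conditionally on $\mathcal X_1,\dots,\mathcal X_n$ with nonrandom limit $\boldsymbol{\mathcal B}$, which is natural in SAR theory where the design is treated as fixed. Conditioning the joint characteristic function on the curves then gives joint convergence and independence of the two limits in one step.
\end{itemize}

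\textbf{Your ``main obstacle'' should be resolved, not hedged, and within the stated hypotheses it is.} Assumption~\ref{as:11} posits that the projected SAR model holds exactly, with the design $\mathbf Z_n$ that enters the estimating equations and with $\boldsymbol\gamma_0$. Consequently:
\begin{itemize}
\item $e_i(\boldsymbol\theta_0)=\varepsilon_i$;
\item the estimating function at $\boldsymbol\theta_0$ is the conditionally centered $\psi_2(\varepsilon_i/\sigma_0)\mathbf z_i$, together with its $\sigma$- and $\vartheta$-analogues;
\item first-stage error reaches $\widehat\beta$ only through $\sum_kb_k(\widehat\phi_k-\phi_k)$.
\end{itemize}
This is what the paper uses when it writes $\boldsymbol\psi_{\mathbf b,i}(\boldsymbol\theta_0)=\psi_2(\varepsilon_i/\sigma_0)\mathbf z_i$. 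Your fallbacks are incompatible with the theorem. If first-stage effects entered the score, the residual at $\boldsymbol\theta_0$ would be $\varepsilon_i+r_i$ with $r_i$ a function of the curves. Conditional centering would then fail, taking your independence argument with it, and adding the resulting cross term proves a different covariance.

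Your worry is nonetheless substantive. Without that exactness, to first order $\widehat{\mathbf b}$ centers at $(\langle\beta,\widehat\phi_k\rangle)_k$, and the within-span rotation in $\sum_kb_k(\widehat\phi_k-\phi_k)$ is cancelled by the coefficient term. Only the component of $\Delta_k$ orthogonal to $\operatorname{span}\{\phi_1,\dots,\phi_K\}$ would survive, so the stated formula genuinely rests on that strong assumption.
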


A complete proof is provided in Appendix~\ref{sec:proofs}. Theorem~\ref{thm:asymp_beta_main} directly implies asymptotic normality for any fixed linear functional of $\beta$.
\begin{corollary}\label{cor:beta_linear}
Under the conditions of Theorem~\ref{thm:asymp_beta_main}, for any fixed $f\in\mathcal{L}^2[0,1]$,
\begin{equation*}
\sqrt{n} \langle \widehat{\beta}-\beta,f\rangle \ \xrightarrow{d}\ N \left\{0,\ \operatorname{Var}\bigl(\langle \mathcal{Z},f\rangle\bigr)\right\}.
\end{equation*}
\end{corollary}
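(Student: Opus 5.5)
The corollary should follow from Theorem~\ref{thm:asymp_beta_main} by the continuous mapping theorem. I would argue in two steps.

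First, fix $f\in\mathcal{L}^2[0,1]$ and consider the map $L_f:\mathcal{L}^2[0,1]\to\mathbb{R}$, $L_f(h)=\langle h,f\rangle$. By Cauchy--Schwarz, $|L_f(h)-L_f(h')|\le\|f\|\,\|h-h'\|$, so $L_f$ is Lipschitz and hence continuous. Theorem~\ref{thm:asymp_beta_main} gives $\sqrt{n}(\widehat{\beta}-\beta)\xrightarrow{d}\mathcal{Z}$ in $\mathcal{L}^2[0,1]$. The continuous mapping theorem then yields
\[
\sqrt{n}\langle\widehat{\beta}-\beta,f\rangle = L_f\bigl(\sqrt{n}(\widehat{\beta}-\beta)\bigr)\xrightarrow{d} \langle\mathcal{Z},f\rangle .
\]

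Second, I identify the law of the limit. Since $\mathcal{Z}$ is a mean-zero Gaussian random element of the Hilbert space, every continuous linear functional of it is, by definition of a Gaussian element, a real Gaussian variable. Its mean is $\mathbb{E}\langle\mathcal{Z},f\rangle=\langle\mathbb{E}\mathcal{Z},f\rangle=0$. Its variance is $\operatorname{Var}(\langle\mathcal{Z},f\rangle)$, which is the covariance formula of Theorem~\ref{thm:asymp_beta_main} evaluated at $g=f$. This variance is finite, because the block $\boldsymbol{\mathcal{V}}_{\mathbf{bb}}$ is finite and the influence functions $\operatorname{IF}(\mathcal{X};\phi_k,\mathbb{P}_{\mathcal{X}})$ have finite second moments; both facts are implicit in $\mathcal{Z}$ being a well-defined Gaussian element. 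Hence $\langle\mathcal{Z},f\rangle\sim N\{0,\operatorname{Var}(\langle\mathcal{Z},f\rangle)\}$, which is the claim.

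The only point needing care is the sign ambiguity of the eigenfunctions. Theorem~\ref{thm:asymp_beta_main} is stated with the signs of $\widehat{\phi}_k$ aligned to those of $\phi_k$. Moreover, $\widehat{\beta}=\sum_k\widehat{b}_k\widehat{\phi}_k$ is invariant to a simultaneous sign flip of $\widehat{b}_k$ and $\widehat{\phi}_k$. So the sign convention introduces no further issue, and the corollary reduces to the continuous mapping theorem.
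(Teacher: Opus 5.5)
Your argument is correct and follows the paper's route: the paper gives no separate proof and only remarks that Theorem~\ref{thm:asymp_beta_main} directly implies the corollary. Your continuous-mapping step through the bounded linear functional $h\mapsto\langle h,f\rangle$, together with the fact that a continuous linear functional of a mean-zero Gaussian element is centered normal, makes that implication explicit. The sign-flip remark is harmless but unnecessary, since the corollary is stated under the theorem's own conditions.
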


\begin{remark}\label{rem:beta_pointwise}
Pointwise inference for $\beta(t)$ at a fixed $t$ and the construction of uniform confidence bands typically require additional regularity to strengthen the convergence topology from $\mathcal{L}^2[0,1]$ to, e.g., $\mathcal{C}[0,1]$, so that both $\widehat{\beta}$ and the Gaussian limit $\mathcal{Z}$ admit continuous versions. We therefore focus on inference for linear functionals and $\mathcal{L}^2$-type summaries, which are justified under the present assumptions. In particular, the current theory, by itself, does not yield valid pointwise standard errors or simultaneous confidence bands for the entire coefficient function.
\end{remark}

\section{Simulation study}\label{sec:simulation}

We conduct a series of Monte Carlo experiments to evaluate the finite-sample estimation and prediction performance of the proposed Fisher-consistent redescending estimator under both clean and contaminated settings. Following the simulation architecture of \citet{Beyaztas2026}, we compare the proposed method with three competing procedures: 
\begin{inparaenum}
\item[(i)] the classical FPCA-based likelihood estimator, obtained by combining standard FPCA with the nonrobust likelihood-based estimation step, 
\item[(ii)] the robust two-stage estimator of \citet{Beyaztas2026}, which combines RFPCA with a Huber-type spatial M-estimation step, and
\item[(iii)] the non-Fisher-consistent redescending counterparts of the proposed approaches, denoted by FCSAR-A\textsuperscript{*} and FCSAR-D\textsuperscript{*}, respectively, which are obtained by combining RFPCA with the redescending M-estimators of \cite{Zhan2025}.
\end{inparaenum}
Since the proposed method can be implemented with different redescending loss functions, we consider two versions throughout the simulation study: the Andrews-loss version (FCSAR-A) and the Danish-loss version (FCSAR-D). For all FPCA- and RFPCA-based procedures, the number of retained components is selected so that the cumulative proportion of explained variation is at least $95\%$.

The component-selection rule used in this section is intended for practical implementation rather than for a direct verification of the asymptotic assumptions in Section~\ref{sec:asymptotic}. In particular, the theory there treats the truncation level as fixed, whereas here the retained dimension is selected in a data-driven manner by explained variation. Accordingly, the simulation study is designed to assess finite-sample estimation and prediction performance of the competing methods under practically relevant tuning, rather than to numerically verify every assumption used in the asymptotic development.

In all simulation settings, the functional predictor $\X_i(t)$ is observed on 101 equally spaced grid points over $\mathcal{I}=[0,1]$. For each Monte Carlo replication, we consider training sample sizes $n\in\{100,250,500\}$ and spatial dependence parameters $\vartheta\in\{0.25,0.50,0.75\}$. To align the simulation design with the exact-span assumption used in the asymptotic section, the regression coefficient function is taken to lie in the same finite-dimensional span as the generated functional predictors. Specifically, we set $\beta(t)=\sum_{j=1}^{5} b_j \upsilon_j(t)$, $t\in[0,1]$, where $b_j=j^{-1}$ and $\upsilon_j(t)=\sin(j\pi t)-\cos(j\pi t)$ for $j=1,\ldots,5$. The spatial weight matrix $\mathbf{W}=[w_{ij}]_{1\le i,j\le n}$ is constructed from a one-dimensional regular grid and then row-normalized. Specifically, for $i\neq j$, let $d_{ij}=|i-j|$ and define
\begin{equation*}
w_{ij}=\frac{d_{ij}^{-1}}{\sum_{\ell\neq i} d_{i\ell}^{-1}}, \qquad w_{ii}=0.
\end{equation*}
We use the inverse-distance specification $d_{ij}^{-1}$ to induce a moderate spatial decay over the one-dimensional grid while still allowing non-negligible contributions from more distant units. A faster decay, such as $d_{ij}^{-2}$, would produce a much more localized dependence structure, in which the nearest neighbors dominate more strongly, and would therefore represent a different spatial regime.

We consider the following two scenarios.
\begin{inparaenum}
\item[1)] For each $i\in\{1,\ldots,n\}$, the functional predictor is generated as $\X_i(t)=\sum_{j=1}^{5}\kappa_{ij}\upsilon_j(t)$, where $\kappa_{ij}\sim \mathcal{N}(0,4j^{-3/2})$ independently and $\upsilon_j(t)=\sin(j\pi t)-\cos(j\pi t)$, $j=1,\ldots,5$. Let $m_i=\int_0^1 \mathcal{X}_i(t)\beta(t) dt$ and $\mathbf{m}=(m_1,\ldots,m_n)^\top$. The error vector is generated as $\boldsymbol{\varepsilon}=(\varepsilon_1,\ldots,\varepsilon_n)^\top$, where $\varepsilon_i\sim \mathcal{N}(0,1)$ independently. The scalar response vector is then generated according to the SSoFRM:
\begin{equation*}
\mathbf{Y}=(\mathbf{I}_n-\vartheta\mathbf{W})^{-1}(\mathbf{m}+\boldsymbol{\varepsilon}).
\end{equation*}

\item[2)] To assess robustness, we contaminate both the functional predictor and the scalar response component. For each contamination level $\alpha\in\{0.05,0.10\}$, we randomly select a set $\mathcal{C}\subset\{1,\ldots,n\}$ with cardinality $|\mathcal{C}|=\lfloor \alpha n\rfloor$. For observations with indices in $\mathcal{C}$, the clean functional predictor is replaced by $\widetilde{\mathcal{X}}_i(t)=\sum_{j=1}^{5}\widetilde{\kappa}_{ij}\upsilon_j(t)$, where $\widetilde{\kappa}_{ij}\sim \mathcal{N}(1,4j^{-3/2})$ independently. This produces atypical curves of the same general form as the clean process but shifted away from the bulk of the sample. Correspondingly, for $i\in\mathcal{C}$, we define $\widetilde{m}_i=\int_0^1 \widetilde{\X}_i(t)\beta(t) dt$. In addition, the scalar response contamination is introduced through the error term. For contaminated observations, we generate $\widetilde{\varepsilon}_i\sim \mathcal{N}(m_{\varepsilon},1)$, where the contamination mean takes one of the two values $m_{\varepsilon}\in\{10,20\}$. The larger mean creates more extreme vertical outliers, and is therefore expected to highlight the advantage of the redescending estimators relative to monotone-loss robust procedures.

Hence, defining
\begin{equation*}
m_i^{\ast}=
\begin{cases}
\widetilde{m}_i, & i\in\mathcal{C},\\
m_i, & i\notin\mathcal{C},
\end{cases}
\qquad
\varepsilon_i^{\ast}=
\begin{cases}
\widetilde{\varepsilon}_i, & i\in\mathcal{C},\\
\varepsilon_i, & i\notin\mathcal{C},
\end{cases}
\end{equation*}
and writing $\mathbf{m}^{\ast}=(m_1^{\ast},\ldots,m_n^{\ast})^\top$ and $\boldsymbol{\varepsilon}^{\ast}=(\varepsilon_1^{\ast},\ldots,\varepsilon_n^{\ast})^\top$, the contaminated response vector is generated as
\begin{equation*}
\mathbf{Y}=(\mathbf{I}_n-\vartheta\mathbf{W})^{-1}(\mathbf{m}^{\ast}+\boldsymbol{\varepsilon}^{\ast}).
\end{equation*}
Therefore, the contaminated setting contains both functional leverage points and vertical outliers in the response.
\end{inparaenum}

For each configuration, the simulation experiment is repeated $B=500$ times. The estimation performance of the competing procedures is evaluated using root mean squared error (RMSE) for $\widehat{\vartheta}$ and $\widehat{\sigma}$ ($\operatorname{RMSE}(\widehat{\vartheta})$ and $\operatorname{RMSE}(\widehat{\sigma})$), and integrated mean squared error (IMSE) for $\widehat{\beta}$ ($\operatorname{IMSE}(\widehat{\beta}))$. Specifically, for $\widehat{\theta}\in\{\widehat{\vartheta},\widehat{\sigma}\}$, we compute
\begin{equation*}
\operatorname{RMSE}(\widehat{\theta}) = \left\{ \frac{1}{B}\sum_{s=1}^{B} \bigl(\widehat{\theta}^{(s)}-\theta_0\bigr)^2
\right\}^{1/2},
\end{equation*}
where $\widehat{\theta}^{(s)}$ denotes the estimate obtained from the $s^{\textsuperscript{th}}$ Monte Carlo replication and $\theta_0$ is the corresponding true parameter value. For the slope function, we compute
\begin{equation*}
\operatorname{IMSE}(\widehat{\beta}) = \frac{1}{B}\sum_{s=1}^{B}
\int_0^1 \left\{ \widehat{\beta}^{(s)}(t)-\beta(t)
\right\}^{2} dt,
\end{equation*}
where $\widehat{\beta}^{(s)}(t)$ is the estimated slope function from the $s^{\textsuperscript{th}}$ replication.

To assess prediction performance, for each replication and each simulation configuration, we generate an independent test sample of size $n_{\text{test}}=1000$. The test sample is generated from the same scenario as the training sample, using an independently generated spatial weight matrix $\mathbf{W}^{\text{new}}$ constructed in the same manner. Let $\X_i^{\text{new}}(t)$ and $Y_i^{\text{new}}$ denote the functional predictor and scalar response for the $i^{\textsuperscript{th}}$ test observation, respectively. The corresponding out-of-sample predicted response is obtained by
\begin{equation*}
\widehat{Y}_i^{\text{new}} = \left[ (\mathbf{I}_{n_{\text{test}}}-\widehat{\vartheta}\mathbf{W}^{\text{new}})^{-1} \widehat{\mathbf{m}}^{\text{new}}
\right]_i,
\end{equation*}
where
\begin{equation*}
\widehat{\mathbf{m}}^{\text{new}} = \left( \int_0^1 \X_1^{\text{new}}(t) \widehat{\beta}(t) dt, \ldots, \int_0^1 \X_{n_{\text{test}}}^{\text{new}}(t)\widehat{\beta}(t) dt \right)^\top.
\end{equation*}
The predictive performance is then summarized by the mean squared prediction error (MSPE)
\begin{equation*}
\operatorname{MSPE} = \frac{1}{Bn_{\text{test}}} \sum_{s=1}^{B}\sum_{i=1}^{n_{\text{test}}} \left( \widehat{Y}_{i}^{(s),\text{new}}-Y_{i}^{(s),\text{new}} \right)^2.
\end{equation*}

The results recorded from our Monte Carlo simulations are presented in Tables~\ref{tab:tab1}, \ref{tab:tab2_5}, and \ref{tab:tab2_10}. Table~\ref{tab:tab1} presents the results for the clean-data setting, whereas Tables~\ref{tab:tab2_5} and \ref{tab:tab2_10} report the results under the $5\%$ and $10\%$ contamination settings, respectively. Overall, the numerical findings clearly confirm the main motivation of the proposed methodology: when the data are generated without contamination, all procedures perform similarly, and the gain from robustification is naturally limited; however, once functional leverage points and vertical outliers are introduced, the proposed redescending FCSAR estimators become markedly superior to both the classical FPCA-based likelihood approach and the RFPCA-Huber estimator.
\begin{table}[!htb]
\centering
\caption{\small Monte Carlo summary measures for the clean-data setting: mean values of $\mathrm{RMSE}(\widehat{\vartheta})$, $\mathrm{RMSE}(\widehat{\sigma})$, $\mathrm{IMSE}(\widehat{\beta})$, and $\mathrm{MSPE}$ over $500$ replications. Results are presented for $n\in\{100,250,500\}$ and $\vartheta\in\{0.25,0.50,0.75\}$. Standard errors are given in parentheses for $\mathrm{IMSE}(\widehat{\beta})$ and $\mathrm{MSPE}$.}
\label{tab:tab1}
\tabcolsep 0.085in
\renewcommand{\arraystretch}{0.9}
\begin{small}
\begin{tabular}{@{}llccccccccccc@{}}
\toprule
& & \multicolumn{3}{c}{$\vartheta=0.25$}
& & \multicolumn{3}{c}{$\vartheta=0.50$}
& & \multicolumn{3}{c}{$\vartheta=0.75$} \\
\cmidrule(lr){3-5} \cmidrule(lr){7-9} \cmidrule(lr){11-13}
Metric & Method
& $n=100$ & $250$ & $500$
& & $100$ & $250$ & $500$
& & $100$ & $250$ & $500$ \\
\midrule

RMSE($\widehat{\vartheta}$) & FPCA & 0.15 & 0.08 & 0.09 && 0.12 & 0.09 &                              0.07 && 0.12 & 0.08 & 0.05 \\
                            & RFPCA & 0.15 & 0.09 & 0.09 && 0.12 & 0.09 & 0.07 && 0.17 & 0.08 & 0.05 \\
                            & FCSAR-D\textsuperscript{*} & 0.15 & 0.09 & 0.09 && 0.14 & 0.09 & 0.07 && 0.11 & 0.08 & 0.05 \\
                            & FCSAR-A\textsuperscript{*} & 0.15 & 0.09 & 0.09 && 0.13 & 0.09 & 0.07 && 0.11 & 0.08 & 0.05 \\
                            & FCSAR-D & 0.16 & 0.09 & 0.09 && 0.12 & 0.09 & 0.07 && 0.12 & 0.08 & 0.05 \\
                            & FCSAR-A & 0.15 & 0.09 & 0.09 && 0.12 & 0.09 & 0.07 && 0.12 & 0.08 & 0.05 \\

\cmidrule(lr){2-13}

RMSE($\widehat{\sigma}$)    & FPCA & 0.07 & 0.05 & 0.04 && 0.08 & 0.05 &                             0.04 && 0.07 & 0.05 & 0.04 \\
                            & RFPCA & 0.07 & 0.04 & 0.04 && 0.08 & 0.04 & 0.04 && 0.11 & 0.05 & 0.04 \\
                            & FCSAR-D\textsuperscript{*} & 0.13 & 0.07 & 0.05 && 0.12 & 0.08 & 0.05 && 0.13 & 0.08 & 0.05 \\
                            & FCSAR-A\textsuperscript{*} & 0.13 & 0.07 & 0.05 && 0.12 & 0.08 & 0.05 && 0.14 & 0.08 & 0.05 \\           & FCSAR-D & 0.08 & 0.07 & 0.07 && 0.09 & 0.08 & 0.07 && 0.10 & 0.08 & 0.08 \\
                            & FCSAR-A & 0.10 & 0.07 & 0.07 && 0.12 & 0.07 & 0.06 && 0.12 & 0.07 & 0.06 \\

\cmidrule(lr){2-13}

IMSE($\widehat{\beta}$)     & FPCA    & 0.18 & 0.17 & 0.16 && 0.17 &                             0.16 & 0.16 && 0.17 & 0.17 & 0.17 \\
                            &         & (0.05) & (0.05) & (0.05) && (0.05) & (0.05) & (0.05) && (0.05) & (0.05) & (0.05) \\
                            & RFPCA   & 0.16 & 0.15 & 0.13 && 0.16 & 0.15 & 0.13 && 0.17 & 0.14 & 0.15 \\
                            &         & (0.07) & (0.07) & (0.06) && (0.08) & (0.06) & (0.06) && (0.09) & (0.07) & (0.07) \\
                            & FCSAR-D\textsuperscript{*} & 0.16 & 0.15 & 0.13 && 0.16 & 0.15 & 0.13 && 0.17 & 0.14 & 0.15 \\
                            &         & (0.08) & (0.07) & (0.06) && (0.08) & (0.06) & (0.06) && (0.09) & (0.07) & (0.07) \\
                            & FCSAR-A\textsuperscript{*} & 0.16 & 0.15 & 0.13 && 0.16 & 0.15 & 0.13 && 0.17 & 0.14 & 0.15 \\
                            &         & (0.07) & (0.07) & (0.06) && (0.08) & (0.06) & (0.06) && (0.09) & (0.17) & (0.07) \\
                            & FCSAR-D & 0.16 & 0.15 & 0.13 && 0.16 & 0.15 & 0.13 && 0.17 & 0.14 & 0.15 \\
                            &         & (0.07) & (0.07) & (0.06) && (0.08) & (0.06) & (0.06) && (0.09) & (0.07) & (0.07) \\
                            & FCSAR-A & 0.16 & 0.15 & 0.13 && 0.16 & 0.15 & 0.13 && 0.17 & 0.14 & 0.15 \\
                            &         & (0.07) & (0.07) & (0.06) && (0.08) & (0.06) & (0.06) && (0.09) & (0.07) & (0.07) \\

\cmidrule(lr){2-13}

MSPE                        & FPCA    & 1.15 & 1.06 & 1.07 && 1.23 &                             1.14 & 1.09 && 2.04 & 1.35 & 1.23 \\
                            &         & (0.19) & (0.10) & (0.07) && (0.32) & (0.13) & (0.09) && (2.62) & (0.30) & (0.20) \\
                            & RFPCA   & 1.13 & 1.04 & 1.05 && 1.23 & 1.13 & 1.07 && 2.33 & 1.35 & 1.22 \\
                            &         & (0.20) & (0.09) & (0.07) && (0.31) & (0.13) & (0.08) && (2.62) & (0.36) & (0.20) \\
                            & FCSAR-D\textsuperscript{*} & 1.16 & 1.05 & 1.05 && 1.39 & 1.14 & 1.08 && 2.62 & 1.63 & 1.30 \\
                            &         & (0.32) & (0.09) & (0.07) && (0.88) & (0.14) & (0.09) && (3.37) & (1.32) & (0.38) \\
                            & FCSAR-A\textsuperscript{*} & 1.16 & 1.05 & 1.05 && 1.37 & 1.14 & 1.08 && 2.74 & 1.94 & 1.31 \\
                            &         & (0.36) & (0.09) & (0.07) && (0.89) & (0.14) & (0.08) && (3.88) & (3.79) & (0.44) \\
                            & FCSAR-D & 1.13 & 1.05 & 1.05 && 1.24 & 1.13 & 1.08 && 2.06 & 1.34 & 1.22 \\
                            &         & (0.20) & (0.09) & (0.07) && (0.34) & (0.13) & (0.08) && (2.59) & (0.31) & (0.22) \\
                            & FCSAR-A & 1.13 & 1.04 & 1.05 && 1.23 & 1.13 & 1.08 && 2.07 & 1.36 & 1.22    \\
                            &         & (0.20) & (0.09) & (0.07) && (0.34) & (0.13) & (0.08) && (1.91) & (0.40) & (0.20)  \\

\bottomrule
\end{tabular}
\end{small}
\end{table}

We first consider the uncontaminated setting in Table~\ref{tab:tab1}. In this case, all six procedures perform similarly in the estimation of the spatial dependence parameter, and the values of $\mathrm{RMSE}(\widehat{\vartheta})$ decrease steadily as the sample size increases. This indicates that, in the absence of contamination, neither the redescending losses nor the Fisher-consistent bias correction introduces any appreciable loss in first-order estimation accuracy for the spatial parameter. A similar conclusion holds for prediction: the $\mathrm{MSPE}$ values are very close across methods, especially for moderate and large sample sizes, so the proposed procedures do not sacrifice predictive performance under the reference model. The main differences in Table~\ref{tab:tab1} appear in scale estimation. Because the data are Gaussian and uncontaminated, the classical FPCA-based likelihood estimator and, in several configurations, the RFPCA-Huber estimator achieve slightly smaller values of $\mathrm{RMSE}(\widehat{\sigma})$ than the redescending procedures. This reflects the expected robustness-efficiency trade-off: when no outliers are present, aggressive downweighting is unnecessary and may induce a mild efficiency loss. Importantly, however, this loss remains limited. The non-Fisher-consistent redescending benchmarks FCSAR-D\textsuperscript{*} and FCSAR-A\textsuperscript{*} behave similarly to their Fisher-consistent counterparts in the uncontaminated regime, while the values of $\mathrm{IMSE}(\widehat{\beta})$ and $\mathrm{MSPE}$ remain very close across all redescending procedures. Thus, Table~\ref{tab:tab1} shows that the proposed FCSAR estimators remain highly competitive under the reference model, with only a modest loss in scale efficiency and essentially no degradation in estimation of $\vartheta$, recovery of $\beta(\cdot)$, or prediction.

The situation changes substantially once contamination is introduced. Table~\ref{tab:tab2_5} reports the results for the $5\%$ contamination setting. Even under this moderate contamination level, the classical FPCA-based likelihood estimator deteriorates sharply, and the deterioration is especially pronounced for $\mathrm{RMSE}(\widehat{\sigma})$, $\mathrm{IMSE}(\widehat{\beta})$, and $\mathrm{MSPE}$. The robust two-stage RFPCA-Huber procedure is much more stable than FPCA, confirming the benefit of robust functional dimension reduction together with bounded-loss spatial estimation. Nevertheless, Table~\ref{tab:tab2_5} also shows that this is insufficient to fully control the impact of contamination.

The non-Fisher-consistent redescending benchmarks, FCSAR-D\textsuperscript{*} and FCSAR-A\textsuperscript{*}, provide an informative intermediate comparison. Relative to RFPCA-Huber, they already improve markedly in most contaminated configurations, showing that the use of redescending losses in the second-stage spatial estimator is itself highly beneficial. However, the proposed Fisher-consistent procedures, FCSAR-D and FCSAR-A, generally yield an additional improvement over their non-Fisher-consistent counterparts, particularly for scale estimation, slope recovery, and prediction. This pattern supports the methodological claim that redescending estimation and Fisher-consistent bias correction play complementary roles in contaminated SSoFRM problems.

\begin{center}
\tabcolsep 0.07in
\renewcommand{\arraystretch}{0.9}
\begin{small}
\begin{longtable}{@{}lllccccccccccc@{}}
\caption{\small Monte Carlo mean values of $\mathrm{RMSE}(\widehat{\vartheta})$, $\mathrm{RMSE}(\widehat{\sigma})$, $\mathrm{IMSE}(\widehat{\beta})$, and $\mathrm{MSPE}$ under the $5\%$ contamination setting. Results are based on $500$ replications for $n\in\{100,250,500\}$ and $\vartheta\in\{0.25,0.50,0.75\}$, with contaminated response errors generated from $\widetilde{\varepsilon}\sim N(\mu_{\widetilde{\varepsilon}},1)$ for $\mu_{\widetilde{\varepsilon}}\in\{10,20\}$. Standard errors are reported in parentheses for $\mathrm{IMSE}(\widehat{\beta})$ and $\mathrm{MSPE}$. Here, FCSAR-D and FCSAR-A denote the proposed FCSAR estimators based on Danish and Andrews losses, respectively.}
\label{tab:tab2_5} \\
\toprule
& & & \multicolumn{3}{c}{$\vartheta=0.25$}
& & \multicolumn{3}{c}{$\vartheta=0.50$}
& & \multicolumn{3}{c}{$\vartheta=0.75$} \\
\cmidrule(lr){4-6} \cmidrule(lr){8-10} \cmidrule(lr){12-14}
$\mu_{\widetilde{\varepsilon}}$ & Metric & Method
& $n=100$ & $250$ & $500$
& & $100$ & $250$ & $500$
& & $100$ & $250$ & $500$ \\
\midrule
\endfirsthead
\toprule
& & & \multicolumn{3}{c}{$\vartheta=0.25$}
& & \multicolumn{3}{c}{$\vartheta=0.50$}
& & \multicolumn{3}{c}{$\vartheta=0.75$} \\
\cmidrule(lr){4-6} \cmidrule(lr){8-10} \cmidrule(lr){12-14}
$\mu_{\widetilde{\varepsilon}}$ & Metric & Method
& $n=100$ & $250$ & $500$
& & $100$ & $250$ & $500$
& & $100$ & $250$ & $500$ \\
\midrule
\endhead
\midrule
\multicolumn{14}{r}{Continued on next page} \\
\endfoot
\endlastfoot
\multirow[t]{24}{*}{$10$}
& \multirow[t]{4}{*}{RMSE($\widehat{\vartheta}$)}
& FPCA & 0.25 & 0.17 & 0.15 && 0.23 & 0.17 & 0.12 && 0.22 & 0.13 & 0.09 \\
& & RFPCA & 0.16 & 0.10 & 0.07 && 0.13 & 0.09 & 0.06 && 0.42 & 0.07 & 0.06 \\
& & FCSAR-D\textsuperscript{*} & 0.14 & 0.09 & 0.07 && 0.11 & 0.07 & 0.06 && 0.10 & 0.06 & 0.05 \\
& & FCSAR-A\textsuperscript{*} & 0.14 & 0.09 & 0.06 && 0.11 & 0.07 & 0.06 && 0.10 & 0.06 & 0.05 \\
& & FCSAR-D & 0.14 & 0.10 & 0.07 && 0.11 & 0.07 & 0.06 && 0.09 & 0.06 & 0.05 \\
& & FCSAR-A & 0.14 & 0.10 & 0.07 && 0.12 & 0.07 & 0.06 && 0.10 & 0.07 & 0.05 \\

\cmidrule(lr){2-14}
& \multirow[t]{4}{*}{RMSE($\widehat{\sigma}$)}
& FPCA & 1.39 & 1.32 & 1.36 && 1.36 & 1.34 & 1.38 && 1.35 & 1.34 & 1.37 \\
& & RFPCA & 0.19 & 0.18 & 0.20 && 0.20 & 0.19 & 0.20 && 0.39 & 0.19 & 0.21 \\
& & FCSAR-D\textsuperscript{*} & 0.15 & 0.09 & 0.08 && 0.14 & 0.10 & 0.09 && 0.15 & 0.08 & 0.07 \\
& & FCSAR-A\textsuperscript{*} & 0.15 & 0.09 & 0.08 && 0.13 & 0.10 & 0.09 && 0.15 & 0.08 & 0.06 \\
& & FCSAR-D & 0.08 & 0.06 & 0.04 && 0.08 & 0.06 & 0.05 && 0.09 & 0.06 & 0.04 \\
& & FCSAR-A & 0.15 & 0.12 & 0.10 && 0.14 & 0.11 & 0.10 && 0.15 & 0.11 & 0.10 \\

\cmidrule(lr){2-14}
& \multirow[t]{8}{*}{IMSE($\widehat{\beta}$)}
& FPCA & 0.69 & 0.72 & 0.80 && 0.67 & 0.67 & 0.75 && 0.85 & 0.67 & 0.77 \\
& & & (0.56) & (0.33) & (0.27) && (0.51) & (0.32) & (0.25) && (0.64) & (0.37) & (0.29) \\
& & RFPCA & 0.21 & 0.14 & 0.14 && 0.19 & 0.14 & 0.14 && 0.23 & 0.16 & 0.14 \\
& & & (0.12) & (0.06) & (0.05) && (0.12) & (0.05) & (0.05) && (0.27) & (0.06) & (0.05)  \\
& & FCSAR-D\textsuperscript{*} & 0.15 & 0.12 & 0.10 && 0.15 & 0.12 & 0.11 && 0.15 & 0.13 & 0.10 \\
& & & (0.09) & (0.06) & (0.05) && (0.08) & (0.05) & (0.05) && (0.07) & (0.06) & (0.05) \\
& & FCSAR-A\textsuperscript{*} & 0.15 & 0.12 & 0.10 && 0.15 & 0.12 & 0.11 && 0.15 & 0.13 & 0.10 \\
& & & (0.09) & (0.06) & (0.05) && (0.08) & (0.05) & (0.05) && (0.07) & (0.06) & (0.05) \\
& & FCSAR-D & 0.15 & 0.12 & 0.10 && 0.15 & 0.12 & 0.11 && 0.15 & 0.13 & 0.10 \\
& & & (0.09) & (0.06) & (0.05) && (0.08) & (0.05) & (0.05) && (0.07) & (0.06) & (0.05) \\
& & FCSAR-A & 0.15 & 0.12 & 0.10 && 0.15 & 0.12 & 0.11 && 0.15 & 0.13 & 0.10 \\
& & & (0.09) & (0.06) & (0.05) && (0.08) & (0.05) & (0.05) && (0.07) & (0.06) & (0.05) \\

\cmidrule(lr){2-14}
& \multirow[t]{8}{*}{MSPE}
& FPCA & 2.11 & 1.90 & 1.93 && 3.04 & 2.40 & 2.44 && 7.13 & 4.98 & 5.32 \\
& & & (0.50) & (0.30) & (0.22) && (1.33) & (0.54) & (0.33) && (5.07) & (1.99) & (1.57) \\
& & RFPCA & 1.20 & 1.11 & 1.09 && 1.40 & 1.18 & 1.15 && 5.34 & 1.64 & 1.55 \\
& & & (0.23) & (0.12) & (0.07) && (0.38) & (0.16) & (0.12) && (8.16) & (0.64) & (0.52) \\
& & FCSAR-D\textsuperscript{*} & 1.15 & 1.08 &  1.04 && 1.71 & 1.15 & 1.10 && 6.48 & 2.31 & 1.48 \\
& & & (0.22) & (0.13) & (0.07) && (2.72) & (0.21) & (0.12) && (13.24) & (3.58) & (0.88) \\
& & FCSAR-A\textsuperscript{*} & 1.15 & 1.08 & 1.04 && 1.61 & 1.15 & 1.10 && 5.39 & 2.10 & 1.48 \\
& & & (0.22) & (0.13) & (0.06) && (2.26) & (0.22) & (0.11) && (10.06) & (2.25) & (0.86) \\
& & FCSAR-D & 1.13 & 1.08 & 1.04 && 1.39 & 1.13 & 1.09 && 2.60 & 1.63 & 1.35 \\
& & & (0.20) & (0.11) & (0.06) && (0.68) & (0.14) & (0.10) && (2.48) & (0.93) & (0.52) \\
& & FCSAR-A & 1.14 & 1.08 & 1.04 && 1.39 & 1.14 & 1.09 && 2.52 & 1.61 & 1.34 \\
& & & (0.20) & (0.12) & (0.06) && (0.62) & (0.15) & (0.09) && (2.09) & (0.82) & (0.47) \\

\midrule

\multirow[t]{24}{*}{$20$}
& \multirow[t]{4}{*}{RMSE($\widehat{\vartheta}$)}
& FPCA & 0.29 & 0.23 & 0.18 && 0.29 & 0.21 & 0.13 && 0.27 & 0.17 & 0.10 \\
& & RFPCA & 0.13 & 0.13 & 0.11 && 0.11 & 0.09 & 0.05 && 0.59 & 0.61 & 0.66 \\
& & FCSAR-D\textsuperscript{*} & 0.08 & 0.07 & 0.04 && 0.09 & 0.06 & 0.05 && 0.07 & 0.05 & 0.03 \\
& & FCSAR-A\textsuperscript{*} & 0.08 & 0.07 & 0.05 && 0.09 & 0.06 & 0.05 && 0.07 &  0.05 & 0.03 \\
& & FCSAR-D & 0.08 & 0.07 & 0.05 && 0.09 & 0.06 & 0.05 && 0.07 & 0.05 & 0.03 \\
& & FCSAR-A & 0.09 & 0.07 & 0.05 && 0.09 & 0.06 & 0.05 && 0.08 & 0.05 & 0.03 \\

\cmidrule(lr){2-14}
& \multirow[t]{4}{*}{RMSE($\widehat{\sigma}$)}
& FPCA & 3.31 & 3.25 & 3.35 && 3.29 & 3.25 & 3.34 && 3.32 & 3.25 & 3.33 \\
& & RFPCA & 0.18 & 0.21 & 0.21 && 0.20 & 0.19 & 0.21 && 0.78 & 0.65 & 0.74 \\
& & FCSAR-D\textsuperscript{*} & 0.13 & 0.10 & 0.10 && 0.13 & 0.09 & 0.09 && 0.13 & 0.08 & 0.07 \\
& & FCSAR-A\textsuperscript{*} & 0.14 & 0.10 & 0.10 && 0.13 & 0.09 & 0.09 && 0.13 & 0.08 & 0.07 \\
& & FCSAR-D & 0.09 & 0.06 & 0.05 && 0.08 & 0.06 & 0.05 && 0.09 & 0.06 & 0.05 \\
& & FCSAR-A & 0.15 & 0.11 & 0.10 && 0.15 & 0.11 & 0.10 && 0.14 & 0.11 & 0.10 \\

\cmidrule(lr){2-14}
& \multirow[t]{8}{*}{IMSE($\widehat{\beta}$)}
& FPCA & 2.58 & 2.41 & 2.66 && 2.88 & 2.67 & 2.77 && 2.86 & 2.75 & 2.84 \\
& & & (2.00) & (1.36) & (1.13) && (2.67) & (1.41) & (1.04) && (2.38) & (1.65) & (1.17) \\
& & RFPCA & 0.22 & 0.15 & 0.14 && 0.19 & 0.14 & 0.13 && 0.32 & 0.20 & 0.21 \\
& & & (0.21) & (0.07) & (0.05) && (0.10) & (0.06) & (0.05) && (0.37) & (0.17) & (0.14) \\
& & FCSAR-D\textsuperscript{*} & 0.15 & 0.12 & 0.11 && 0.14 & 0.11 & 0.10 && 0.15 & 0.13 & 0.10 \\
& & & (0.09) & (0.06) & (0.06) && (0.08) & (0.06) & (0.05) && (0.08) & (0.07) & (0.06) \\
& & FCSAR-A\textsuperscript{*} & 0.15 & 0.12 & 0.11 && 0.14 & 0.11 & 0.10 && 0.15 & 0.13 & 0.10 \\
& & & (0.10) & (0.06) & (0.06) && (0.08) & (0.05) &  (0.05) && (0.08) & (0.07) & (0.06) \\
& & FCSAR-D & 0.15 & 0.12 & 0.11 && 0.14 & 0.11 & 0.10 && 0.15 & 0.13 & 0.10 \\
& & & (0.10) & (0.06) & (0.06) && (0.08) & (0.06) & (0.05) && (0.08) & (0.07) & (0.06) \\
& & FCSAR-A & 0.15 & 0.12 & 0.11 && 0.14 & 0.11 & 0.10 && 0.15 & 0.13 & 0.10 \\
& & & (0.10) & (0.06) & (0.06) && (0.08) & (0.05) & (0.05) && (0.08) & (0.07) & (0.06) \\

\cmidrule(lr){2-14}
& \multirow[t]{8}{*}{MSPE}
& FPCA & 4.77 & 4.21 & 4.23 && 8.46 & 6.52 & 6.46 && 19.41 & 17.55 & 17.87 \\
& & & (1.71) & (0.81) & (0.70) && (11.75) & (2.73) & (1.02) && (7.77) & (4.96) & (3.98) \\
& & RFPCA & 1.25 & 1.14 & 1.12 && 1.50 & 1.30 & 1.17 && 12.62 & 8.74 & 10.27 \\
& & & (0.31) & (0.16) & (0.10) && (0.63) & (0.40) & (0.17) && (11.39) & (7.82) & (6.99) \\
& & FCSAR-D\textsuperscript{*} & 1.13 & 1.07 & 1.05 && 1.54 & 1.16 & 1.09 && 4.38 & 2.16 & 1.63 \\
& & & (0.17) & (0.12) & (0.06) && (1.49) & (0.28) & (0.13) && (10.99) & (2.05) & (0.89) \\
& & FCSAR-A\textsuperscript{*} & 1.14 & 1.06 & 1.05 && 1.56 & 1.16 & 1.09 && 7.38 & 2.14 & 1.62 \\
& & & (0.18) & (0.11) & (0.06) && (1.63) & (0.28) & (0.13) && (39.83) & (1.99) & (0.79) \\
& & FCSAR-D & 1.13 & 1.06 & 1.05 && 1.43 & 1.16 & 1.09 && 3.04 & 1.77 & 1.46 \\
& & & (0.17) & (0.11) & (0.06) && (1.02) & (0.28) & (0.14) && (4.28) & (1.03) & (0.50) \\
& & FCSAR-A & 1.13 & 1.06 & 1.05 && 1.43 & 1.16 & 1.09 && 3.00 & 1.84 & 1.50 \\
& & & (0.18) & (0.11) & (0.06) && (0.90) & (0.28) & (0.14) && (3.15) & (1.13) & (0.53) \\
\bottomrule
\end{longtable}
\end{small}
\end{center}

\vspace{-.2in}

A more detailed inspection of Table~\ref{tab:tab2_5} reveals two further patterns. First, when $\mu_{\widetilde{\varepsilon}}=10$, the two proposed Fisher-consistent redescending estimators behave quite similarly, and the same is true for their non-Fisher-consistent counterparts. In this moderate contamination regime, the differences between Danish and Andrews losses are not dramatic, although FCSAR-D is sometimes marginally better for $\mathrm{RMSE}(\widehat{\vartheta})$ or $\mathrm{MSPE}$, while FCSAR-A can show a slight advantage for $\mathrm{RMSE}(\widehat{\sigma})$ in some configurations. More importantly, both FCSAR-D and FCSAR-A generally improve upon FCSAR-D\textsuperscript{*} and FCSAR-A\textsuperscript{*}, especially when spatial dependence is stronger. Second, when $\mu_{\widetilde{\varepsilon}}=20$, the separation becomes much clearer. In this more demanding setting, the FPCA-based estimator is highly unstable, and the RFPCA-Huber estimator, while still much better than FPCA, can also deteriorate substantially, particularly when $\vartheta$ is large. The starred redescending competitors remain considerably more stable than RFPCA-Huber, but the proposed Fisher-consistent estimators typically deliver the smallest values of $\mathrm{RMSE}(\widehat{\sigma})$, $\mathrm{IMSE}(\widehat{\beta})$, and $\mathrm{MSPE}$. Thus, even at the $5\%$ contamination level, the revised benchmark set indicates that the combination of redescending loss and Fisher-consistent spatial correction yields the best overall balance between robustness in estimation and robustness in prediction.

Table~\ref{tab:tab2_10} presents the most demanding scenario, namely $10\%$ contamination. Here, the superiority of the redescending procedures becomes even more visible. The classical FPCA-based likelihood estimator is severely affected across all metrics, and the deterioration becomes dramatic when both the contamination magnitude and the spatial dependence level are high. The RFPCA-Huber estimator still improves substantially over FPCA, but its performance can become unstable in several configurations, particularly for large $\vartheta$ and large contamination magnitude. The benchmarks FCSAR-D\textsuperscript{*} and FCSAR-A\textsuperscript{*} show that replacing the Huber step by a non-Fisher-consistent redescending estimator already yields a major gain. In most settings, these two procedures are markedly more stable than RFPCA-Huber, especially for $\mathrm{RMSE}(\widehat{\sigma})$, $\mathrm{IMSE}(\widehat{\beta})$, and $\mathrm{MSPE}$. However, the proposed Fisher-consistent procedures still improve further on these starred counterparts. This gain is most evident in scale estimation and prediction for the more difficult configurations with strong spatial dependence and larger contamination magnitudes.

\begin{center}
\tabcolsep 0.07in
\renewcommand{\arraystretch}{0.9}
\begin{small}
\begin{longtable}{@{}lllccccccccccc@{}}
\caption{\small Monte Carlo mean values of $\mathrm{RMSE}(\widehat{\vartheta})$, $\mathrm{RMSE}(\widehat{\sigma})$, $\mathrm{IMSE}(\widehat{\beta})$, and $\mathrm{MSPE}$ under the $10\%$ contamination setting. Results are based on $500$ replications for $n\in\{100,250,500\}$ and $\vartheta\in\{0.25,0.50,0.75\}$, with contaminated response errors generated from $\widetilde{\varepsilon}\sim N(\mu_{\widetilde{\varepsilon}},1)$ for $\mu_{\widetilde{\varepsilon}}\in\{10,20\}$. Standard errors are reported in parentheses for $\mathrm{IMSE}(\widehat{\beta})$ and $\mathrm{MSPE}$. Here, FCSAR-D and FCSAR-A denote the proposed FCSAR estimators based on Danish and Andrews losses, respectively.}
\label{tab:tab2_10} \\
\toprule
& & & \multicolumn{3}{c}{$\vartheta=0.25$}
& & \multicolumn{3}{c}{$\vartheta=0.50$}
& & \multicolumn{3}{c}{$\vartheta=0.75$} \\
\cmidrule(lr){4-6} \cmidrule(lr){8-10} \cmidrule(lr){12-14}
$\mu_{\widetilde{\varepsilon}}$ & Metric & Method
& $n=100$ & $250$ & $500$
& & $100$ & $250$ & $500$
& & $100$ & $250$ & $500$ \\
\midrule
\endfirsthead
\toprule
& & & \multicolumn{3}{c}{$\vartheta=0.25$}
& & \multicolumn{3}{c}{$\vartheta=0.50$}
& & \multicolumn{3}{c}{$\vartheta=0.75$} \\
\cmidrule(lr){4-6} \cmidrule(lr){8-10} \cmidrule(lr){12-14}
$\mu_{\widetilde{\varepsilon}}$ & Metric & Method
& $n=100$ & $250$ & $500$
& & $100$ & $250$ & $500$
& & $100$ & $250$ & $500$ \\
\midrule
\endhead
\midrule
\multicolumn{14}{r}{Continued on next page} \\
\endfoot
\endlastfoot
\multirow[t]{24}{*}{$10$}
& \multirow[t]{4}{*}{RMSE($\widehat{\vartheta}$)}
& FPCA & 0.27 & 0.18 & 0.15 && 0.27 & 0.16 & 0.13 && 0.24 & 0.14 & 0.10 \\
& & RFPCA & 0.14 & 0.09 & 0.08 && 0.14 & 0.07 & 0.08 && 0.36 & 0.46 & 0.53 \\
& & FCSAR-D\textsuperscript{*} & 0.11 & 0.06 & 0.06 && 0.10 & 0.06 & 0.05 && 0.09 & 0.07 & 0.05 \\
& & FCSAR-A\textsuperscript{*} & 0.11 & 0.06 & 0.06 && 0.10 & 0.06 & 0.05 && 0.09 & 0.07 & 0.05 \\
& & FCSAR-D & 0.11 & 0.07 & 0.06 && 0.11 & 0.06 & 0.06 && 0.10 & 0.07 & 0.05 \\
& & FCSAR-A & 0.12 & 0.07 & 0.06 && 0.11 & 0.06 & 0.06 && 0.09 & 0.07 & 0.05 \\

\cmidrule(lr){2-14}
& \multirow[t]{4}{*}{RMSE($\widehat{\sigma}$)}
& FPCA & 1.87 & 1.88 & 1.89 && 1.90 & 1.88 & 1.88 && 1.90 & 1.87 & 1.88 \\
& & RFPCA & 0.73 & 0.74 & 0.72 && 0.78 & 0.71 & 0.71 && 1.03 & 1.04 & 0.98 \\
& & FCSAR-D\textsuperscript{*} & 0.26 & 0.17 & 0.15 && 0.27 & 0.15 & 0.15 && 0.16 & 0.13 & 0.13 \\
& & FCSAR-A\textsuperscript{*} & 0.19 & 0.17 & 0.15 && 0.21 & 0.15 & 0.15 && 0.16 & 0.13 & 0.13 \\
& & FCSAR-D & 0.22 & 0.06 & 0.04 && 0.23 & 0.05 & 0.04 && 0.32 & 0.05 & 0.04 \\
& & FCSAR-A & 0.17 & 0.15 & 0.14 && 0.17 & 0.15 & 0.14 && 0.19 & 0.14 & 0.14 \\

\cmidrule(lr){2-14}
& \multirow[t]{8}{*}{IMSE($\widehat{\beta}$)}
& FPCA & 2.69 & 2.89 & 2.80 && 2.56 & 2.95 & 2.88 && 2.78 & 3.06 & 3.01 \\
& & & (1.60) & (0.90) & (0.59) && (1.61) & (0.94) & (0.55) && (1.53) & (0.95) & (0.75) \\
& & RFPCA & 0.65 & 0.59 & 0.45 && 0.87 & 0.46 & 0.42 && 0.96 & 0.73 & 0.56 \\
& & & (0.64) & (0.80) & (0.33) && (1.43) & (0.32) & (0.20) && (1.18) & (0.69) & (0.35) \\
& & FCSAR-D\textsuperscript{*} & 0.21 & 0.09 & 0.08 && 0.37 & 0.09 & 0.08 && 0.13 & 0.09 & 0.08 \\
& & & (0.89) & (0.05) & (0.05) && (1.69) & (0.05) & (0.04) && (0.08) & (0.06) & (0.03) \\
& & FCSAR-A\textsuperscript{*} & 0.13 & 0.09 & 0.08 && 0.12 & 0.09 & 0.08 && 0.13 & 0.09 & 0.08 \\
& & & (0.08) & (0.05) & (0.05) && (0.08) & (0.05) & (0.04) && (0.08) & (0.06) & (0.03) \\
& & FCSAR-D & 0.24 & 0.09 & 0.08 && 0.39 & 0.09 & 0.08 && 0.46 & 0.09 & 0.08 \\
& & & (0.83) & (0.05) & (0.05) && (1.83) & (0.05) & (0.04) && (1.59) & (0.06) & (0.03) \\
& & FCSAR-A & 0.13 & 0.09 & 0.08 && 0.13 & 0.09 & 0.08 && 0.13 & 0.09 & 0.08  \\
& & & (0.09) & (0.05) & (0.05) && (0.08) & (0.05) & (0.04) && (0.08) & (0.06) & (0.03) \\

\cmidrule(lr){2-14}
& \multirow[t]{8}{*}{MSPE}
& FPCA & 4.35 & 4.02 & 3.86 && 6.34 & 5.53 & 5.37 && 18.84 & 14.75 & 14.81 \\
& & & (1.04) & (0.59) & (0.48) && (1.99) & (1.22) & (0.91) && (9.58) & (4.89) & (3.54) \\
& & RFPCA & 1.68 & 1.47 & 1.37 && 2.16 & 1.66 & 1.52 && 11.20 & 8.54 & 8.95 \\
& & & (0.48) & (0.28) & (0.20) && (0.98) & (0.41) & (0.37) && (13.29) & (8.73) & (9.47) \\
& & FCSAR-D\textsuperscript{*} & 1.22 & 1.06 & 1.06 && 1.71 & 1.21 & 1.15 && 6.99 & 3.25 & 2.05 \\
& & & (0.39) & (0.10) & (0.09) && (1.56) & (0.27) & (0.17) && (13.67) & (3.86) & (1.50) \\
& & FCSAR-A\textsuperscript{*} & 1.20 & 1.06 & 1.06 && 1.59 & 1.21 & 1.14 && 7.33 & 3.24 & 2.10 \\
& & & (0.33) & (0.10) & (0.09) && (1.16) & (0.26) & (0.16) && (16.12) & (3.74) & (1.69) \\
& & FCSAR-D & 1.24 & 1.06 & 1.05 && 1.56 & 1.18 & 1.13 && 3.91 & 2.21 & 1.80 \\
& & & (0.48) & (0.10) & (0.08) && (0.99) & (0.22) & (0.13) && (4.84) & (1.89) & (1.17) \\
& & FCSAR-A & 1.19 & 1.06 & 1.06 && 1.49 & 1.19 & 1.12 && 3.57 & 2.18 & 1.85 \\
& & & (0.29) & (0.10) & (0.08) && (0.68) & (0.21) & (0.13) && (4.96) & (1.82) & (1.24) \\

\midrule

\multirow[t]{24}{*}{$20$}
& \multirow[t]{4}{*}{RMSE($\widehat{\vartheta}$)}
& FPCA & 0.35 & 0.20 & 0.20 && 0.31 & 0.17 & 0.14 && 0.27 & 0.21 & 0.13 \\
& & RFPCA & 0.14 & 0.15 & 0.13 && 0.18 & 0.11 & 0.09 && 0.79 & 0.78 & 0.83 \\
& & FCSAR-D\textsuperscript{*} & 0.08 & 0.05 & 0.05 && 0.07 & 0.04 & 0.04 && 0.05 & 0.05 & 0.04 \\
& & FCSAR-A\textsuperscript{*} & 0.08 & 0.05 & 0.05 && 0.07 & 0.04 & 0.04 && 0.05 & 0.05 & 0.04 \\
& & FCSAR-D & 0.08 & 0.06 & 0.05 && 0.08 & 0.05 & 0.04 && 0.07 & 0.05 & 0.03 \\
& & FCSAR-A & 0.08 & 0.06 & 0.05 && 0.08 & 0.05 & 0.04 && 0.06 & 0.05 & 0.03 \\

\cmidrule(lr){2-14}
& \multirow[t]{4}{*}{RMSE($\widehat{\sigma}$)}
& FPCA & 4.38 & 4.41 & 4.45 && 4.45 & 4.45 & 4.43 && 4.42 & 4.42 & 4.44 \\
& & RFPCA & 0.78 & 0.77 & 0.72 && 0.83 & 0.78 & 0.73 && 2.44 & 2.07 & 2.00 \\
& & FCSAR-D\textsuperscript{*} & 0.19 & 0.17 & 0.15 && 0.18 & 0.17 & 0.15 && 0.15 & 0.13 & 0.13 \\
& & FCSAR-A\textsuperscript{*} & 0.18 & 0.17 & 0.15 && 0.17 & 0.16 & 0.15 && 0.15 & 0.13 & 0.13 \\
& & FCSAR-D & 0.09 & 0.06 & 0.04 && 0.09 & 0.06 & 0.04 && 0.42 & 0.06 & 0.04 \\
& & FCSAR-A & 0.18 & 0.15 & 0.14 && 0.18 & 0.15 & 0.13 && 0.19 & 0.15 & 0.13 \\

\cmidrule(lr){2-14}
& \multirow[t]{8}{*}{IMSE($\widehat{\beta}$)}
& FPCA & 11.42 & 11.46 & 10.97 && 10.39 & 11.13 & 11.18 && 11.43 & 11.66 & 11.77 \\
& & & (6.45) & (3.80) & (2.21) && (5.70) & (3.57) & (1.84) && (5.45) & (3.82) & (2.61) \\
& & RFPCA & 0.75 & 0.55 & 0.44 && 1.01 & 0.75 & 0.41 && 3.35 & 1.26 & 1.31 \\
& & & (0.74) & (0.49) & (0.33) && (1.94) & (3.07) & (0.17) && (6.78) & (1.31) & (1.27) \\
& & FCSAR-D\textsuperscript{*} & 0.12 & 0.09 & 0.08 && 0.12 & 0.08 & 0.07 && 0.12 & 0.10 & 0.08 \\
& & & (0.07) & (0.05) & (0.04) && (0.07) & (0.05) & (0.04) && (0.07) & (0.06) & (0.03) \\
& & FCSAR-A\textsuperscript{*} & 0.12 & 0.09 & 0.08 && 0.12 & 0.09 & 0.07 && 0.12 & 0.10 & 0.08 \\
& & & (0.07) & (0.05) & (0.04) && (0.07) & (0.05) & (0.04) && (0.08) & (0.06) & (0.03) \\
& & FCSAR-D & 0.12 & 0.09 & 0.08 && 0.12 & 0.09 & 0.07 && 0.49 & 0.10 & 0.08 \\
& & & (0.07) & (0.05) & (0.04) && (0.08) & (0.05) & (0.04) && (3.66) & (0.06) & (0.03) \\
& & FCSAR-A & 0.12 & 0.09 & 0.08 && 0.12 & 0.09 & 0.07 && 0.13 & 0.10 & 0.08 \\
& & & (0.08) & (0.05) & (0.04) && (0.08) & (0.05) & (0.04) && (0.08) & (0.06) & (0.03) \\

\cmidrule(lr){2-14}
& \multirow[t]{8}{*}{MSPE}
& FPCA & 13.23 & 12.51 & 11.87 && 20.45 & 18.70 & 18.28 && 64.54 & 55.36 & 53.55 \\
& & & (3.13) & (2.15) & (1.64) && (5.30) & (3.55) & (2.63) && (24.46) & (16.66) & (9.99) \\
& & RFPCA & 1.74 & 1.54 & 1.46 && 2.93 & 2.16 & 1.82 && 50.82 & 47.66 & 49.41 \\
& & & (0.56) & (0.33) & (0.33) && (1.64) & (1.59) & (0.82) && (30.87) & (20.48) & (16.10) \\
& & FCSAR-D\textsuperscript{*} & 1.21 & 1.09 & 1.07 && 1.83 & 1.22 & 1.16 && 7.00 & 3.77 & 2.23 \\
& & & (0.27) & (0.11) & (0.09) && (2.06) & (0.35) & (0.17) && (11.29) & (5.16) & (1.61) \\
& & FCSAR-A\textsuperscript{*} & 1.21 & 1.08 & 1.07 && 1.78 & 1.23 & 1.16 && 6.66 & 3.74 & 2.28 \\
& & & (0.27) & (0.12) & (0.09) && (1.65) & (0.36) & (0.18) && (9.03) & (4.50) & (1.52) \\
& & FCSAR-D & 1.20 & 1.09 & 1.07 && 1.68 & 1.21 & 1.16 && 5.84 & 3.21 & 2.24 \\
& & & (0.26) & (0.12) & (0.09) && (1.39) & (0.30) & (0.17) && (11.37) & (3.46) & (1.70) \\
& & FCSAR-A & 1.21 & 1.08 & 1.07 && 1.65 & 1.22 & 1.17 && 4.98 & 3.38 & 2.18 \\
& & & (0.25) & (0.12) & (0.09) && (1.21) & (0.31) & (0.17) && (5.72) & (3.59) & (1.50) \\
\bottomrule
\end{longtable}
\end{small}
\end{center}

\vspace{-.2in}

When the contamination level is $10\%$ and $\mu_{\widetilde{\varepsilon}}=10$, FCSAR-A generally emerges as the best overall procedure, with a particularly clear advantage in $\mathrm{RMSE}(\widehat{\sigma})$ and often in $\mathrm{IMSE}(\widehat{\beta})$ and $\mathrm{MSPE}$. When $\mu_{\widetilde{\varepsilon}}=20$, the strongest evidence in favor of the proposed methodology appears. In this setting, which combines a nontrivial contamination proportion, large vertical outliers, and functional leverage points, the proposed FCSAR estimators remain stable and accurate, whereas the benchmark methods degrade sharply. Among the two proposed methods, FCSAR-A is often the most reliable overall in the most severe configurations, while FCSAR-D remains clearly superior to the benchmark methods and becomes highly competitive as $n$ grows. This distinction is consistent with the fact that Andrews' loss completely rejects sufficiently extreme residuals, whereas Danish loss remains softer and can retain a small residual influence from very large outliers.

The graphical summaries in Figures~\ref{fig:Fig_1} and~\ref{fig:Fig_2} reinforce the conclusions drawn from Tables~\ref{tab:tab1},~\ref{tab:tab2_5}, and~\ref{tab:tab2_10}. Figure~\ref{fig:Fig_1} displays the true coefficient function together with the Monte Carlo mean of the estimated coefficient functions across different contamination settings and spatial dependence levels for $n=250$. In the uncontaminated case, the mean estimated coefficient functions produced by all methods track the true curve reasonably well, which is consistent with the numerical results showing only modest differences among the competitors under the reference model. Once contamination is introduced, however, the visual gap between the true coefficient function and the estimated mean functions becomes increasingly severe for the nonrobust FPCA estimator and remains visible for the RFPCA-Huber procedure. The non-Fisher-consistent redescending competitors FCSAR-D\textsuperscript{*} and FCSAR-A\textsuperscript{*} already substantially reduce this distortion, but the proposed Fisher-consistent procedures remain closest to the true coefficient function overall, especially under stronger contamination and stronger spatial dependence.

\begin{figure}[!htbp]
\centering
\includegraphics[width=6cm]{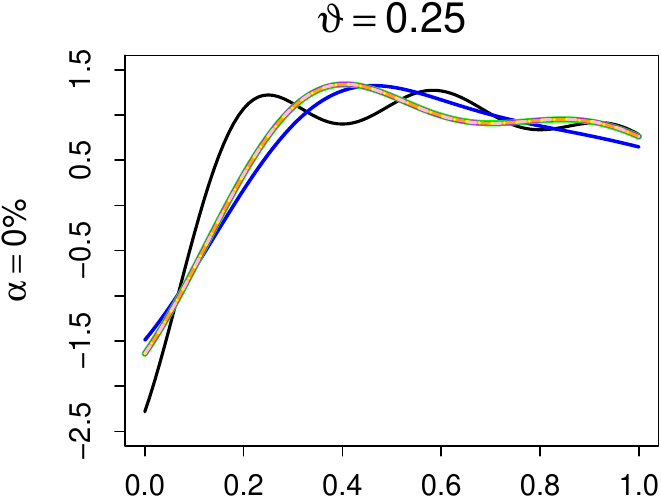}
\quad
\includegraphics[width=5.475cm]{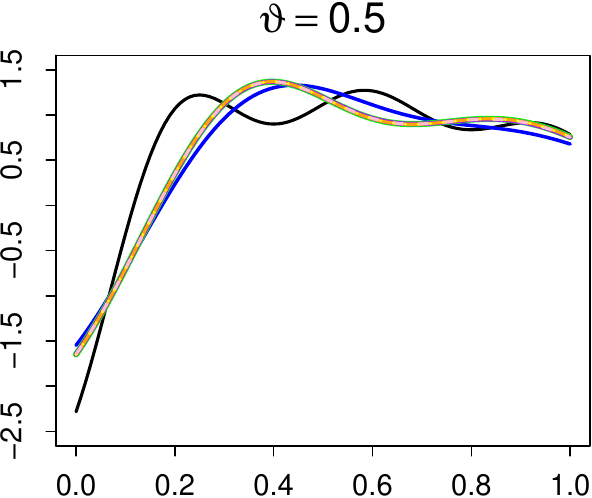}
\quad
\includegraphics[width=5.475cm]{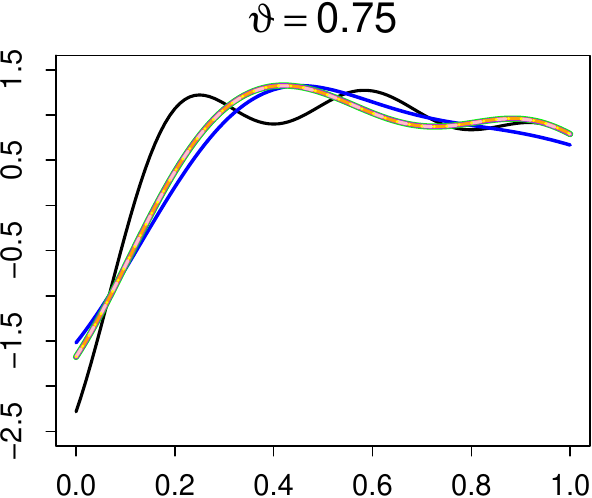}
\\
\includegraphics[width=6cm]{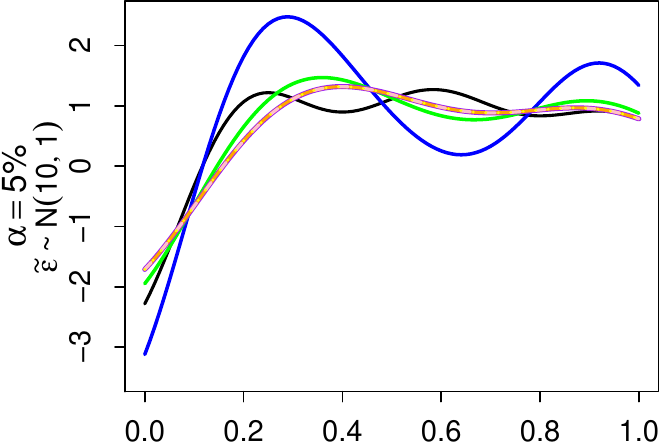}
\quad
\includegraphics[width=5.475cm]{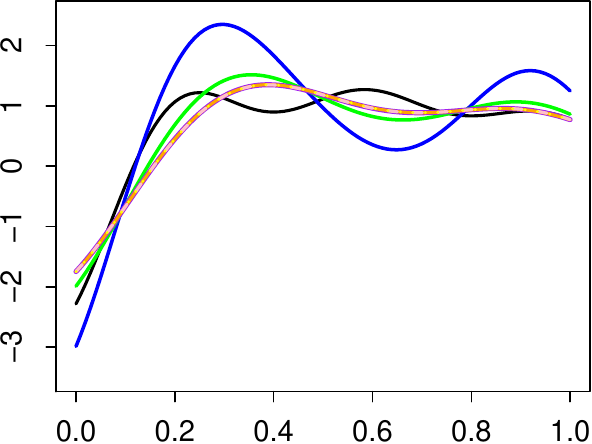}
\quad
\includegraphics[width=5.475cm]{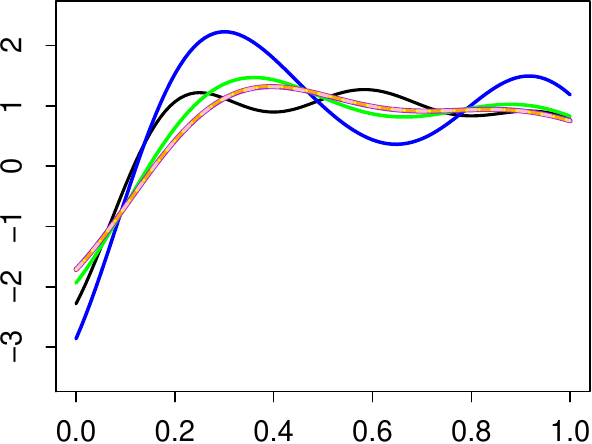}
\\
\includegraphics[width=6cm]{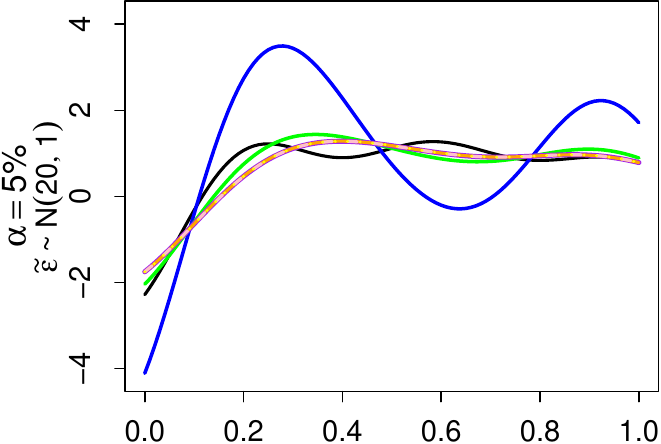}
\quad
\includegraphics[width=5.475cm]{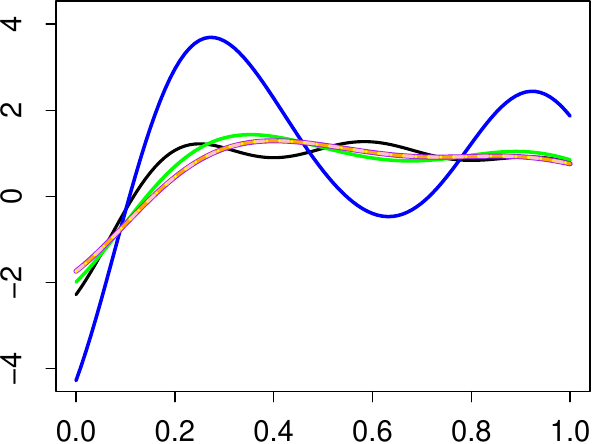}
\quad
\includegraphics[width=5.475cm]{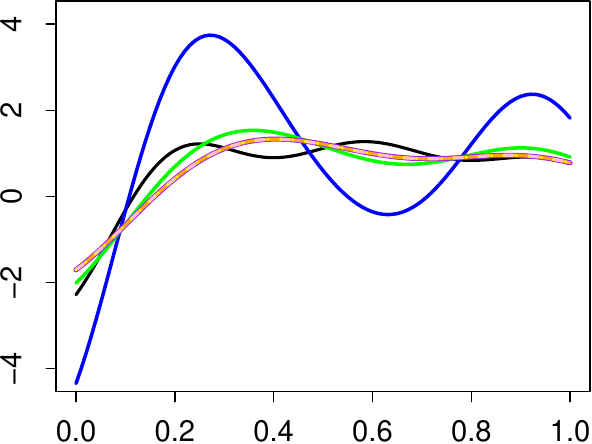}
\\
\includegraphics[width=6cm]{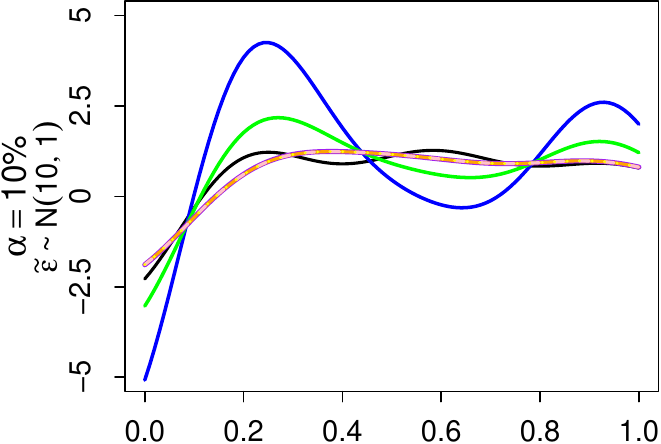}
\quad
\includegraphics[width=5.475cm]{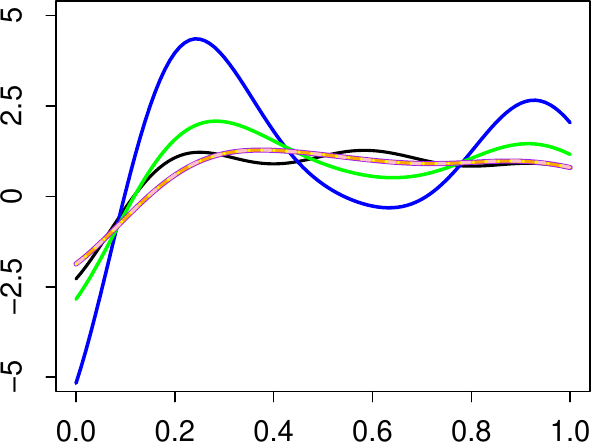}
\quad
\includegraphics[width=5.475cm]{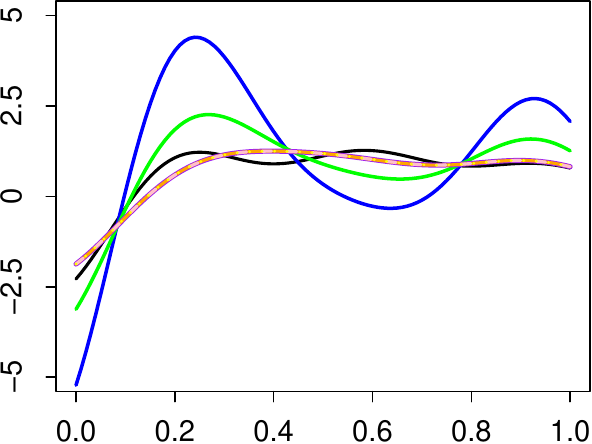}
\\
\includegraphics[width=6cm]{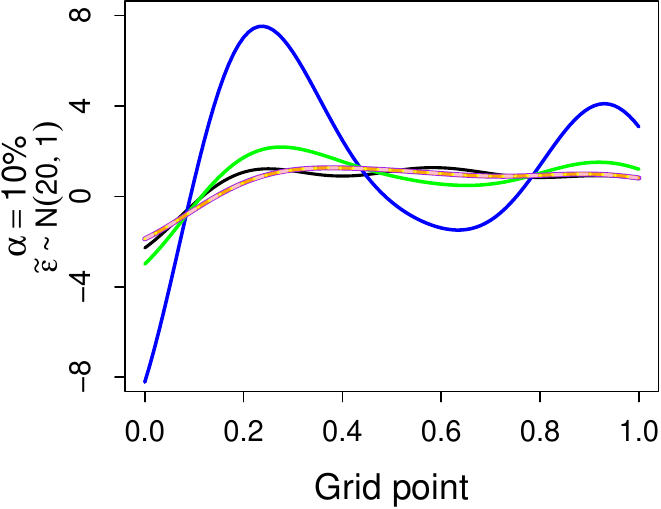}
\quad
\includegraphics[width=5.475cm]{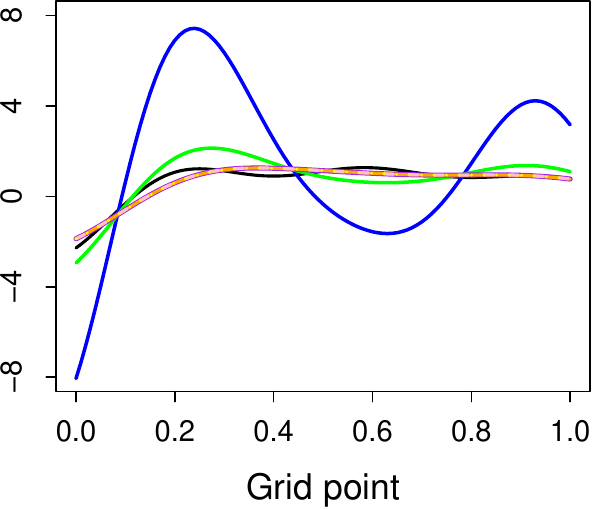}
\quad
\includegraphics[width=5.475cm]{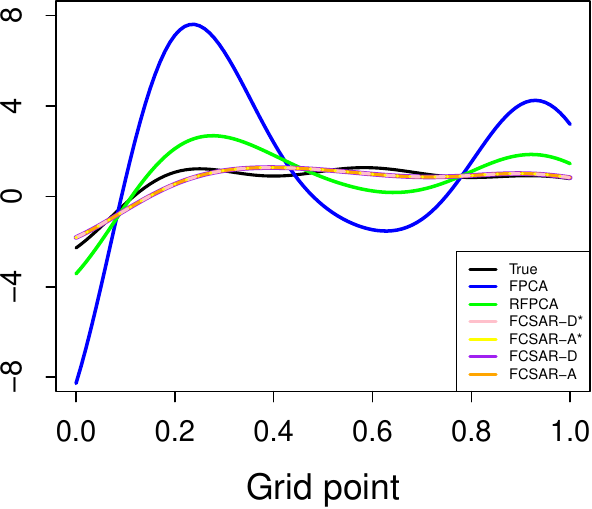}
\caption{\small{Plots of the generated true regression coefficient functions and mean functions of the estimated coefficient functions. The columns represent weak ($\vartheta=0.25$), moderate ($\vartheta=0.5$), and strong ($\vartheta=0.75$) spatial effects, respectively. The first row corresponds to the uncontaminated case; the second and third rows are generated under $5\%$ contamination ($\mu_{\widetilde{\varepsilon}}=10$ and $\mu_{\widetilde{\varepsilon}}=20$, respectively), and the fourth and fifth rows under $10\%$ contamination ($\mu_{\widetilde{\varepsilon}}=10$ and $\mu_{\widetilde{\varepsilon}}=20$, respectively).
The figure is presented for $n=250$. The dashed line is included to improve visibility where curves overlap.} \label{fig:Fig_1}}
\end{figure}

Figure~\ref{fig:Fig_2} provides a more detailed view by displaying the individual estimated coefficient functions together with their Monte Carlo mean curves for the representative setting $n=250$ and $\vartheta=0.5$. In the uncontaminated case, the estimated curves from all procedures are relatively concentrated around the true coefficient function. Under contamination, the FPCA-based estimator shows the largest dispersion and the clearest systematic departure of its mean curve from the target, indicating both instability and bias. The RFPCA-Huber estimator improves substantially upon FPCA, while the starred redescending competitors show a further reduction in dispersion. The proposed FCSAR-D and FCSAR-A estimators produce the tightest clustering around the true coefficient function and the closest mean curves across the contaminated settings. The visual comparison also suggests that the Andrews-based version provides the strongest protection in the most severe contamination scenarios, whereas the Danish-based version remains highly competitive and often behaves similarly when contamination is moderate.

\begin{figure}[!htbp]
\centering
\includegraphics[width=3.1cm]{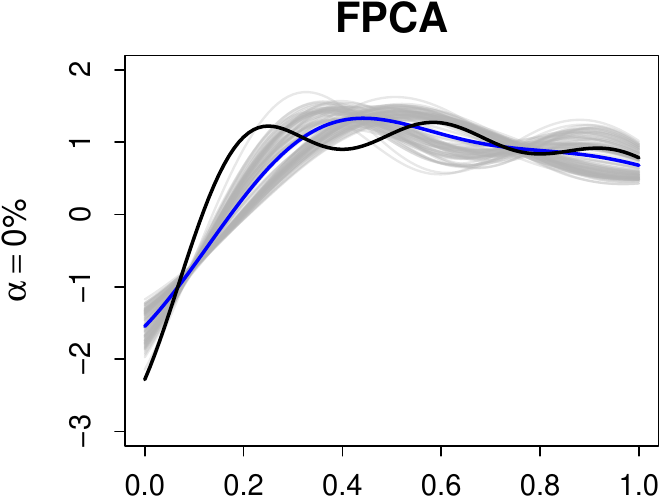}
\includegraphics[width=2.8cm]{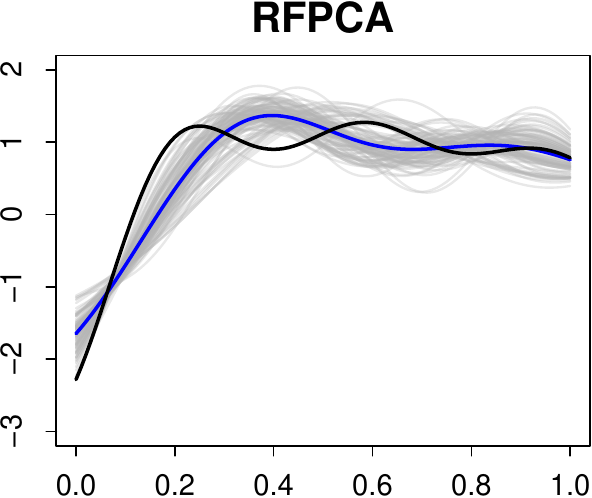}
\includegraphics[width=2.8cm]{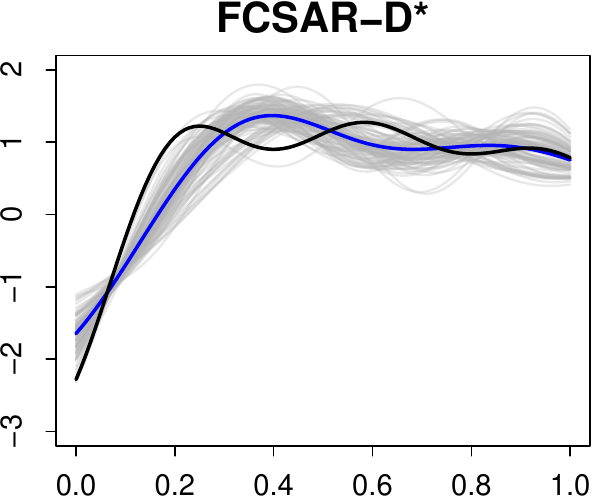}
\includegraphics[width=2.8cm]{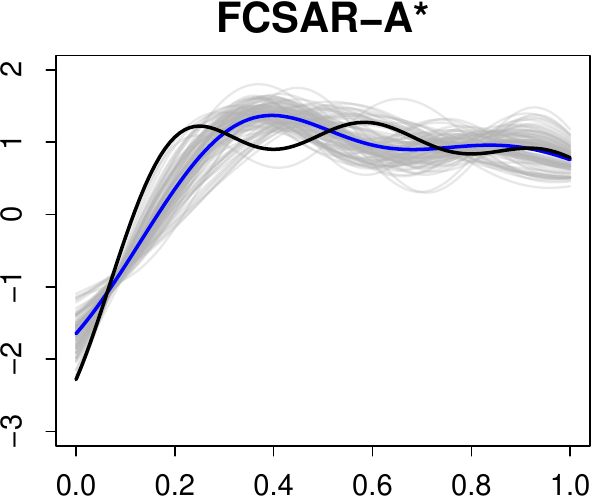}
\includegraphics[width=2.8cm]{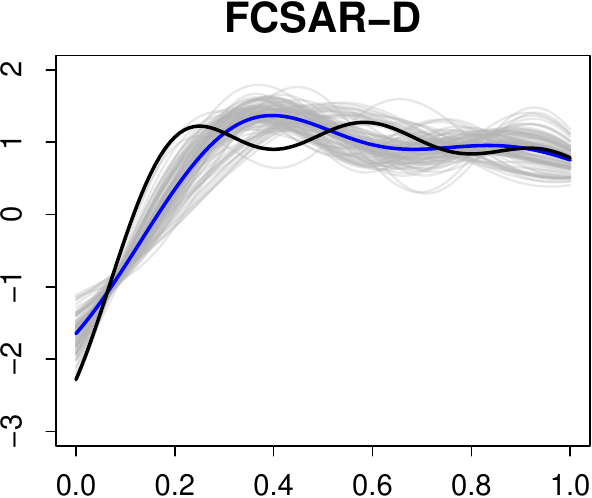}
\includegraphics[width=2.8cm]{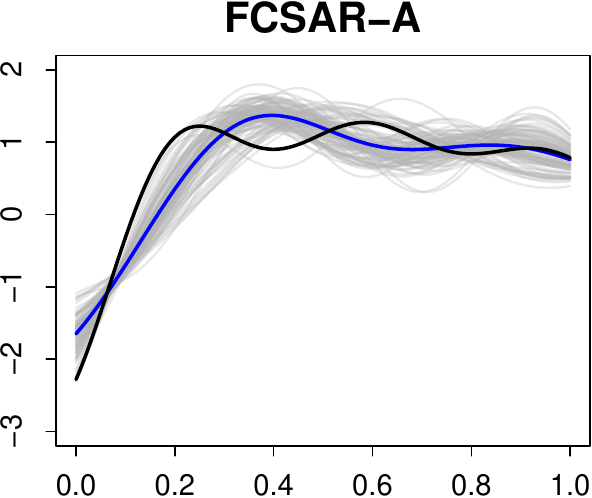}
\\
\centering
\includegraphics[width=3.1cm]{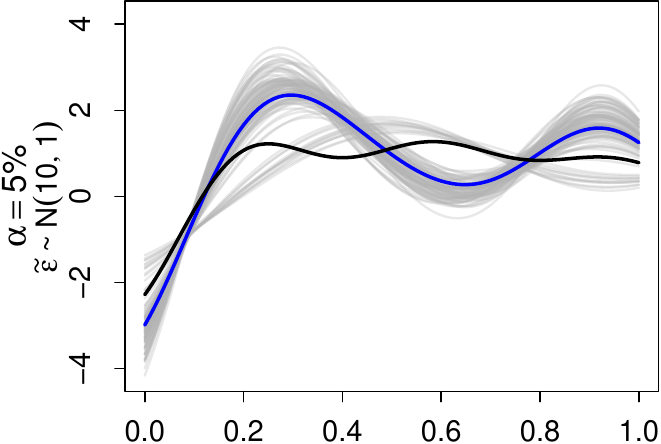}
\includegraphics[width=2.8cm]{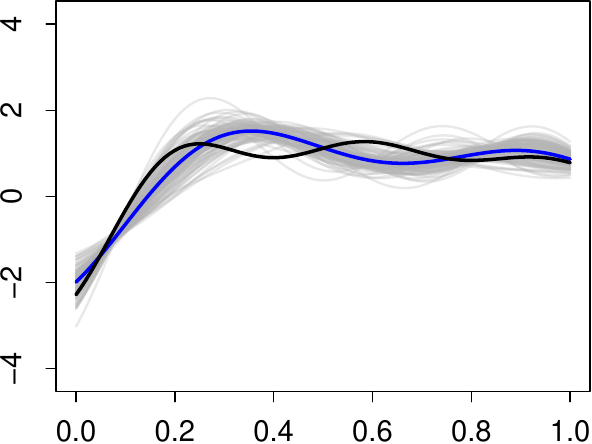}
\includegraphics[width=2.8cm]{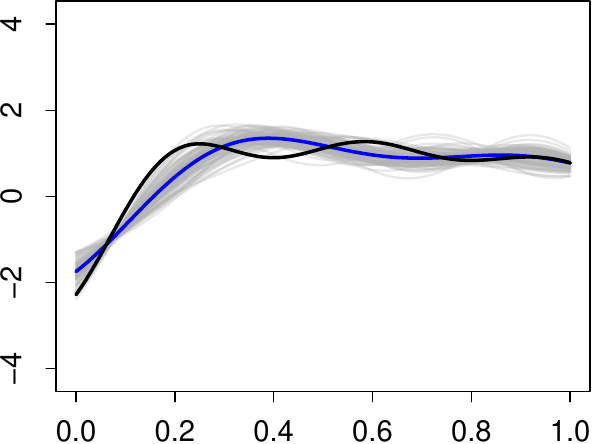}
\includegraphics[width=2.8cm]{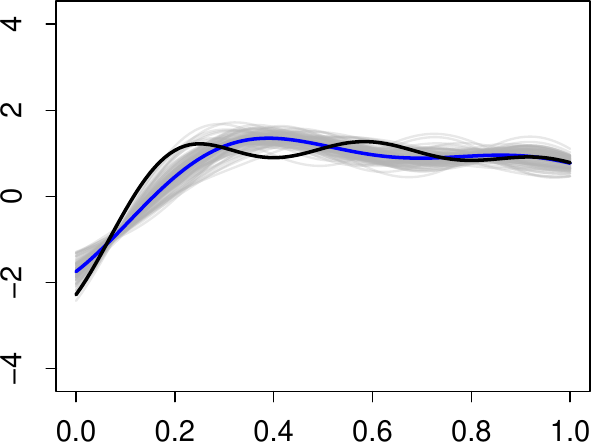}
\includegraphics[width=2.8cm]{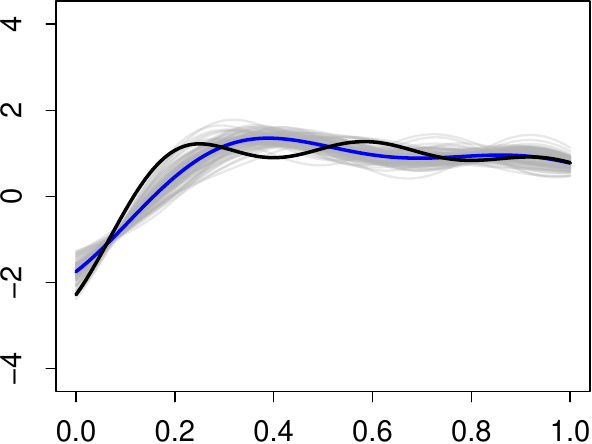}
\includegraphics[width=2.8cm]{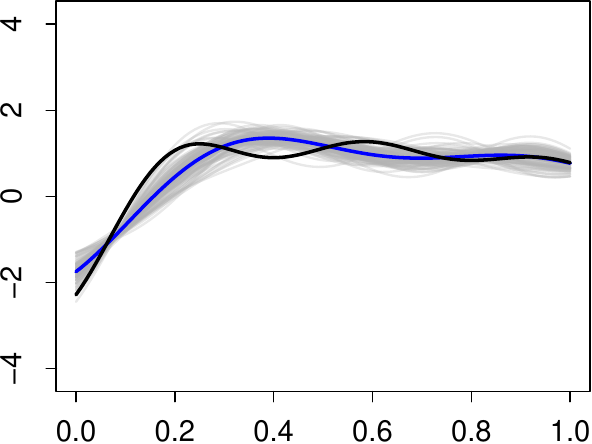}
\\
\centering
\includegraphics[width=3.1cm]{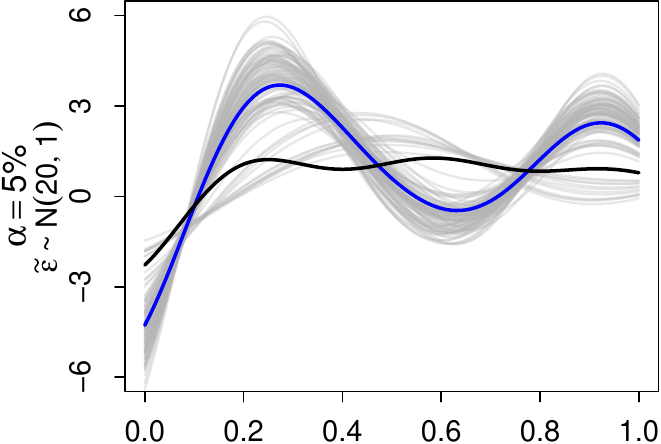}
\includegraphics[width=2.8cm]{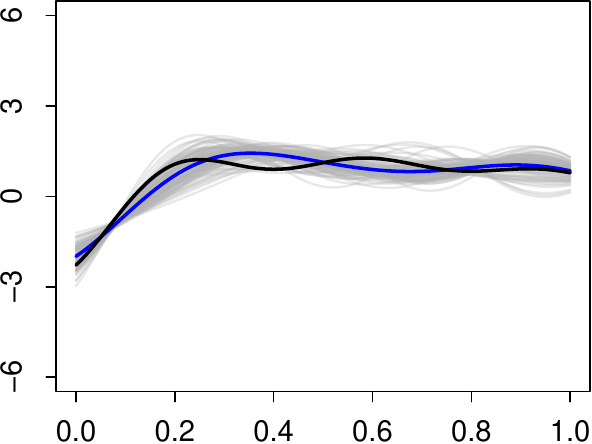}
\includegraphics[width=2.8cm]{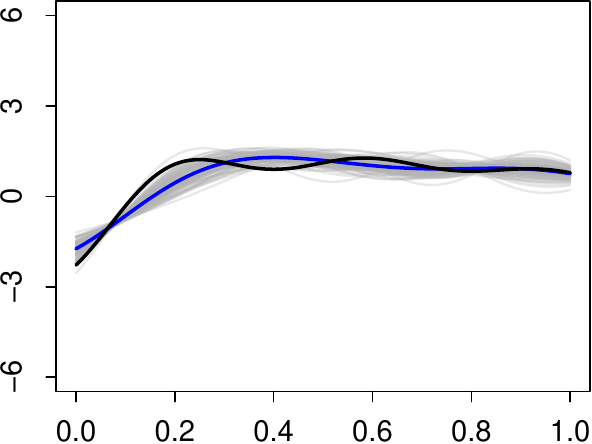}
\includegraphics[width=2.8cm]{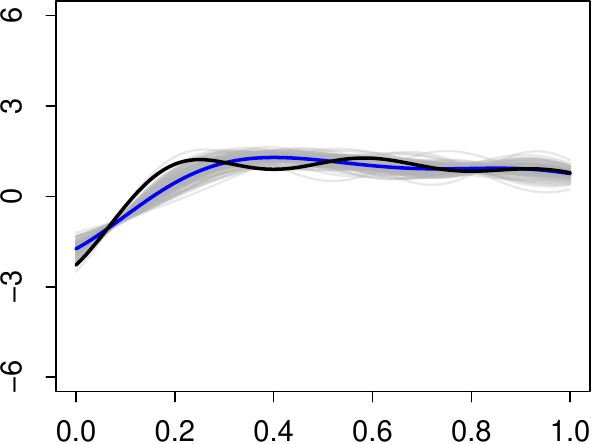}
\includegraphics[width=2.8cm]{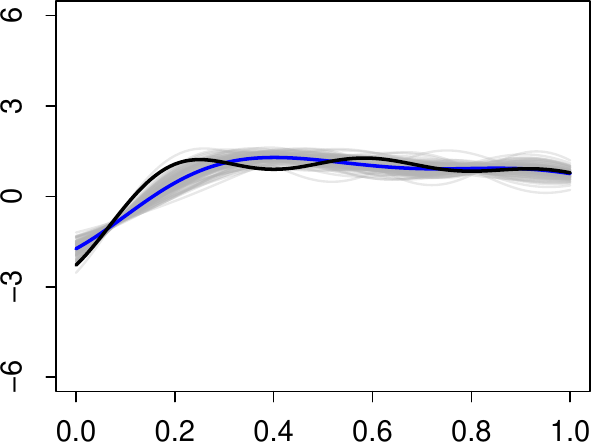}
\includegraphics[width=2.8cm]{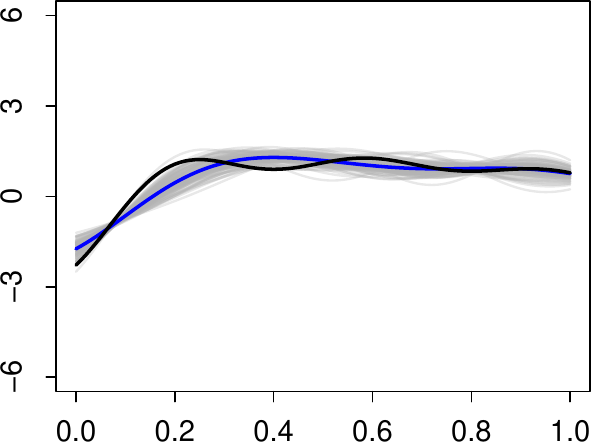}
\\
\centering
\includegraphics[width=3.1cm]{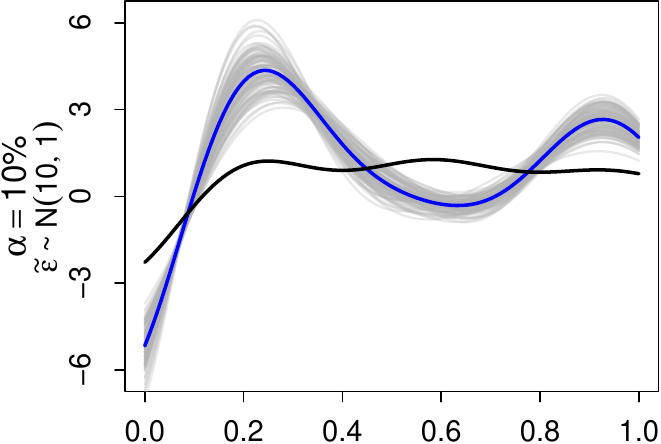}
\includegraphics[width=2.8cm]{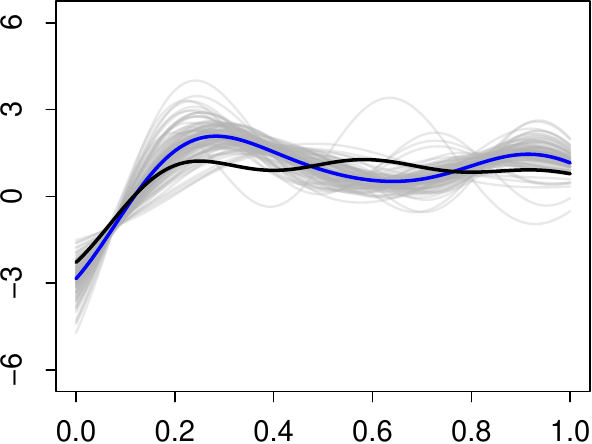}
\includegraphics[width=2.8cm]{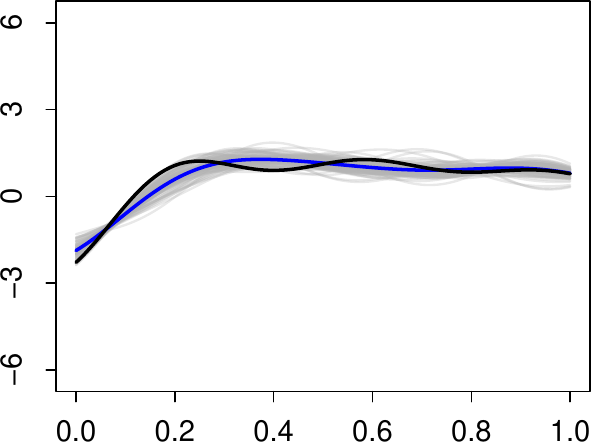}
\includegraphics[width=2.8cm]{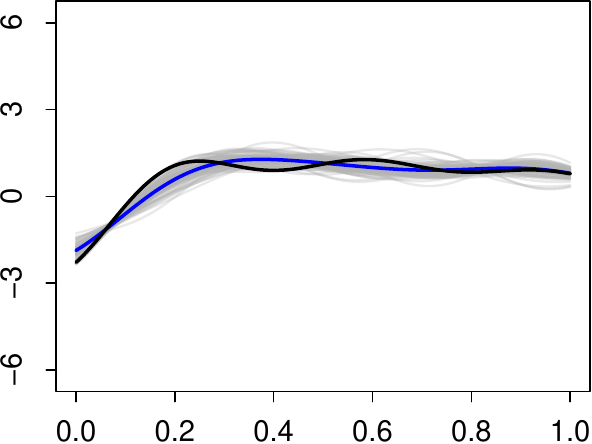}
\includegraphics[width=2.8cm]{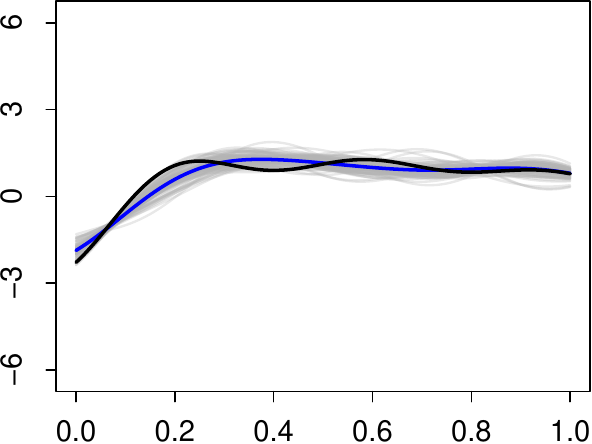}
\includegraphics[width=2.8cm]{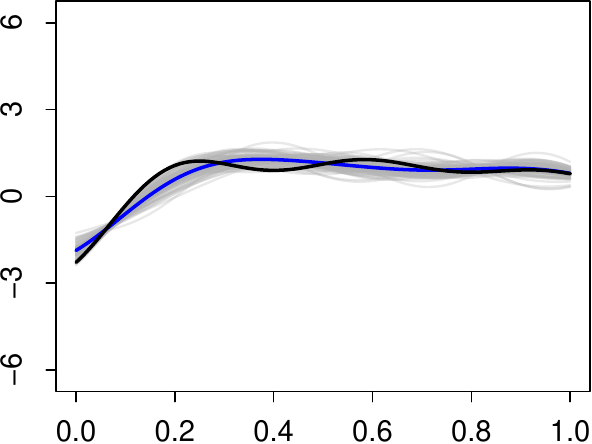}
\\
\centering
\includegraphics[width=3.1cm]{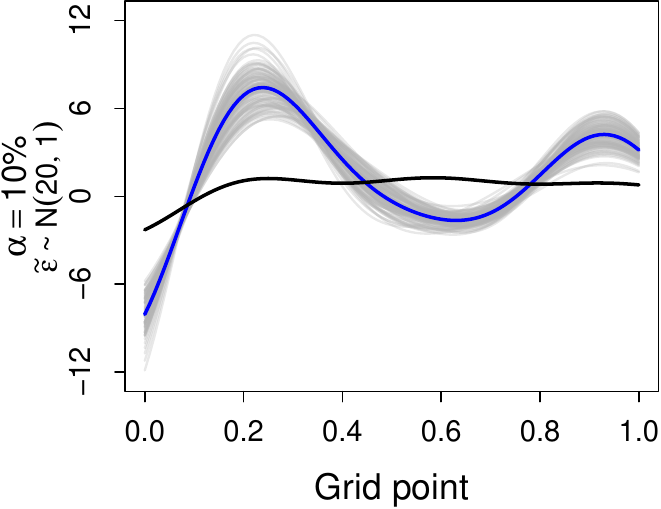}
\includegraphics[width=2.8cm]{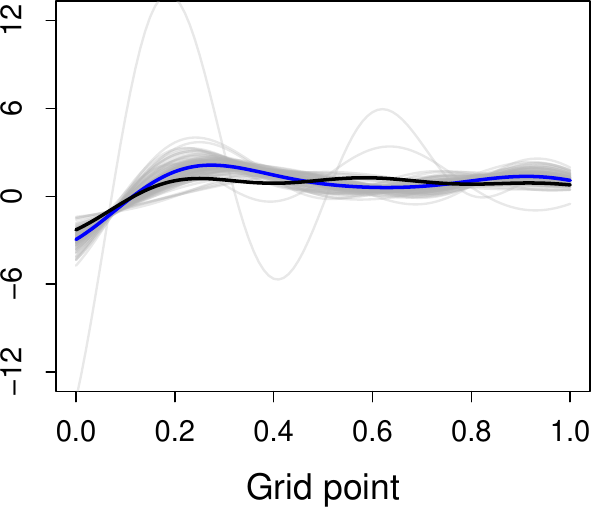}
\includegraphics[width=2.8cm]{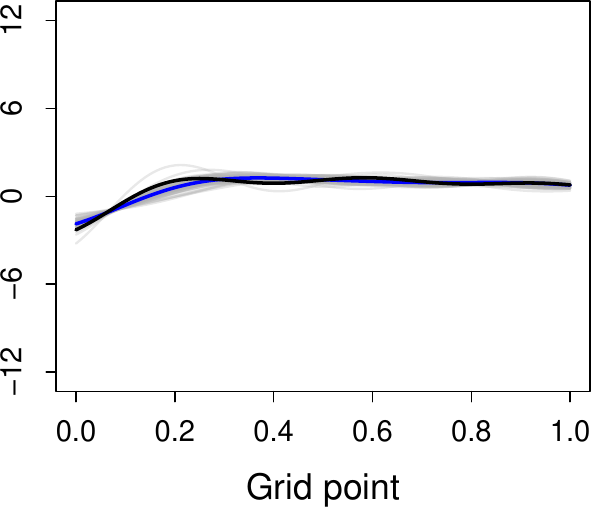}
\includegraphics[width=2.8cm]{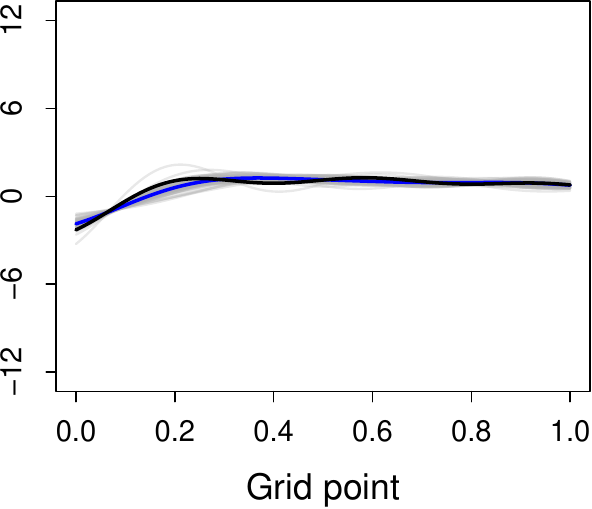}
\includegraphics[width=2.8cm]{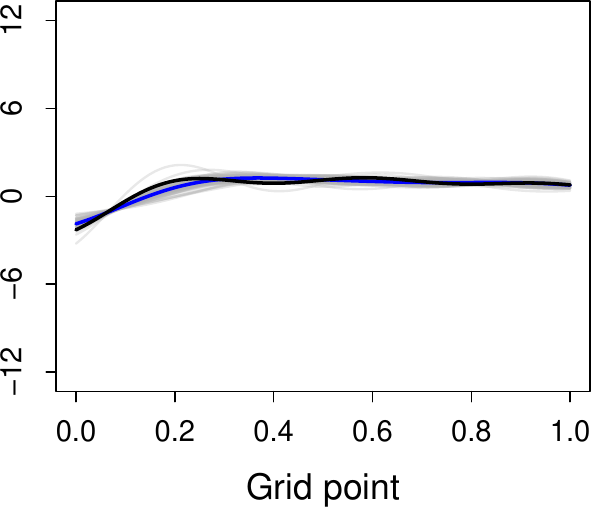}
\includegraphics[width=2.8cm]{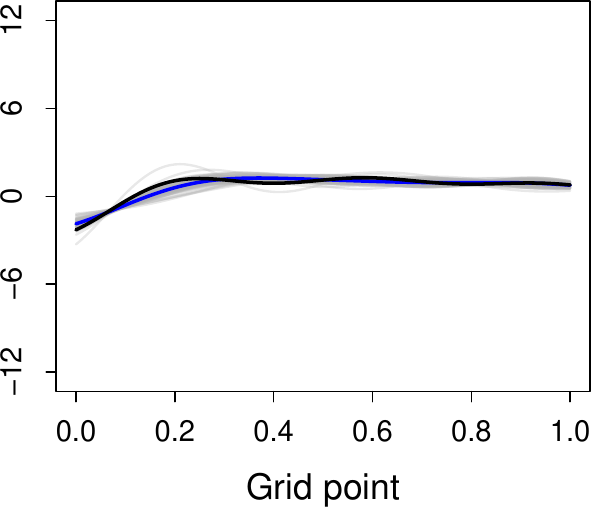}
\caption{\small{Plots of the generated true regression coefficient functions (black lines), estimated coefficient functions (gray lines), and mean functions of the estimated coefficient functions (blue lines). FPCA (first column), RFPCA (second column), FCSAR-D\textsuperscript{*} (third column), FCSAR-A\textsuperscript{*} (fourth column), FCSAR-D (fifth column) and FCSAR-A (sixth column). The first row corresponds to the uncontaminated case; the second and third rows are generated under $5\%$ contamination ($\widetilde{\varepsilon}\sim N(10,1)$ and $\widetilde{\varepsilon}\sim N(20,1)$, respectively), and the fourth and fifth rows under $10\%$ contamination ($\widetilde{\varepsilon}\sim N(10,1)$ and $\widetilde{\varepsilon}\sim N(20,1)$, respectively). 
The figure is presented for $n=250$ and $\vartheta=0.5$.}}
\label{fig:Fig_2}
\end{figure}

We note that the contamination mechanism considered in this section is stylized. In particular, contamination is introduced through random replacement; the same units are contaminated in both the functional predictor and the response; and response outliers are generated by positive mean shifts. Accordingly, the simulation results should be interpreted as evidence for this joint leverage-plus-vertical contamination regime, rather than as an exhaustive assessment over all possible contamination structures such as spatial clustering, negative response outliers, or misspecified spatial weights.

\subsection{Computational diagnostics and implementation details}\label{subsec:compdiag}

Because the hybrid IRLS-Newton step is part of the methodological contribution, we also report representative computational diagnostics for the competing procedures. All computations were carried out in \Rlogo\ on a DELL workstation equipped with a 13th Gen Intel(R) Core(TM) i9-13900HX CPU (2.20 GHz) and 64.0 GB RAM. Since a full diagnostic breakdown for every simulation configuration would require substantial additional space, Table~\ref{tab:compdiag} reports representative settings covering both uncontaminated and contaminated regimes, namely $n\in\{100,500\}$, $\vartheta\in\{0.50,0.75\}$, and contamination levels $\alpha\in\{0,0.10\}$.
\begin{table}[!htb]
\centering
\caption{\small Representative computational diagnostics for selected simulation configurations. Reported values are Monte Carlo averages over 500 replications. ``Time'' denotes elapsed wall-clock time (in seconds), ``Converged'' is the percentage of replications satisfying the stopping criterion, ``Outer'' is the average number of outer iterations, ``Inner'' is the average total number of inner Newton iterations for the spatial parameter update, and $K$ is the average retained number of principal components.}\label{tab:compdiag}
\begin{small}
\setlength{\tabcolsep}{7pt}
\begin{tabular}{@{}llccccc@{}}
\toprule
Setting & Method & Time & Converged (\%) & Outer & Inner & $K$ \\
\midrule
\multirow{6}{*}{$n=100,\ \vartheta=0.50,\ \alpha=0\%$}
& FPCA              & 0.01   & 100.0 & 8.30   & 0      & 2.48 \\
& RFPCA             & 0.24   & 100.0 & 7.50   & 0      & 2.87 \\
& FCSAR-D\textsuperscript{*}     & 0.01   & 99.0  & 3.56   & 0      & 2.87 \\
& FCSAR-A\textsuperscript{*}     & 0.03   & 100.0 & 10.53  & 0      & 2.87 \\
& FCSAR-D           & 0.28   & 100.0 & 12.43  & 113.55 & 2.87 \\
& FCSAR-A           & 0.34   & 100.0 & 17.49  & 159.73 & 2.87 \\
\midrule
\multirow{6}{*}{$n=500,\ \vartheta=0.75,\ \alpha=0\%$}
& FPCA              & 0.04    & 100.0 & 7.10   & 0      & 2.45 \\
& RFPCA             & 12.29   & 100.0 & 5.21   & 0      & 2.79 \\
& FCSAR-D\textsuperscript{*}     & 1.76    & 98.8  & 5.32   & 0      & 2.79 \\
& FCSAR-A\textsuperscript{*}     & 2.74    & 100.0 & 9.73   & 0      & 2.79 \\
& FCSAR-D           & 19.56   & 100.0 & 9.51   & 86.31  & 2.79 \\
& FCSAR-A           & 29.39   & 100.0 & 12.83  & 115.92 & 2.79 \\
\midrule
\multirow{6}{*}{$n=100,\ \vartheta=0.50,\ \alpha=10\%,\ \mu_{\widetilde{\varepsilon}}=10$}
& FPCA              & 0.01   & 100.0 & 10.78  & 0      & 2.81 \\
& RFPCA             & 2.73   & 95.0  & 37.27  & 0      & 2.93 \\
& FCSAR-D\textsuperscript{*}     & 0.03   & 100.0 & 4.29   & 0      & 2.93 \\
& FCSAR-A\textsuperscript{*}     & 0.08   & 100.0 & 11.97  & 0      & 2.93 \\
& FCSAR-D           & 3.13   & 100.0 & 61.04  & 567.84 & 2.93 \\
& FCSAR-A           & 4.31   & 100.0 & 78.88  & 733.71 & 2.93 \\
\midrule
\multirow{6}{*}{$n=500,\ \vartheta=0.75,\ \alpha=10\%,\ \mu_{\widetilde{\varepsilon}}=20$}
& FPCA              & 0.06     & 100.0 & 28.00   & 0       & 3.00 \\
& RFPCA             & 306.77   & 0.0   & 100.00  & 0       & 2.98 \\
& FCSAR-D\textsuperscript{*}     & 1.99     & 100.0 & 4.74    & 0       & 2.98 \\
& FCSAR-A\textsuperscript{*}     & 3.84     & 100.0 & 11.26   & 0       & 2.98 \\
& FCSAR-D           & 657.31   & 24.0  & 239.61  & 2272.40 & 2.98 \\
& FCSAR-A           & 810.97  & 10.0  & 246.60  & 2396.06 & 2.98 \\
\bottomrule
\end{tabular}
\end{small}
\end{table}
Table~\ref{tab:compdiag} reveals a clear and consistent computational ordering among the competing procedures. In the uncontaminated representative settings, all methods converge reliably, and the retained dimension remains low and stable, with FPCA-based fits selecting roughly $2.4$-$2.5$ components on average and RFPCA-based fits selecting roughly $2.8$-$2.9$ components. Within this regime, the classical FPCA and the non-Fisher-consistent redescending benchmarks are the least expensive procedures, whereas the proposed Fisher-consistent redescending estimators require substantially more time. This additional burden is already visible under clean data through larger outer iteration counts and, for the proposed methods only, a nonzero total number of inner Newton updates. Thus, even in the absence of contamination, the proposed algorithm should not be interpreted as a computational shortcut; rather, it incurs extra cost because it solves the full bias-corrected estimating system and updates the spatial parameter through an explicit inner Newton loop.

The differences become much more pronounced in contaminated settings, particularly when contamination is combined with stronger spatial dependence. In the representative case with $n=500$, $\vartheta=0.75$, $\alpha=10\%$, and $\mu_{\widetilde{\varepsilon}}=20$, the existing RFPCA-Huber procedure becomes computationally unstable, with a convergence rate dropping to $0\%$, while its elapsed time remains very large because the algorithm repeatedly reaches the iteration cap. By contrast, the non-Fisher-consistent redescending benchmarks FCSAR-D\textsuperscript{*} and FCSAR-A\textsuperscript{*} continue to converge essentially always in the representative configurations shown in the table. The proposed Fisher-consistent procedures remain far more expensive than the starred alternatives, and in the same severe regime, they also exhibit lower convergence rates and very large average outer and inner iteration counts. This comparison is especially informative because it shows that the main computational cost is not merely incurred by using redescending losses but by combining them with the bias-corrected Fisher-consistent second-stage updating scheme. The proposed FCSAR procedures achieve their robustness gains at a non-negligible computational cost, whereas the starred redescending benchmarks show that a simpler non-Fisher-consistent second stage can be considerably cheaper but may sacrifice some of the inferential and predictive improvements documented elsewhere in the simulation study.

\section{Empirical data analysis: Air quality data}\label{sec:real_data}

To evaluate the practical performance of the proposed methodology, we analyze a real air-quality data set obtained through the \texttt{saqgetr} \Rlogo \ package \citep{saqgetr}, which provides direct access to measurements collected by the European Environment Agency. After preprocessing and retaining stations with sufficient data availability in 2022 and 2023, the final dataset consists of 202 monitoring stations distributed across France. The spatial distribution of these stations is displayed in Figure~\ref{fig:Fig_3}. Our primary goal is to investigate whether monthly $\mathrm{NO}_{\mathrm{x}}$ profiles can explain annual average $\mathrm{PM}_{10}$ concentrations after accounting for spatial dependence among neighboring monitoring stations, and to assess whether the proposed redescending procedures yield more reliable estimation and prediction than the competing methods in the presence of atypical observations. This specification is scientifically meaningful because $\mathrm{NO}_{\mathrm{x}}$ is a major indicator of traffic- and combustion-related emissions and is also involved in atmospheric processes that contribute to particulate formation, while $\mathrm{PM}_{10}$ is a widely studied pollutant with direct environmental and public-health relevance.

\begin{figure}[!htbp]
\centering
\includegraphics[width=10cm]{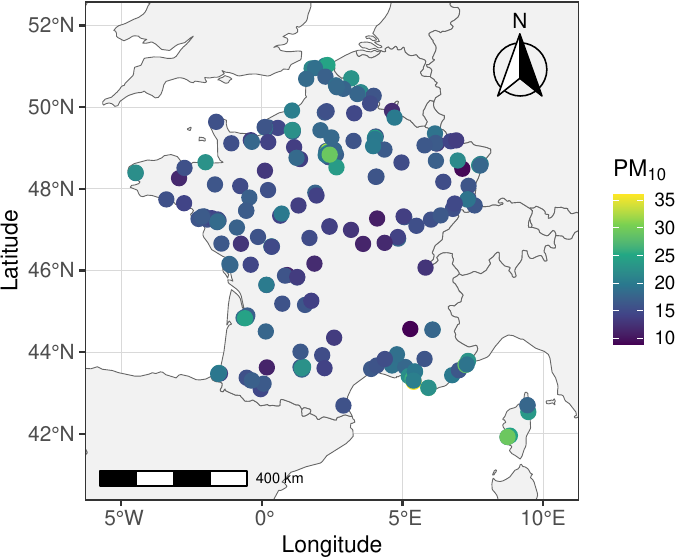}
\caption{\small{Spatial distribution of the 202 air-quality monitoring stations used in the empirical analysis across France. The color scale represents annual average $\mathrm{PM}_{10}$ concentrations in 2022 (in $\mu$g/m$^3$), with darker colors indicating higher levels. For descriptive interpretation, the displayed values may be viewed as low ($<15$), moderate ($15$-$25$), and relatively high ($>25$) within the observed sample range.}} \label{fig:Fig_3}
\end{figure}

Figure~\ref{fig:Fig_3} suggests some broad spatial heterogeneity in annual average $\mathrm{PM}_{10}$ concentrations across France. In particular, the map is compatible with somewhat higher values in parts of the northern and eastern monitoring network and comparatively lower values in several western and southwestern locations, although the pattern is not purely monotone in any single geographic direction. We therefore do not interpret Figure~\ref{fig:Fig_3} as evidence of a simple directional trend alone; rather, it indicates geographically structured variation, which is examined more formally through Local Moran's $I$ analysis in Figure~\ref{fig:Fig_5}. For descriptive purposes, the observed $\mathrm{PM}_{10}$ values in Figure~\ref{fig:Fig_3} can be read as low ($<15$), moderate ($15$-$25$), and relatively high ($>25$) within the sample range shown on the color scale.

For each monitoring station $i\in\{1,\ldots,202\}$, let $Y_i$ denote the annual average $\mathrm{PM}_{10}$ concentration and let $\X_i(t)$ denote the $\mathrm{NO}_{\mathrm{x}}$ concentration observed as a function of month, where the functional support is taken as $\mathcal{I}=[1,12]$. In this application, data from 2022 are used as the in-sample estimation period, whereas data from 2023 are reserved for out-of-sample prediction. Accordingly, the empirical SSoFRM considered in this section is
\begin{equation*}
Y_i = \beta_0 + \vartheta \sum_{j=1}^{202} w_{ij} Y_j + \int_{1}^{12} \X_i(t) \beta(t) dt + \varepsilon_i, \qquad i\in\{1,\ldots,202\}.
\end{equation*}

Figure~\ref{fig:Fig_4} presents the observed $\mathrm{NO}_{\mathrm{x}}$ curves and the $\mathrm{PM}_{10}$ values for 2022 and 2023. From this figure, there is heterogeneity across stations in both years, indicating that air-quality dynamics vary considerably across France. We investigate potential atypical observations separately for the functional predictor and the scalar response. For the monthly $\mathrm{NO}_{\mathrm{x}}$ trajectories, potential functional outliers are identified using the magnitude-shape plot based on directional outlyingness proposed by \citet{Dai}, implemented in the \Rlogo\ package \texttt{fdaoutlier} \citep{fdaoutlier}. For the scalar response $\mathrm{PM}_{10}$, potential atypical observations are identified using the classical boxplot rule. The diagnostic results indicate that both variables contain approximately $4\%$ atypical observations, including unusually small and unusually large values. These observations are not removed from the analysis. Instead, they are retained in the data set and handled through the robust estimation strategy: the RFPCA step reduces the influence of atypical functional trajectories, and the second-stage redescending M-estimation step limits the effect of extreme response-side residuals. We emphasize, however, that this diagnostic analysis is marginal in nature: it was designed to identify unusual functional and scalar observations, not to formally assess whether the flagged stations are spatially influential in the sense of exerting disproportionate impact on the fitted spatial dependence structure. A diagnostic study of spatial influence was not conducted here and would be a useful extension of the present empirical analysis.
\begin{figure}[!htb]
\centering
\includegraphics[width=8.7cm]{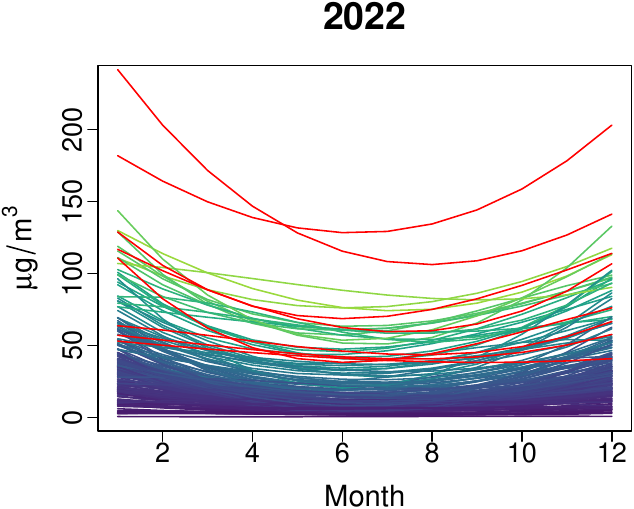}
\qquad
\includegraphics[width=7.95cm]{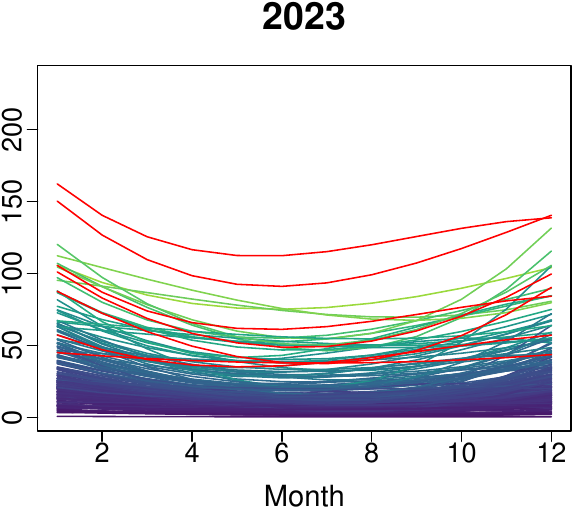}
\\  
\includegraphics[width=8.7cm]{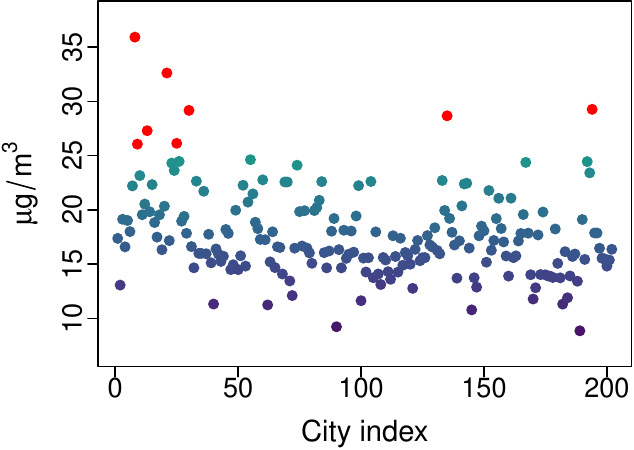}
\qquad
\includegraphics[width=7.95cm]{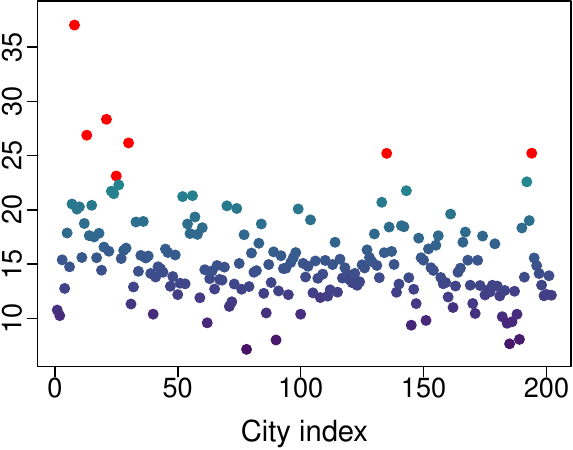}
\caption{\small{Observed air-pollution measurements from the 202 monitoring stations in France. The top panels display the monthly $\mathrm{NO}_{x}$ trajectories, treated as functional predictors on the support $\mathcal{I}=[1,12]$, for 2022 (left) and 2023 (right). The bottom panels display the corresponding annual average $\mathrm{PM}_{10}$ concentrations, used as scalar responses, for 2022 (left) and 2023 (right). Different colors correspond to different monitoring stations, and observations identified as potential outliers are highlighted in red.}}\label{fig:Fig_4}
\end{figure}

To assess the sensitivity of the results to alternative specifications of spatial interaction, we consider three spatial weight matrices $\mathbf{W}=[w_{ij}]_{1\le i,j\le 202}$ reflecting different definitions of spatial proximity. First, we define a spatial weight matrix based on inverse-squared great-circle distances, denoted by $\mathbf{W}$\textsuperscript{IDW} \citep{Ertur2007}. Specifically, for $i\neq j$, we define $w_{ij} = d_{ij}^{-2}/ \sum_{\ell\neq i} d_{i\ell}^{-2}$, $w_{ii}=0$, where $d_{ij}$ is the Haversine distance between stations $i$ and $j$, computed from their latitude and longitude coordinates. More precisely, $d_{ij}=Rc$ and $c=2 \mathrm{atan2} \left(\sqrt{a},\sqrt{1-a}\right)$ with $a = \sin^2 \left(\frac{\Delta u}{2}\right) + \cos(u_i)\cos(u_j)\sin^2 \left(\frac{\Delta v}{2}\right)$ where $u_i$ and $u_j$ denote latitudes in radians, $v_i$ and $v_j$ denote longitudes in radians, $\Delta u=u_j-u_i$, $\Delta v=v_j-v_i$, and $R=6371$ km is the Earth's mean radius. This construction assigns greater influence to stations that are geographically closer. It is therefore well suited to air-pollution data, for which nearby stations typically experience more similar meteorological and emission conditions than distant ones. Second, we construct a k-nearest neighbors spatial weight matrix, denoted by $\mathbf{W}$\textsuperscript{kNN}. For each location $i$, the $\text{k}=7$ nearest neighbors are identified based on Euclidean distances between spatial coordinates. The choice of $\text{k}=7$ is motivated by connectivity considerations: it ensures a fully connected spatial graph, provides a more stable neighborhood structure, and avoids potential sensitivity at the connectivity threshold. The weights are then defined as $w_{ij} = 1/7$ if location $j$ is among the $7$ nearest neighbors of $i$, and $w_{ij}=0$ otherwise, with $w_{ii}=0$. This specification ensures that each location is influenced by a fixed number of neighboring observations, yielding a stable and well-connected spatial structure. Finally, we consider a binary distance-band spatial weight matrix, denoted by $\mathbf{W}$\textsuperscript{DB}. Let $d_{ij}$ denote the Haversine distance between locations $i$ and $j$, as defined above. The weights are defined as $w_{ij}=1$ if $d_{ij} \le d_0$ and $w_{ij}=0$ otherwise, with $w_{ii}=0$. The threshold parameter $d_0$ is selected in a data-driven manner to ensure consistency with the k-nearest neighbors specification. Specifically, for each location, the distance to its $7$\textsuperscript{th} nearest neighbor is computed, and $d_0$ is defined as the $90$\textsuperscript{th} percentile of these distances, providing a robust threshold that avoids sensitivity to extreme distance. This specification defines spatial interaction within a fixed geographic radius and serves as an alternative discrete representation of spatial proximity. Note that, in all specifications, the spatial weight matrices are row-standardized so that $\sum_{j=1}^{202} w_{ij}=1$ for each $i$.

Before fitting the model, we examine the Local spatial structure of the response through the local Moran's $I$ statistic. The Moran's $I$ scatter plot in Figure~\ref{fig:Fig_5} shows a clear positive association between centered $\mathrm{PM}_{10}$ concentrations and their spatially lagged counterparts. This indicates positive local spatial autocorrelation in the 2022 $\mathrm{PM}_{10}$ field: stations with relatively high $\mathrm{PM}_{10}$ values tend to be surrounded by stations with similarly high values, while stations with relatively low values tend to be located near other low-valued stations. Hence, the empirical evidence strongly supports the inclusion of the SAR lag term $\vartheta \sum_{j=1}^{202} w_{ij}Y_j$ in the SSoFRM. At the same time, the visible spread of points away from the main cloud is also compatible with the presence of local anomalies, again suggesting that a robust estimation strategy is desirable.

\begin{figure}[!htbp]
\centering
\includegraphics[width=5.475cm]{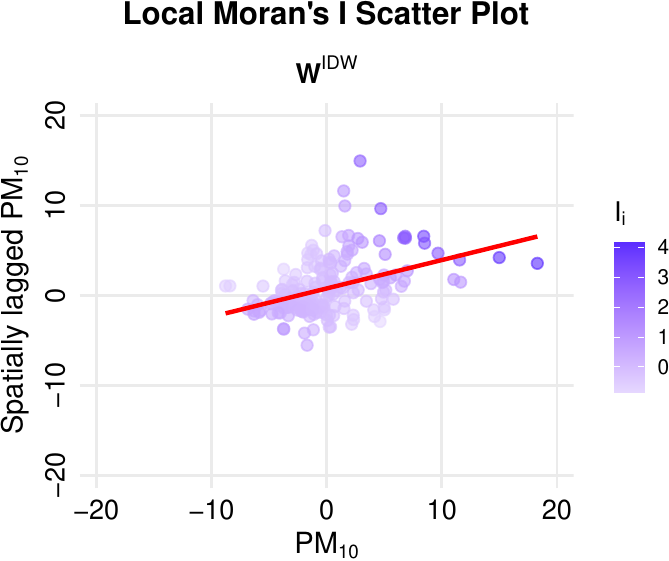}
\quad
\includegraphics[width=5.475cm]{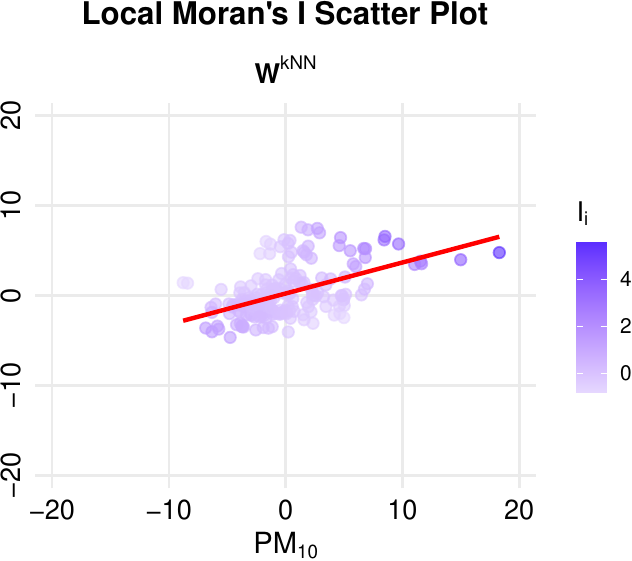}
\quad
\includegraphics[width=5.475cm]{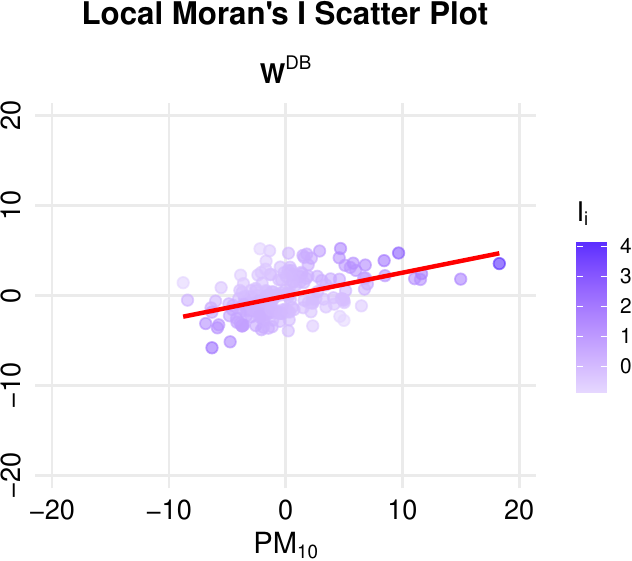}
\caption{\small{\commUB{Local} Moran's $I$ scatter plots for the 2022 annual average $\mathrm{PM}_{10}$ concentrations. The horizontal axis shows the centered $\mathrm{PM}_{10}$ values, while the vertical axis shows their spatial lag based on three alternative row-standardized weight matrices: the inverse-distance weighting $\mathbf{W}$\textsuperscript{IDW} (left panel), the k-nearest neighbors $\mathbf{W}$\textsuperscript{kNN} (middle panel), and the distance-band $\mathbf{W}$\textsuperscript{DB} (right panel).}}\label{fig:Fig_5}
\end{figure}

We fit the six competing procedures, namely FPCA, RFPCA, FCSAR-D\textsuperscript{*}, FCSAR-A\textsuperscript{*}, FCSAR-D, and FCSAR-A, to the 2022 data under each of the three spatial weight matrix specifications $\mathbf{W}$\textsuperscript{IDW}, $\mathbf{W}$\textsuperscript{kNN}, and $\mathbf{W}$\textsuperscript{DB}. The fitted models are then used to predict the 2023 annual average $\mathrm{PM}_{10}$ concentrations from the 2023 monthly $\mathrm{NO}_{\mathrm{x}}$ curves. Under the $95\%$ explained-variation rule, the FPCA-based fit retains one principal component, whereas all RFPCA-based procedures retain two principal components. Because the robust methods share the same first-stage RFPCA decomposition, the retained dimension is identical for RFPCA, FCSAR-D\textsuperscript{*}, FCSAR-A\textsuperscript{*}, FCSAR-D, and FCSAR-A. To reduce the dominance of a few extreme prediction errors in the 2023 test period, we summarize out-of-sample performance by both the sum of squared error, $\mathrm{SSE}=\sum_{i=1}^{202}(\widehat{Y}_i-Y_i)^2$, and the corresponding mean squared prediction error, reported for the untrimmed case ($0\%$ trimming) as well as for $5\%$, $10\%$, and $20\%$ trimming. The resulting estimates of $\widehat{\vartheta}$ and the out-of-sample prediction summaries are reported in Tables~\ref{tab:tab_real_vartheta} and \ref{tab:tab_real_errors}, respectively.
\begin{table}[!htbp]
\centering
\caption{\small Estimated spatial dependence parameter $\widehat{\vartheta}$ from the 2022 air-quality data under different spatial weight matrix specifications.}
\label{tab:tab_real_vartheta}
\renewcommand{\arraystretch}{1.0}
\tabcolsep 0.22in
\begin{tabular}{@{}lcccccc@{}}
\toprule
$\mathbf{W}$ & FPCA & RFPCA & FCSAR-D\textsuperscript{*} & FCSAR-A\textsuperscript{*} & FCSAR-D & FCSAR-A \\
\midrule
$\mathbf{W}$\textsuperscript{IDW} & 0.27 & 0.28 & 0.25 & 0.24 & 0.27 & 0.25 \\
$\mathbf{W}$\textsuperscript{kNN} & 0.40 & 0.20 & 0.38 & 0.36 & 0.38 & 0.35 \\
$\mathbf{W}$\textsuperscript{DB} & 0.47 & 0.19 & 0.44 & 0.44 & 0.44 & 0.41 \\
\bottomrule
\end{tabular}
\end{table}

\begin{table}[!htb]
\centering
\caption{\small Out-of-sample SSE and MSPE values for predicting 2023 annual average $\mathrm{PM}_{10}$ concentrations under different spatial weight matrix specifications, reported for the untrimmed case (i.e., $0\%$ trimming) and various trimming percentages (TP).}
\label{tab:tab_real_errors}
\renewcommand{\arraystretch}{1.0}
\tabcolsep 0.1in
\begin{tabular}{@{}lccrrrrrr@{}}
\toprule
TP & $\mathbf{W}$ & Statistic & FPCA & RFPCA & FCSAR-D\textsuperscript{*} & FCSAR-A\textsuperscript{*} & FCSAR-D & FCSAR-A \\
\midrule
$0\%$ & $\mathbf{W}$\textsuperscript{IDW} & SSE & 2369.35 & 2147.86 & 1836.62 & 1828.49 & 2058.84 & 2050.80 \\
&  & MSPE & 11.73 & 10.63 & 9.09 & 9.05 & 10.19 & 10.15 \\
\cmidrule(lr){2-9}
& $\mathbf{W}$\textsuperscript{kNN} & SSE & 2360 & 2244.38 & 1843.93 & 1844.93 & 2061.66 & 1995.30 \\
&  & MSPE & 11.68 & 11.11 & 9.13 & 9.13 & 10.21 & 9.88 \\
\cmidrule(lr){2-9}
& $\mathbf{W}$\textsuperscript{DB} & SSE & 2396.94 & 2267.77 & 1968.42 & 1959.41 & 2053.92 & 2028.69 \\
&  & MSPE & 11.87 & 11.23 & 9.74 & 9.70 & 10.17 & 10.04 \\
\midrule
$5\%$ & $\mathbf{W}$\textsuperscript{IDW} & SSE & 1644.19 & 1411.81 & 976.11 & 974.15 & 1305.32 & 1290.99 \\
&  & MSPE & 8.56 & 7.35 & 5.08 & 5.07 & 6.80 & 6.72 \\
\cmidrule(lr){2-9}
& $\mathbf{W}$\textsuperscript{kNN} & SSE & 1655.95 & 1507.77 & 1090.15 & 1084.49 & 1330.07 & 1251.82 \\
&  & MSPE & 8.62 & 7.85 & 5.68 & 5.65 & 6.93 & 6.52 \\
\cmidrule(lr){2-9}
& $\mathbf{W}$\textsuperscript{DB} & SSE & 1675.32 & 1516.55 & 1047.26 & 1040.98 & 1277.22 & 1248.14 \\
&  & MSPE & 8.73 & 7.90 & 5.45 & 5.42 & 6.65 & 6.50 \\
\midrule
$10\%$ & $\mathbf{W}$\textsuperscript{IDW} & SSE & 1300.54 & 1087.51 & 678.58 & 679.27 & 998.92 & 984.67 \\
&  & MSPE & 7.15 & 5.98 & 3.73 & 3.73 & 5.49 & 5.41 \\
\cmidrule(lr){2-9}
& $\mathbf{W}$\textsuperscript{kNN} & SSE & 1331.84 & 1172.61 & 806.81 & 800.94 & 1033.51 & 963.81 \\
&  & MSPE & 7.32 & 6.44 & 4.43 & 4.40 & 5.68 & 5.30 \\
\cmidrule(lr){2-9}
& $\mathbf{W}$\textsuperscript{DB} & SSE & 1343.31 & 1181.22 & 734.10 & 729.95 & 983.33 & 954.47 \\
&  & MSPE & 7.38 & 6.49 & 4.03 & 4.01 & 5.40 & 5.24 \\
\midrule
$20\%$ & $\mathbf{W}$\textsuperscript{IDW} & SSE & 883.44 & 708.23 & 381.30 & 382.07 & 629 & 615.89 \\
&  & MSPE & 5.45 & 4.37 & 2.35 & 2.36 & 3.88 & 3.80 \\
\cmidrule(lr){2-9}
& $\mathbf{W}$\textsuperscript{kNN} & SSE & 885.86 & 758 & 502.33 & 498.74 & 657.83 & 603.50 \\
&  & MSPE & 5.47 & 4.68 & 3.10 & 3.08 & 4.06 & 3.73 \\
\cmidrule(lr){2-9}
& $\mathbf{W}$\textsuperscript{DB} & SSE & 896.07 & 763.44 & 445.47 & 441.07 & 631.37 & 607.84 \\
&  & MSPE & 5.53 & 4.71 & 2.75 & 2.72 & 3.90 & 3.75 \\
\bottomrule
\end{tabular}
\end{table}

Tables~\ref{tab:tab_real_vartheta} and \ref{tab:tab_real_errors} yield three main conclusions. First, all methods estimate a positive spatial dependence parameter under all three spatial weight matrix specifications, so the substantive conclusion of positive spatial autocorrelation in annual average $\mathrm{PM}_{10}$ concentrations across France is robust to the choice of estimator and to the definition of spatial proximity. This is fully consistent with the local Moran's $I$ analysis in Figure~\ref{fig:Fig_5}. At the same time, the estimated magnitude of $\widehat{\vartheta}$ is more sensitive to the choice of spatial weight matrix than to the estimation method itself. Under inverse-distance weighting, the estimated values lie in a narrow moderate range, whereas the $k$-nearest neighbors and distance-band specifications produce a wider spread, especially for FPCA and RFPCA. Thus, the empirical application suggests that the existence of positive spatial dependence is stable, but its estimated strength depends non-negligibly on how spatial interaction is encoded. 

Second, the predictive ranking is much clearer than the ranking for $\widehat{\vartheta}$. Across all three spatial weight matrices and across all trimming proportions, the FPCA-based procedure produces the largest SSE and MSPE values, indicating the weakest out-of-sample predictive performance. Replacing FPCA with RFPCA systematically improves prediction, confirming that robust functional dimension reduction is already beneficial in this data set. However, the strongest predictive performance is achieved by the redescending second-stage procedures. In particular, the non-Fisher-consistent redescending competitors FCSAR-D\textsuperscript{*} and FCSAR-A\textsuperscript{*} attain the smallest SSE and MSPE values in nearly all entries of Table~\ref{tab:tab_real_errors}, while the proposed Fisher-consistent procedures FCSAR-D and FCSAR-A consistently improve upon FPCA and RFPCA but do not dominate their starred counterparts in this empirical example.

Third, the relative ordering of the methods is highly stable across trimming proportions. Moving from the untrimmed case to $5\%$, $10\%$, and $20\%$ trimming reduces the absolute magnitude of both SSE and MSPE for all procedures, as expected, but it does not materially change the ranking pattern. The FPCA method remains the weakest predictor; RFPCA provides a clear improvement; the proposed Fisher-consistent estimators occupy an intermediate but still clearly competitive position; and the starred redescending competitors remain the strongest in terms of pure predictive accuracy. Among the three spatial weight matrices, the inverse-distance and distance-band specifications tend to produce slightly smaller prediction errors than the $k$-nearest neighbors specification. However, the differences are not large enough to alter the overall substantive interpretation. These results suggest that, in this application, robustification of both the functional decomposition and the second-stage spatial regression is essential for reliable prediction, while Fisher consistency primarily contributes inferential coherence rather than the smallest empirical test-error values in every configuration.

The estimated regression coefficient functions are displayed in Figure~\ref{fig:Fig_6}. Across all three spatial weight matrix specifications, the estimated coefficient functions are predominantly positive over the monthly domain, implying that larger $\mathrm{NO}_{\mathrm{x}}$ concentrations are generally associated with higher annual average $\mathrm{PM}_{10}$ concentrations after accounting for spatial dependence. This positive relationship is reasonable, because $\mathrm{NO}_{\mathrm{x}}$ is both an indicator of combustion-related emissions and a contributor to secondary particle formation. Although the broad functional pattern is similar across methods, the nonrobust FPCA estimate exhibits the greatest local fluctuation, whereas the RFPCA-based procedures produce visibly smoother coefficient functions. Within the robust group, the four redescending procedures yield highly comparable shapes across the three weight specifications, and the proposed Fisher-consistent curves FCSAR-D and FCSAR-A remain smooth and stable throughout the domain. Thus, even though the empirical prediction table favors the starred redescending competitors in terms of SSE and MSPE, Figure~\ref{fig:Fig_6} shows that the proposed Fisher-consistent estimators still recover a coherent and scientifically interpretable functional effect of monthly $\mathrm{NO}_{\mathrm{x}}$ on annual average $\mathrm{PM}_{10}$.
\begin{figure}[!htbp]
\centering
\includegraphics[width=5.475cm]{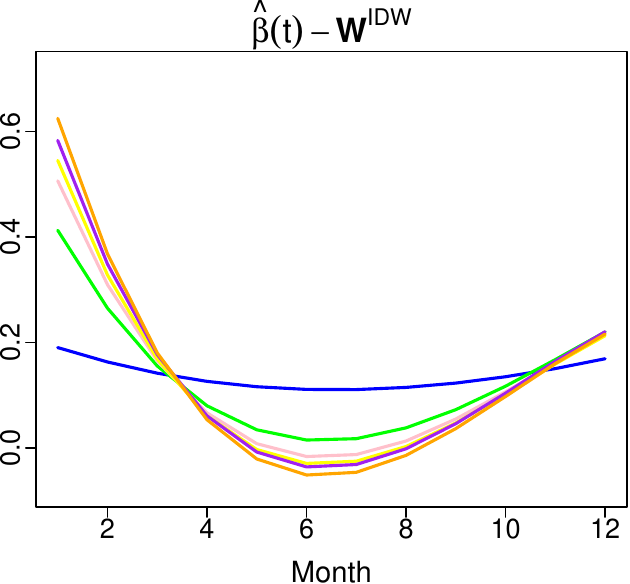}
\quad
\includegraphics[width=5.475cm]{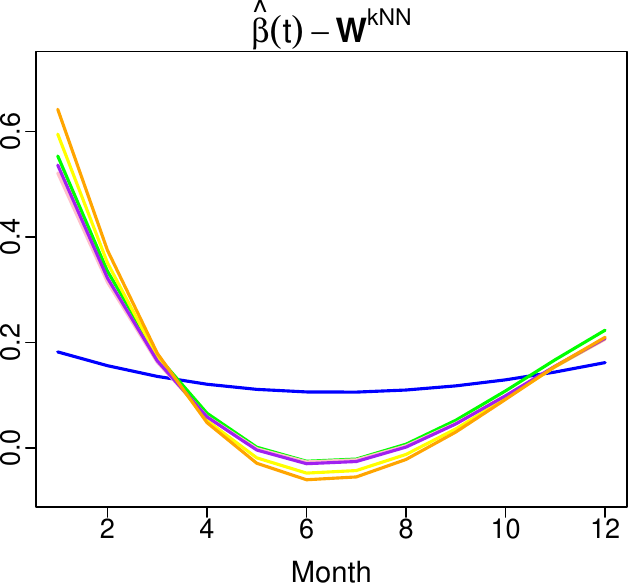}
\quad
\includegraphics[width=5.475cm]{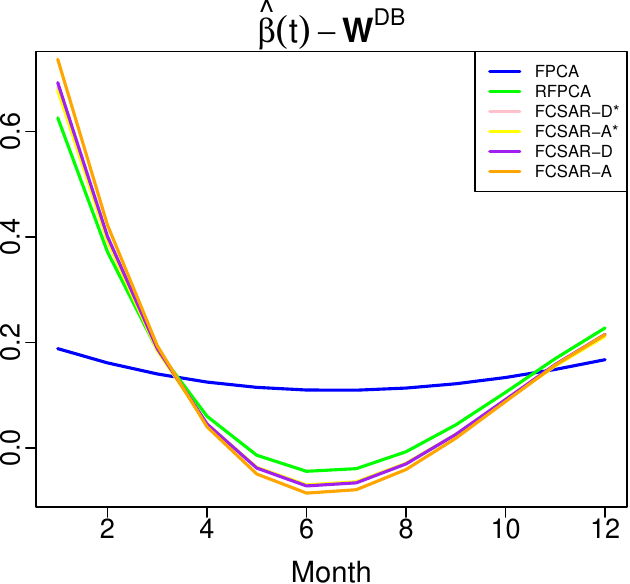}
\caption{\small{Estimated regression coefficient functions $\widehat{\beta}(t)$ obtained from the air-quality data by FPCA, RFPCA, FCSAR-D\textsuperscript{*}, FCSAR-A\textsuperscript{*}, FCSAR-D, and FCSAR-A under three alternative spatial weight matrix: the inverse-distance weighting $\mathbf{W}$\textsuperscript{IDW} (left panel), the k-nearest neighbors $\mathbf{W}$\textsuperscript{kNN} (middle panel), and the distance-band $\mathbf{W}$\textsuperscript{DB} (right panel).}}\label{fig:Fig_6}
\end{figure}

\section{Conclusion}\label{sec:conclusion}

This study has developed a Fisher-consistent redescending estimation procedure for SSoFRM by strengthening the second-stage spatial estimation step in an RFPCA-based robust regression pipeline. In relation to the existing literature, \cite{Huang2021} introduced robustness into SSoFRM through a $t$-distribution-based formulation estimated by EM, while \cite{Beyaztas2026} combined robust functional dimension reduction with a Huber-type robust second-stage estimator for the same model class. Our contribution is to refine this latter direction by replacing the second-stage spatial estimator with a jointly bias-corrected redescending M-estimation system, together with a unified scale equation and a hybrid IRLS-Newton algorithm. Both the simulation study and the empirical analysis indicate that this refinement yields improved resistance to severe contamination while preserving stable estimation of spatial dependence. The additional computational diagnostics show that the proposed procedures are more computationally demanding than the classical FPCA and RFPCA-Huber benchmarks, so the contribution of the hybrid IRLS-Newton step should be interpreted as improved algorithmic transparency and a fully specified bias-corrected update, rather than as a wall-clock speed advantage.

We also provide a large-sample theory for the proposed estimator within a finite-dimensional projected setting. In particular, we show that the population estimating system is correctly centered at the true parameter, yielding Fisher consistency under identification. Under a fixed truncation level, an exact-span condition excluding truncation bias, and additional regularity conditions on identification, uniform convergence, and limiting empirical-process behavior, we derive consistency and asymptotic normality of the full parameter estimator within a triangular-array SAR setting, and obtain the asymptotic distribution of the reconstructed slope function by explicitly accounting for both regression-parameter uncertainty and the estimation of the RFPCA eigenfunctions.

The numerical findings strongly support the practical value of the proposed approach. In simulation settings without contamination, that is, when no vertical outliers are added to the response, and no leverage contamination is introduced into the functional predictor, the proposed procedures remain highly competitive with classical and existing robust alternatives. This indicates that the additional robustness does not come at a substantial loss of efficiency when the data are generated without outlier contamination. Under contamination, however, the benefits become much more pronounced. The Monte Carlo study shows that the proposed redescending procedures substantially improve the estimation of the spatial parameter, the scale parameter, and the slope function, while also delivering markedly better predictive performance than both the classical FPCA-based likelihood estimator and the existing RFPCA-Huber approach. The results are especially favorable in settings with stronger contamination and stronger spatial dependence, where the Andrews-based version often provides the best overall performance.

Although the proposed methodology performs well, the theoretical scope should be interpreted with some care. First, the asymptotic development treats the truncation level in the robust functional decomposition as fixed. It assumes that truncation does not introduce approximation bias, effectively requiring the slope function to lie in the projection span used in the analysis. Thus, the current results do not cover growing-$K$ asymptotics or data-driven component selection. Second, the current asymptotic theory does not cover a general spatially dependent functional predictor setting. Although the functional covariates are spatially indexed in the model and in the motivating applications, Assumption~\ref{as:2} treats the predictor curves as i.i.d. copies of an underlying process to make the robust projection-pursuit RFPCA step mathematically tractable. Thus, the present theory introduces spatial dependence via the SAR response mechanism and the spatial weight matrix, rather than through an explicit dependence structure for the predictors themselves. Third, the theory is developed under i.i.d. symmetric errors that are independent of the regressors, which isolates the spatial dependence to the response through the SAR structure but may be restrictive in applications with more complex dependence or asymmetric contamination. Fourth, the consistency and asymptotic normality results rely on high-level assumptions that guarantee identification, uniform convergence of the estimating map, and suitable limiting behavior of the empirical process. These conditions are standard in Z-estimation arguments, but they are not intended to be the weakest possible primitive conditions for the present problem. Accordingly, the asymptotic results in this paper should be viewed as a theoretical treatment of the proposed estimator in a finite-dimensional projected SSoFRM setting, rather than as a fully general theory. In addition, while the asymptotic results justify inference for linear functionals and $\mathcal{L}^2$ summaries of the slope function, they do not yet provide a full theory for pointwise inference at fixed $t$ or for the construction of uniform confidence bands over the entire domain.

These limitations suggest several directions for future research. One natural extension is to allow multiple functional and scalar predictors within the same Fisher-consistent redescending framework. Another is to develop data-driven selection rules for tuning constants and truncation parameters with accompanying theory. It would also be of substantial interest to generalize the methodology to more complex spatial structures, such as spatial error, Durbin-type, spatiotemporal, or panel-functional models. Further work could investigate generalized response settings, including binary, count, or survival outcomes. Finally, extending the robust functional decomposition step to irregularly observed or longitudinal functional predictors would broaden the scope of the method and provide a useful bridge between robust spatial functional regression and more general longitudinal functional modeling.

\section*{Acknowledgments}


This research was supported by the Scientific and Technological Research Council of Turkey (TUBITAK) (under grant no.~124F096), the Australian Research Council Discovery Project (grant no.~DP230102250), and the Australian Research Council Future Fellowship (grant no.~FT240100338).

\section*{Code and data availability}

The \Rlogo\ code for the proposed approach is documented in the {\tt fcsar} package, available at \url{https://github.com/UfukBeyaztas/fcsar}. The air quality data used in the empirical analysis of this study are provided as online supplementary material.



\newpage
\appendix
\section{Model setup and notation}\label{sec:appendix_setup}

We consider the SSoFRM. For $n$ spatial units, after applying RFPCA with truncation level $K$, we obtain the finite‑dimensional approximation:
\begin{equation*}
\mathbf{Y} = \vartheta \mathbf{W} \mathbf{Y} + \mathbf{Z} \boldsymbol{\gamma} + \boldsymbol{\varepsilon},
\end{equation*}
where $\mathbf{Y} = (Y_1,\ldots,Y_n)^\top \in \mathbb{R}^n$ is the scalar response vector, $\mathbf{W} = (w_{ij})_{n\times n}$ is the row‑normalized spatial weight matrix, $\mathbf{Z} = [\mathbf{1}_n, \boldsymbol{\Xi}] \in \mathbb{R}^{n\times (K+1)}$ with $\boldsymbol{\Xi} = (\widehat{\xi}_{ik})_{n\times K}$ containing the (centered) RFPCA scores, $\boldsymbol{\gamma} = (\beta_0,b_1,\ldots,b_K)^\top \in \mathbb{R}^{K+1}$ is the regression coefficient vector, $\vartheta \in (-1,1)$ is the spatial dependence parameter, and $\boldsymbol{\varepsilon} = (\varepsilon_1,\ldots,\varepsilon_n)^\top \in \mathbb{R}^n$ with
$\varepsilon_i=\sigma_0 u_i$, where $u_i \stackrel{\mathrm{i.i.d.}}{\sim} F$ for some symmetric distribution $F$
satisfying $\mathbb{E}(u_i)=0$ and $\mathbb{E}(u_i^2)=1$.

Let $\boldsymbol{\theta} = (\boldsymbol{\gamma}^\top,\sigma,\vartheta)^\top \in \mathbb{R}^{K+3}$ denote the full parameter vector, and $\boldsymbol{\theta}_0 = (\boldsymbol{\gamma}_0^\top,\sigma_0,\vartheta_0)^\top$ the true parameter value. Define the residual vector $\mathbf{e}(\boldsymbol{\theta}) = (\mathbf{I}_n - \vartheta \mathbf{W})\mathbf{Y} - \mathbf{Z}\boldsymbol{\gamma}$, with $i$\textsuperscript{th} component $e_i(\boldsymbol{\theta}) = \{(\mathbf{I}_n - \vartheta \mathbf{W})\mathbf{Y} - \mathbf{Z}\boldsymbol{\gamma}\}_i$. 

The Fisher‑consistent redescending M‑estimator $\widehat{\boldsymbol{\theta}}$ solves the following system of unbiased estimating equations:
\begin{align}
\frac{1}{n}\sum_{i=1}^n \psi_2 \left(\frac{e_i(\boldsymbol{\theta})}{\sigma}\right)\mathbf{z}_i &= \mathbf{0}_{K+1}, \label{eq:est_eq_gamma} \\
\frac{1}{n}\sum_{i=1}^n \left\lbrace \psi_2 \left(\frac{e_i(\boldsymbol{\theta})}{\sigma}\right)\frac{e_i(\boldsymbol{\theta})}{\sigma} - \kappa \right\rbrace &= 0, \label{eq:est_eq_sigma} \\
\frac{1}{n}\sum_{i=1}^n \left\lbrace \psi_2 \left(\frac{e_i(\boldsymbol{\theta})}{\sigma}\right)(\mathbf{W}\mathbf{Y})_i - \mathrm{bias}_i(\boldsymbol{\theta}) \right\rbrace &= 0, \label{eq:est_eq_rho}
\end{align}
where $\kappa = \mathbb{E}\bigl\{\psi_2(\varepsilon_i/\sigma_0)(\varepsilon_i/\sigma_0)\bigr\}$ is a consistency constant ensuring Fisher consistency of the scale estimator, $\mathbf{z}_i \in \mathbb{R}^{K+1}$ is the $i$\textsuperscript{th} row of $\mathbf{Z}$, and the bias correction term is defined as $\mathrm{bias}_i(\boldsymbol{\theta}) = \bigl[\mathbf{W}(\mathbf{I}_n-\vartheta\mathbf{W})^{-1}\bigr]_{ii}\sigma\kappa$ with $[\cdot]_{ii}$ denoting the $i$\textsuperscript{th} diagonal element. This term compensates for the endogeneity of the spatial lag, ensuring that $\mathbb{E}\{\psi_2(\varepsilon_i/\sigma_0)(\mathbf{W}\mathbf{Y})_i\} = \mathrm{bias}_i(\boldsymbol{\theta}_0)$, so that equation~\eqref{eq:est_eq_rho} has zero expectation at the true parameters.

\newpage
\section{Bias correction for the initial estimator}\label{sec:appendix_bias}

To obtain a $\sqrt{n}$-consistent initial value $\vartheta^{(0)}$ for the hybrid IRLS–Newton algorithm described in Section~\ref{subsec:algorithm}, we employ the analytical bias correction derived by \citet{Zhan2025}. This correction is applied to the NLS estimator of the spatial parameter and is based on a second‑order expansion of its bias. The derivation is reproduced here for completeness.

First, consider the NLS estimator obtained by minimizing $\|\mathbf{e}(\boldsymbol{\gamma},\vartheta)\|_2^2/2$ with respect to $(\boldsymbol{\gamma},\vartheta)$. Let $\mathbf{Z}=[\mathbf{1}_n,\boldsymbol{\Xi}]$ and define the residual‑maker matrix $\mathbf{M}:=\mathbf{I}_n-\mathbf{Z}(\mathbf{Z}^\top\mathbf{Z})^{-1}\mathbf{Z}^\top$, which projects onto the orthogonal complement of the column space of $\mathbf{Z}$. Eliminating $\boldsymbol{\gamma}$ via the Frisch–Waugh–Lovell theorem yields the NLS estimator of $\vartheta$,
\begin{equation*}
\widehat{\vartheta}_{\mathrm{NLS}} = \frac{\mathbf{Y}^\top \mathbf{W}^\top \mathbf{M}\mathbf{Y}}{\mathbf{Y}^\top \mathbf{W}^\top \mathbf{M}\mathbf{W}\mathbf{Y}}.
\end{equation*}

Define $\mathbf{S}=\mathbf{S}(\vartheta)=(\mathbf{I}_n-\vartheta\mathbf{W})^{-1}$ and the mean $\boldsymbol{\mu}_Y=\mathbb{E}(\mathbf{Y})=\mathbf{S}\mathbf{Z}\boldsymbol{\gamma}$. Under the conventional SAR covariance representation, we have $\operatorname{Cov}(\mathbf{Y})=\sigma^2 \mathbf{Q}_{YY}$, where $\mathbf{Q}_{YY}=\mathbf{S}\mathbf{S}^\top$. Write the bias as $\operatorname{bias}(\widehat{\vartheta}_{\mathrm{NLS}}) :=\mathbb{E}(\widehat{\vartheta}_{\mathrm{NLS}}-\vartheta) =\mathbb{E}(\eta / \nu)$, with the quadratic forms
\begin{equation*}
\eta:=\mathbf{Y}^\top \mathbf{W}^\top \mathbf{M}\mathbf{S}^{-1}\mathbf{Y}, \qquad 
\nu:=\mathbf{Y}^\top \mathbf{W}^\top \mathbf{M}\mathbf{W}\mathbf{Y}.
\end{equation*}
Using a second‑order Taylor expansion for the ratio of quadratic forms, we obtain an approximation
\begin{equation}\label{eq:bias_approx_app}
\operatorname{bias}(\widehat{\vartheta}_{\mathrm{NLS}}) \approx \frac{\mathbb{E}(\eta)}{\mathbb{E}(\nu)} -\frac{\operatorname{Cov}(\eta,\nu)}{\{\mathbb{E}(\nu)\}^2} + \frac{\operatorname{Var}(\nu) \mathbb{E}(\eta)}{\{\mathbb{E}(\nu)\}^3}.
\end{equation}

Set $\mathbf{A}:=\mathbf{W}^\top \mathbf{M}(\mathbf{I}_n-\vartheta\mathbf{W}) =\mathbf{W}^\top \mathbf{M}\mathbf{S}^{-1}$ and $\mathbf{B}:=\mathbf{W}^\top \mathbf{M}\mathbf{W}$. Standard results for moments of quadratic forms give
\begin{align*}
\mathbb{E}(\eta) &= \operatorname{tr} \bigl(\mathbf{A} \sigma^2\mathbf{Q}_{YY}\bigr) + \boldsymbol{\mu}_Y^\top \mathbf{A}\boldsymbol{\mu}_Y, \\
\mathbb{E}(\nu) &= \operatorname{tr} \bigl(\mathbf{B} \sigma^2\mathbf{Q}_{YY}\bigr) + \boldsymbol{\mu}_Y^\top \mathbf{B}\boldsymbol{\mu}_Y, \\
\operatorname{Cov}(\eta,\nu) &= 2\operatorname{tr} \bigl(\mathbf{A} \sigma^2\mathbf{Q}_{YY} \mathbf{B} \sigma^2\mathbf{Q}_{YY}\bigr)
+4 \boldsymbol{\mu}_Y^\top \mathbf{A} \sigma^2\mathbf{Q}_{YY} \mathbf{B}\boldsymbol{\mu}_Y, \\
\operatorname{Var}(\nu) &= 2\operatorname{tr} \bigl(\mathbf{B} \sigma^2\mathbf{Q}_{YY} \mathbf{B} \sigma^2\mathbf{Q}_{YY}\bigr)
+4 \boldsymbol{\mu}_Y^\top \mathbf{B} \sigma^2\mathbf{Q}_{YY} \mathbf{B}\boldsymbol{\mu}_Y.
\end{align*}

To evaluate these expressions, we plug in the NLS estimates. Compute $(\widehat{\vartheta}_{\mathrm{NLS}},\widehat{\boldsymbol{\gamma}}_{\mathrm{NLS}})$, set $\widehat{\mathbf{S}}:=\mathbf{S}(\widehat{\vartheta}_{\mathrm{NLS}})$, $\widehat{\boldsymbol{\mu}}_Y:=\widehat{\mathbf{S}}\mathbf{Z}\widehat{\boldsymbol{\gamma}}_{\mathrm{NLS}}$, $\widehat{\mathbf{Q}}_{YY}:=\widehat{\mathbf{S}}\widehat{\mathbf{S}}^\top$, and estimate $\sigma^2$ by the usual NLS residual variance $\widehat{\sigma}^2_{\mathrm{NLS}}=\|\widehat{\mathbf{e}}_{\mathrm{NLS}}\|_2^2/(n-K-1)$ with $\widehat{\mathbf{e}}_{\mathrm{NLS}}=(\mathbf{I}_n-\widehat{\vartheta}_{\mathrm{NLS}}\mathbf{W})\mathbf{Y}-\mathbf{Z}\widehat{\boldsymbol{\gamma}}_{\mathrm{NLS}}$. Denote by $\widehat{\operatorname{bias}}(\widehat{\vartheta}_{\mathrm{NLS}})$ the resulting plug‑in estimate of \eqref{eq:bias_approx_app}. The bias‑corrected spatial parameter is then
$\widehat{\vartheta}_{\mathrm{BC}} := \widehat{\vartheta}_{\mathrm{NLS}}-\widehat{\operatorname{bias}}(\widehat{\vartheta}_{\mathrm{NLS}})$. 

This corrected value is used as the initial $\vartheta^{(0)}$ in Algorithm~\ref{alg:FCSAR}. Because $\widehat{\vartheta}_{\mathrm{BC}}$ is $\sqrt{n}$-consistent under standard regularity conditions \citep[see, e.g.,][]{Ma2020}, it provides an excellent starting point for the subsequent Newton‑Raphson iterations, ensuring rapid convergence of the unbiased estimator.

\newpage
\section{Regularity conditions}\label{sec:appendix_conditions}

We require the following assumptions:
\begin{assumption}\label{as:1}
The parameter space $\Theta$ is compact with $\boldsymbol{\theta}_0$ in its interior. Specifically, $\boldsymbol{\gamma} \in \mathcal{G}\subset \mathbb{R}^{K+1}$ compact, $\sigma \in [\sigma_{\min}, \sigma_{\max}] \subset (0, \infty)$, and $\vartheta \in (-1 + \tau, 1 - \tau)$ for some $\tau > 0$.
\end{assumption}

\begin{assumption}\label{as:2}
For the asymptotic theory, the functional predictors $\mathcal{X}_1,\ldots,\mathcal{X}_n$ are assumed to be i.i.d. copies of a square-integrable process $\mathcal{X}\in\mathcal{H}$ with $\mathbb{E}\|\mathcal{X}\|_{\mathcal{H}}^{4}<\infty$. The covariance operator of $\mathcal{X}$ has eigenvalues $\lambda_1\ge\lambda_2\ge\cdots>0$ with $\sum_{k\ge1}\lambda_k<\infty$, and there is an eigen-gap $\lambda_k>\lambda_{k+1}$ for $k \in \{1,\ldots,K\}$. The truncation level $K$ is fixed. The RFPCA estimators are consistent: for each $k\le K$,
$\|\widehat{\phi}_k-\phi_k\|_{\mathcal{H}}=o_p(1)$ and $\widehat{\xi}_{ik}-\xi_{ik}=o_p(1)$, where $\xi_{ik}=\langle \mathcal{X}_i-\mu,\phi_k\rangle_{\mathcal{H}}$.
\end{assumption}

\begin{assumption}\label{as:3}
The M‑scale functional $\sigma_M$ with loss function $\rho_1$ and tuning constants $(c,\delta)$ is
\begin{enumerate}
\item[(i)] weakly continuous, i.e., if $\mathbb{P}_n \stackrel{\omega}{\rightarrow} \mathbb{P}$, then $\sigma_M(\mathbb{P}_n) \to \sigma_M(\mathbb{P})$;
\item[(ii)] Fisher‑consistent for the standard deviation at the normal model: $\sigma_M(\Phi) = 1$.
\end{enumerate}
For the specific choice
\begin{equation*}
\rho_1(u)=\frac{u^2}{2}\Bigl(1-\frac{u^2}{c^2}+\frac{u^4}{3c^4}\Bigr)\mathbf{1}(|u|\le c)+\frac{c^2}{6}\mathbf{1}(|u|>c)
\end{equation*}
with $c=1.56$ and $\delta=0.5$, both properties hold \citep{Bali2011}.
\end{assumption}

\begin{assumption}\label{as:4}
The regression coefficient function $\beta(t)$ lies in the span of $\phi_1(t), \ldots, \phi_K(t)$, i.e., $\beta(t) = \sum_{k=1}^K b_k \phi_k(t)$. This ensures that the truncation does not introduce bias.
\end{assumption}

\begin{assumption}\label{as:5}
The errors $\varepsilon_i$ are independent of the regressors $\mathbf{z}_i$, i.e., $\varepsilon_i \perp\!\!\!\perp \mathbf{z}_i$ for all $i$, and they are i.i.d. with symmetric density $f_\varepsilon$, $\mathbb{E}[\varepsilon_i] = 0$, $\operatorname{Var}(\varepsilon_i) = \sigma_0^2$, and $\mathbb{E}[|\varepsilon_i|^4] < \infty$.
\end{assumption}

\begin{assumption}\label{as:6}
The $\psi_2$-function (derivative of the redescending loss $\rho_2$) satisfies:
\begin{enumerate}
\item $\mathbb{E}\{\psi_2(\varepsilon_i/\sigma_0)\} = 0$ (automatically holds if $\psi_2$ is odd and $f_\varepsilon$ symmetric).
\item $\mathbb{E}\{\psi_2'(\varepsilon_i/\sigma_0)\} = a > 0$.
\item $\mathbb{E}\{\psi_2^2(\varepsilon_i/\sigma_0)\} = b < \infty$.
\item $\mathbb{E}\{\psi_2(\varepsilon_i/\sigma_0)(\varepsilon_i/\sigma_0)\} = \kappa$, with $0 < \kappa < \infty$.
\item $\mathbb{E}\{|\psi_2'(\varepsilon_i/\sigma_0)| (\varepsilon_i/\sigma_0)^2\} < \infty$ (a mild moment condition).
\end{enumerate}
\end{assumption}

\begin{assumption}\label{as:7}
For each $n\ge1$, the spatial weight matrix $\mathbf{W}_n=[w_{n,ij}]$ is non-stochastic, row-normalized ($\sum_{j=1}^n w_{n,ij}=1$, $w_{n,ii}=0$), and satisfies
\begin{enumerate}
\item $\|\mathbf{W}_n\|_\infty = \mathcal{O}(1)$,
\item $\|\mathbf{W}_n\|_1 = \mathcal{O}(1)$,
\item $\vartheta_0\in(-1,1)$ and $(\mathbf{I}_n-\vartheta_0\mathbf{W}_n)$ is invertible for all sufficiently large $n$.
\end{enumerate}
\end{assumption}

\begin{assumption}\label{as:8}
The design matrices $\mathbf{Z}_n=[\mathbf{1}_n,\boldsymbol{\Xi}_n]$ satisfy:
\begin{enumerate}
\item $n^{-1}\mathbf{Z}_n^\top\mathbf{Z}_n \stackrel{p}{\to} \bm{\mathcal Q}$, where $\bm{\mathcal Q}$ is positive definite;
\item $\max_{1\le i\le n}\|\mathbf{z}_{n,i}\| = o_p(n^{1/2})$;
\item $\mathbb{E}(\mathbf{z}_{n,i}\mathbf{z}_{n,i}^\top)$ exists and is positive definite uniformly in $n$.
\end{enumerate}
\end{assumption}

\begin{assumption}\label{as:9}
The spatial multiplier is uniformly well-behaved:
\begin{enumerate}
\item $\sup_{n\ge1}\sup_{\vartheta\in\Theta_\vartheta}\|(\mathbf{I}_n-\vartheta\mathbf{W}_n)^{-1}\|_\infty < \infty$,
\item $\sup_{n\ge1}\sup_{\vartheta\in\Theta_\vartheta}\|(\mathbf{I}_n-\vartheta\mathbf{W}_n)^{-1}\|_1 < \infty$.
\end{enumerate}
\end{assumption}

\begin{assumption}\label{as:10}
Let $\boldsymbol{\Psi}_n(\boldsymbol{\theta}) := \frac{1}{n}\sum_{i=1}^n \boldsymbol{\psi}_{2,n,i}(\boldsymbol{\theta})$ and $\boldsymbol{\Psi}(\boldsymbol{\theta}) := \lim_{n\to\infty}\mathbb{E}\{\boldsymbol{\Psi}_n(\boldsymbol{\theta})\}$, $\boldsymbol{\theta}\in\Theta$. Then, $\boldsymbol{\theta}_0$ is the unique zero of $\boldsymbol{\Psi}$ on $\Theta$. Equivalently, for every $\epsilon>0$, $\inf_{\|\boldsymbol{\theta}-\boldsymbol{\theta}_0\|\ge \epsilon}
\|\boldsymbol{\Psi}(\boldsymbol{\theta})\|>0$.
\end{assumption}

\begin{assumption}\label{as:11}
For each $n\ge1$, the projected model is
\begin{equation*}
\mathbf{Y}_n = \vartheta_0 \mathbf{W}_n \mathbf{Y}_n + \mathbf{Z}_n \boldsymbol{\gamma}_0 + \boldsymbol{\varepsilon}_n,
\end{equation*}
where $\mathbf{W}_n$ is a non-stochastic $n\times n$ spatial weight matrix with zero diagonal. Moreover:
\begin{enumerate}
\item[(i)] $\sup_{n\ge1}\|\mathbf{W}_n\|_1<\infty$ and $\sup_{n\ge1}\|\mathbf{W}_n\|_\infty<\infty$;
\item[(ii)] for every $\vartheta\in\Theta_\vartheta\subset(-1,1)$, the spatial multiplier $\mathbf{S}_n(\vartheta):=(\mathbf{I}_n-\vartheta\mathbf{W}_n)^{-1}$ exists, and
\begin{equation*}
\sup_{n\ge1}\sup_{\vartheta\in\Theta_\vartheta}
\Bigl\{\|\mathbf{S}_n(\vartheta)\|_1+\|\mathbf{S}_n(\vartheta)\|_\infty\Bigr\}<\infty;
\end{equation*}
\item[(iii)] the sequence $\{\mathbf{W}_n\}$ is such that the projected SAR model is well-defined for all sufficiently large $n$.
\end{enumerate}
\end{assumption}

\begin{assumption}\label{as:12}
There exists an initial estimator $\widehat{\boldsymbol{\theta}}^{(0)}$ that is $\sqrt{n}$-consistent: $\sqrt{n}\bigl(\widehat{\boldsymbol{\theta}}^{(0)}-\boldsymbol{\theta}_0\bigr)=\mathcal{O}_p(1)$. This may be taken, for example, as the bias-corrected NLS estimator used to initialize Algorithm~\ref{alg:FCSAR}.
\end{assumption}

\begin{assumption}\label{as:13}
Let $\boldsymbol{\Psi}_n(\boldsymbol{\theta}) = \frac{1}{n}\sum_{i=1}^n \boldsymbol{\psi}_{2,n,i}(\boldsymbol{\theta}),$ and $\boldsymbol{\Psi}(\boldsymbol{\theta}) = \lim_{n\to\infty}\mathbb{E}\{\boldsymbol{\Psi}_n(\boldsymbol{\theta})\}$. Then $\sup_{\boldsymbol{\theta}\in\Theta} \|\boldsymbol{\Psi}_n(\boldsymbol{\theta})-\boldsymbol{\Psi}(\boldsymbol{\theta})\|
\xrightarrow{p}0$.
\end{assumption}

\begin{assumption}\label{as:14}
The expected Jacobian matrix $\boldsymbol{\mathcal A}(\boldsymbol{\theta}_0) := \lim_{n\to\infty} \mathbb{E}\left\{ \frac{\partial}{\partial\boldsymbol{\theta}^\top} \boldsymbol{\Psi}_n(\boldsymbol{\theta}_0) \right\}$ exists and is nonsingular. Moreover, $\frac{1}{\sqrt{n}}\sum_{i=1}^n \boldsymbol{\psi}_{2,n,i}(\boldsymbol{\theta}_0) \Rightarrow N(\mathbf 0,\boldsymbol{\mathcal B})$ for some finite positive semidefinite matrix $\boldsymbol{\mathcal B}$, where $\boldsymbol{\mathcal B} = \lim_{n\to\infty}
\operatorname{Var}\left( \frac{1}{\sqrt{n}}\sum_{i=1}^n
\boldsymbol{\psi}_{2,n,i}(\boldsymbol{\theta}_0)
\right)$. Finally, there exists a neighborhood
$\mathcal N(\boldsymbol{\theta}_0)\subset\Theta$
such that $\sup_{\boldsymbol{\theta}\in\mathcal N(\boldsymbol{\theta}_0)}
\left\| \frac{\partial}{\partial\boldsymbol{\theta}^\top}\boldsymbol{\Psi}_n(\boldsymbol{\theta}) -
\boldsymbol{\mathcal A}(\boldsymbol{\theta}_0) \right\| \xrightarrow{p}0$
\end{assumption}

Assumptions~\ref{as:2}-\ref{as:3} ensure that the robust projection-pursuit RFPCA is well-defined and that the estimated scores and eigenfunctions used to construct $\mathbf{Z}_n$ are consistent. The eigen-gap and moment conditions in Assumption~\ref{as:2} guarantee identifiability and stability of the leading robust principal directions. Note that Assumption~\ref{as:2} is a theoretical simplification used to control the robust projection step. It should not be interpreted as covering a general spatially dependent functional predictor setting. In the present asymptotic framework, spatial dependence enters through the SAR response structure and the sequence of weight matrices $\{\mathbf{W}_n\}_{n\ge1}$, whereas the predictor curves are treated as i.i.d. for the purpose of establishing consistency of the generated RFPCA quantities. Population Fisher consistency of the RFPCA functional is established separately in Lemma~\ref{lem:rfpca_fisher} under its ellipticity conditions. Assumption~\ref{as:4} is an exact truncation condition ensuring that the projected SAR model does not incur approximation bias. Assumptions~\ref{as:5}-\ref{as:7} guarantee that the bias-corrected estimating equations are correctly centered at the true parameter. Assumptions~\ref{as:1} and~\ref{as:8}-\ref{as:10} provide compactness, design regularity, boundedness of the spatial multiplier, and identification of the unique zero of the population estimating map. Assumption~\ref{as:11} places the asymptotic analysis in a triangular-array SAR framework through the sequence of spatial weight matrices $\{\mathbf{W}_n\}_{n\ge1}$. Assumption~\ref{as:12} provides a $\sqrt{n}$-consistent initializer for the numerical algorithm. Assumptions~\ref{as:13}-\ref{as:14} supply the uniform convergence of the estimating map, existence of the limiting Jacobian, and asymptotic normality of the empirical estimating process required for Z-estimation and the sandwich limit $\boldsymbol{\mathcal V} = \boldsymbol{\mathcal A}^{-1} \boldsymbol{\mathcal B} \boldsymbol{\mathcal A}^{-\top}$.

\newpage
\section{Technical lemmas}\label{sec:appendix_lemmas}

\begin{lemma}\label{lemma:bias_term}
Under Assumptions~\ref{as:5} and~\ref{as:6}, the bias correction term $\mathrm{bias}_i(\boldsymbol{\theta})$ satisfies:
\begin{enumerate}
\item At $\boldsymbol{\theta}_0$: $\mathrm{bias}_i(\boldsymbol{\theta}_0) = \bigl\{\mathbf{W} (\mathbf{I}_n - \vartheta_0 \mathbf{W})^{-1}\bigr\}_{ii} \sigma_0 \kappa$.
\item $\mathbb{E}\bigl\{\psi_2(\varepsilon_i/\sigma_0) (\mathbf{W}\mathbf{Y})_i\bigr\} = \mathrm{bias}_i(\boldsymbol{\theta}_0)$.
\item $\displaystyle\frac{\partial \mathrm{bias}_i(\boldsymbol{\theta})}{\partial \vartheta} = \bigl\{\mathbf{W} (\mathbf{I}_n - \vartheta \mathbf{W})^{-1} \mathbf{W} (\mathbf{I}_n - \vartheta \mathbf{W})^{-1}\bigr\}_{ii}   \sigma \kappa$.
\end{enumerate}
\end{lemma}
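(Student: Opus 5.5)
Part~1 needs no argument: it is the definition \eqref{eq:bias_term} evaluated at $\boldsymbol{\theta}=\boldsymbol{\theta}_0$, i.e.\ with $(\vartheta,\sigma)=(\vartheta_0,\sigma_0)$. The work is in Part~2, which I plan to handle through the reduced form. Part~3 is a matrix-calculus identity.

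For Part~2, write $\mathbf{S}_0=(\mathbf{I}_n-\vartheta_0\mathbf{W})^{-1}$ and $\mathbf{G}_0=\mathbf{W}\mathbf{S}_0$. By the projected model \eqref{eq:proj_SAR} and Assumption~\ref{as:4},
\begin{equation*}
\mathbf{W}\mathbf{Y}=\mathbf{G}_0\mathbf{Z}\boldsymbol{\gamma}_0+\mathbf{G}_0\boldsymbol{\varepsilon},
\end{equation*}
so that $(\mathbf{W}\mathbf{Y})_i=(\mathbf{G}_0\mathbf{Z}\boldsymbol{\gamma}_0)_i+\sum_{j=1}^n(\mathbf{G}_0)_{ij}\varepsilon_j$. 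Multiplying by $\psi_2(\varepsilon_i/\sigma_0)$ and taking expectations term by term (a finite sum) gives three kinds of term:
\begin{itemize}
\item[(a)] \emph{The deterministic-design term.} Here $\mathbb{E}\{\psi_2(\varepsilon_i/\sigma_0)(\mathbf{G}_0\mathbf{Z}\boldsymbol{\gamma}_0)_i\}$ factorizes as $\mathbb{E}\{\psi_2(\varepsilon_i/\sigma_0)\}\,\mathbb{E}\{(\mathbf{G}_0\mathbf{Z}\boldsymbol{\gamma}_0)_i\}$. This is zero by Assumption~\ref{as:6}(1).
\item[(b)] \emph{Cross terms} $j\neq i$. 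Here $\mathbb{E}\{\psi_2(\varepsilon_i/\sigma_0)\varepsilon_j\}=\mathbb{E}\{\psi_2(\varepsilon_i/\sigma_0)\}\,\mathbb{E}(\varepsilon_j)=0$, by independence of the errors (Assumption~\ref{as:5}).
\item[(c)] \emph{The diagonal term} $j=i$. This equals $(\mathbf{G}_0)_{ii}\,\sigma_0\,\mathbb{E}\{\psi_2(\varepsilon_i/\sigma_0)(\varepsilon_i/\sigma_0)\}=(\mathbf{G}_0)_{ii}\sigma_0\kappa$, by Assumption~\ref{as:6}(4).
\end{itemize}
Although $w_{ii}=0$, the entry $(\mathbf{G}_0)_{ii}$ is generally nonzero, so term (c) survives. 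Summing (a)--(c) gives exactly $\mathrm{bias}_i(\boldsymbol{\theta}_0)$ from Part~1.

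Integrability is immediate. The function $\psi_2$ is bounded for both the Andrews and Danish losses. The error moments are finite by Assumption~\ref{as:5}, and $\mathbb{E}|\psi_2(u)u|<\infty$. Also, $\mathbf{Z}$ has finite second moments under Assumption~\ref{as:2}, with $\mathbf{S}_0$ bounded.

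The step needing most care is the factorization in (a). Assumption~\ref{as:5} is stated as $\varepsilon_i\perp\!\!\!\perp\mathbf{z}_i$, but $(\mathbf{G}_0\mathbf{Z}\boldsymbol{\gamma}_0)_i$ involves every row $\mathbf{z}_j$. Moreover, these rows are generated regressors built from all curves $\mathcal{X}_1,\ldots,\mathcal{X}_n$ through $\widehat{\mu}$ and $\widehat{\phi}_k$. I will therefore read Assumption~\ref{as:5}, together with the model condition on $\boldsymbol{\varepsilon}$ given $\mathcal{X}_1,\ldots,\mathcal{X}_n$, as independence of $\varepsilon_i$ from $(\mathcal{X}_1,\ldots,\mathcal{X}_n)$, and hence from the whole of $\mathbf{Z}$. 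Alternatively, one can condition on $\mathbf{Z}$: $\mathbb{E}\{\psi_2(\varepsilon_i/\sigma_0)\mid\mathbf{Z}\}=0$ by symmetry of $f_\varepsilon$ and oddness of $\psi_2$, so (a) vanishes after conditioning. Either way (b) and (c) are unaffected, since they involve only the errors.

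For Part~3, I will use the resolvent identity $\frac{\partial}{\partial\vartheta}(\mathbf{I}_n-\vartheta\mathbf{W})^{-1}=(\mathbf{I}_n-\vartheta\mathbf{W})^{-1}\mathbf{W}(\mathbf{I}_n-\vartheta\mathbf{W})^{-1}$. This follows by differentiating $\mathbf{S}(\vartheta)(\mathbf{I}_n-\vartheta\mathbf{W})=\mathbf{I}_n$, and is valid on $|\vartheta|<1$ where the inverse exists and is smooth. Left-multiplying by the constant matrix $\mathbf{W}$ gives $\frac{\partial}{\partial\vartheta}\mathbf{W}\mathbf{S}(\vartheta)=\mathbf{W}\mathbf{S}(\vartheta)\mathbf{W}\mathbf{S}(\vartheta)$. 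Taking the $(i,i)$ entry commutes with differentiation. Since $\sigma\kappa$ does not depend on $\vartheta$, this yields the stated formula.
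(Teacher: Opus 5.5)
Your proposal is correct and follows the paper's proof essentially step by step: the definition for Part~1, the reduced-form expansion in which only the $j=i$ error term survives for Part~2, and the resolvent derivative for Part~3, with your $\mathbf{G}_0=\mathbf{W}\mathbf{S}_0$ replacing the paper's double sum $\sum_j w_{ij}\sum_k s_{jk}$. Your remark that the design term needs $\varepsilon_i$ independent of all of $\mathbf{Z}$, not just of $\mathbf{z}_i$, is a fair point that the paper glosses over by pulling $\mathbf{s}_j^\top\mathbf{Z}\boldsymbol{\gamma}_0$ out of the expectation; note that your conditional-symmetry alternative needs the same strengthened independence (or conditional symmetry of $\varepsilon_i$ given $\mathbf{Z}$), so it is not a genuinely separate route.
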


\begin{proof}[Proof of Lemma~\ref{lemma:bias_term}]
We prove each part.

1. Direct substitution: 
\begin{equation*}
\mathrm{bias}_i(\boldsymbol{\theta}_0) = \bigl\{\mathbf{W} (\mathbf{I}_n - \vartheta_0 \mathbf{W})^{-1}\bigr\}_{ii} \sigma_0 \kappa.
\end{equation*}

2. Compute the expectation:
\begin{align*}
\mathbb{E}\bigl\{\psi_2(\varepsilon_i/\sigma_0)(\mathbf{W}\mathbf{Y})_i\bigr\} &= \mathbb{E}\Bigl\{ \psi_2(\varepsilon_i/\sigma_0) \sum_{j=1}^n w_{ij} Y_j \Bigr\} \\
&= \sum_{j=1}^n w_{ij}  \mathbb{E}\bigl\{\psi_2(\varepsilon_i/\sigma_0) Y_j\bigr\}.
\end{align*}
From the model: $\mathbf{Y} = (\mathbf{I}_n - \vartheta_0 \mathbf{W})^{-1}(\mathbf{Z}\boldsymbol{\gamma}_0 + \boldsymbol{\varepsilon})$. Hence $Y_j = \mathbf{s}_j^\top (\mathbf{Z}\boldsymbol{\gamma}_0 + \boldsymbol{\varepsilon})$ where $\mathbf{s}_j^\top$ denotes the $j$\textsuperscript{th} row of $\mathbf{S} := (\mathbf{I}_n - \vartheta_0 \mathbf{W})^{-1}$. Thus
\begin{align*}
\mathbb{E}\bigl\{\psi_2(\varepsilon_i/\sigma_0) Y_j\bigr\} &= \mathbb{E}\bigl\{\psi_2(\varepsilon_i/\sigma_0) \mathbf{s}_j^\top \mathbf{Z}\boldsymbol{\gamma}_0\bigr\} + \mathbb{E}\bigl\{\psi_2(\varepsilon_i/\sigma_0) \mathbf{s}_j^\top \boldsymbol{\varepsilon}\bigr\} \\
&= \mathbf{s}_j^\top \mathbf{Z}\boldsymbol{\gamma}_0  \mathbb{E}\{\psi_2(\varepsilon_i/\sigma_0)\} + \sum_{k=1}^n s_{jk}  \mathbb{E}\{\psi_2(\varepsilon_i/\sigma_0) \varepsilon_k\}.
\end{align*}
By Assumption~\ref{as:6}~(1), $\mathbb{E}\{\psi_2(\varepsilon_i/\sigma_0)\} = 0$. For $k \neq i$, independence of the errors (Assumption~\ref{as:5}) gives $\mathbb{E}\{\psi_2(\varepsilon_i/\sigma_0) \varepsilon_k\} = 0$, while for $k=i$, Assumption~\ref{as:6}~(4) yields $\mathbb{E}\{\psi_2(\varepsilon_i/\sigma_0) \varepsilon_i\} = \sigma_0 \kappa$. Consequently, $\mathbb{E}\{\psi_2(\varepsilon_i/\sigma_0) Y_j\} = s_{ji} \sigma_0 \kappa$.
Therefore
\begin{equation*}
\mathbb{E}\bigl\{\psi_2(\varepsilon_i/\sigma_0)(\mathbf{W}\mathbf{Y})_i\bigr\} = \sum_{j=1}^n w_{ij} s_{ji} \sigma_0 \kappa = \bigl(\mathbf{w}_i^\top \mathbf{S}\bigr)_i \sigma_0 \kappa
= \mathrm{bias}_i(\boldsymbol{\theta}_0),
\end{equation*}
where $\mathbf{S} = (\mathbf{I}_n - \vartheta_0 \mathbf{W})^{-1}$.

3. Differentiate $\mathrm{bias}_i(\boldsymbol{\theta})$:
\begin{align*}
\frac{\partial \mathrm{bias}_i(\boldsymbol{\theta})}{\partial \vartheta} &= \frac{\partial}{\partial \vartheta} \Bigl(
\bigl[\mathbf{W}(\mathbf{I}_n-\vartheta\mathbf{W})^{-1}\bigr]_{ii}\sigma\kappa
\Bigr) \\
&= \mathbf{e}_i^\top \mathbf{W} \frac{\partial}{\partial\vartheta} (\mathbf{I}_n-\vartheta\mathbf{W})^{-1}
\mathbf{e}_i \sigma\kappa.
\end{align*}
Using
\begin{equation*}
\frac{\partial}{\partial\vartheta} (\mathbf{I}_n-\vartheta\mathbf{W})^{-1} = (\mathbf{I}_n-\vartheta\mathbf{W})^{-1} \mathbf{W} (\mathbf{I}_n-\vartheta\mathbf{W})^{-1},
\end{equation*}
we obtain
\begin{equation*}
\frac{\partial \mathrm{bias}_i(\boldsymbol{\theta})}{\partial \vartheta} = \bigl[ \mathbf{W} (\mathbf{I}_n-\vartheta\mathbf{W})^{-1} \mathbf{W} (\mathbf{I}_n-\vartheta\mathbf{W})^{-1} \bigr]_{ii} \sigma\kappa.
\end{equation*}
This completes the proof.
\end{proof}

\begin{lemma}\label{lem:rfpca_fisher}
Assume that $\mathcal{X}$ is an elliptical random element in $\mathcal{H}$ with parameters $(\mu,\Gamma_0)$ in the sense of \citet[][Section~5]{Bali2011}, i.e. $\mathcal{X}\sim E(\mu,\Gamma_0)$, where $\Gamma_0$ is self-adjoint, positive semidefinite, and compact. Assume that $\mathbb{E}\|\mathcal{X}\|_{\mathcal{H}}^2<\infty$ so that the covariance operator $\Gamma$ exists. Let $\sigma_M$ be the M-scale functional used in the projection-pursuit index, calibrated so that it is Fisher-consistent for the standard deviation at the normal law (Assumption~\ref{as:3}~(ii)).
If the eigenvalues of $\Gamma_0$ satisfy $\lambda_{0,1}>\lambda_{0,2}>\cdots>\lambda_{0,K}>\lambda_{0,K+1}$, then the population robust projection-pursuit PCA functionals satisfy
\begin{equation*}
\phi_{\mathrm{R},k}(\mathbb{P}_{\mathcal{X}})=\phi_{0,k},\qquad
\lambda_{\mathrm{R},k}(\mathbb{P}_{\mathcal{X}})=c \lambda_{0,k},
\end{equation*}
where $(\lambda_{0,k},\phi_{0,k})$ are the eigenpairs of $\Gamma_0$ and $c>0$ is the proportionality constant in~\eqref{S3} below. In particular, if $\Gamma_0=\Gamma$ and the calibration yields $c=1$, then $\phi_{\mathrm{R},k}(\mathbb{P}_{\mathcal{X}})=\phi_k$ and $\lambda_{\mathrm{R},k}(\mathbb{P}_{\mathcal{X}})=\lambda_k$.
Moreover, for any Fisher-consistent location functional $\mu_R(\mathbb{P}_{\mathcal{X}})=\mu$, the corresponding population scores are $\xi_k(\mathbb{P}_{\mathcal{X}})=\langle \mathcal{X}-\mu,\phi_{\mathrm{R},k}(\mathbb{P}_{\mathcal{X}})\rangle$.
\end{lemma}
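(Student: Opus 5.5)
The plan is to reduce the infinite-dimensional projection-pursuit problem to a finite family of one-dimensional scale computations, using the defining property of elliptical elements from \citet[][Section~5]{Bali2011}. For every unit direction $a\in\mathcal{H}$, the projection $\langle \mathcal{X},a\rangle$ has the same distribution as $\langle \mu,a\rangle + \langle a,\Gamma_0 a\rangle^{1/2} V$, where $V$ is a fixed symmetric univariate random variable whose law does not depend on $a$. The first step is to record this fact and to derive from it the key identity. Location and scale equivariance of the M-scale (and of the robust location $\ell$ used inside it) give
\begin{equation}\label{S3}
\sigma_M^2\bigl(\langle \mathcal{X},a\rangle\bigr) = c\,\langle a,\Gamma_0 a\rangle,\qquad c:=\sigma_M^2(V)>0 .
\end{equation}
Here $c$ is strictly positive because $\sigma_M$ is the unique positive root of the defining equation with $\delta\in(0,\sup\rho_1)$. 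This is the identity referenced in the statement.

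With \eqref{S3} in hand, the index satisfies $\mathcal{V}(a)=c\langle a,\Gamma_0 a\rangle$. The sequential problems \eqref{eq:rpp1}--\eqref{eq:rppk} therefore become $c$ times the Rayleigh-quotient problems for the compact self-adjoint operator $\Gamma_0$. I would then argue by induction on $k\le K$, using the Courant--Fischer characterization.

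1. For $k=1$, the supremum of $\langle a,\Gamma_0 a\rangle$ over the unit sphere equals $\lambda_{0,1}$. Because $\lambda_{0,1}>\lambda_{0,2}$, the eigenspace is one-dimensional, so the maximizer is unique up to sign and equals $\phi_{0,1}$.
2. Assume $\phi_{\mathrm R,j}=\phi_{0,j}$ for $j<k$. The constraint $a\perp\{\phi_{0,1},\ldots,\phi_{0,k-1}\}$ restricts $\Gamma_0$ to the invariant orthogonal complement. On that complement the top eigenvalue is $\lambda_{0,k}$, and it is simple by the strict gap $\lambda_{0,k}>\lambda_{0,k+1}$. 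This gives $\phi_{\mathrm R,k}=\pm\phi_{0,k}$ and $\lambda_{\mathrm R,k}=\mathcal{V}(\phi_{0,k})=c\lambda_{0,k}$.

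The special case follows immediately. The elliptical law is normal at the reference model (or $\Gamma_0=\Gamma$ with the Gaussian calibration of Assumption~\ref{as:3}(ii)), and then $\sigma_M(V)=1$ gives $c=1$. The score statement is simply the definition, once $\mu_R(\mathbb{P}_{\mathcal{X}})=\mu$ has been substituted.

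The main obstacle is justifying the first step rigorously: that every one-dimensional projection of $\mathcal{X}$ has the same standardized law $V$. This requires the precise definition of a Hilbert-space elliptical element in Bali et al. (a representation $\mathcal{X}=\mu+ A\,\mathcal{Y}$ with a spherical $\mathcal{Y}$, or a characteristic-functional form). It also requires handling directions with $\langle a,\Gamma_0a\rangle=0$, where the projection is degenerate and $\mathcal{V}(a)=0$; these directions are harmless because they cannot be maximizers. For this step I would cite Lemma~5.1 and Proposition~5.1 of \citet{Bali2011} rather than reprove them. The rest is equivariance of $\sigma_M$ and standard spectral theory.
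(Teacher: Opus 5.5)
Your proposal is correct and follows the same route as the paper's proof. Both obtain \eqref{S3} from ellipticity and then identify the sequential projection-pursuit maximizers with the eigenfunctions of $\Gamma_0$; the only difference is that you derive \eqref{S3} from the one-dimensional marginal representation plus equivariance of $\sigma_M$ and prove the identification by a Courant--Fischer induction, whereas the paper simply cites Section~5 and Lemma~5.1 of \citet{Bali2011}. One minor correction in the special case: the Gaussian calibration alone does not give $c=1$ for a non-Gaussian elliptical law with $\Gamma_0=\Gamma$, but the statement assumes $c=1$ there, so direct substitution suffices.
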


\begin{proof}[Proof of Lemma~\ref{lem:rfpca_fisher}]
Since $\mathcal{X}\sim E(\mu,\Gamma_0)$, \citet[][Section~5]{Bali2011} show that the projection-pursuit scale index satisfies the following assumption:
\begin{equation}\label{S3}
\exists c>0\ \text{and compact }\Gamma_0 \text{ such that }\ \sigma_M^2(\mathbb{P}_{\mathcal{X}}[a])=c \langle a,\Gamma_0 a\rangle
\quad \forall a \in\mathcal{H}, 
\end{equation}
where $\mathbb{P}_{\mathcal{X}}[a]$ denotes the law of $\langle a,\mathcal{X}\rangle$. This is precisely the condition stated in \cite{Bali2011}. Under~\eqref{S3} and the strict separation of eigenvalues of $\Gamma_0$, \citet[][Lemma~5.1]{Bali2011} yields
\begin{equation*}
\phi_{\mathrm{R},k}(\mathbb{P}_{\mathcal{X}})=\phi_{0,k},\qquad
\lambda_{\mathrm{R},k}(\mathbb{P}_{\mathcal{X}})=c \lambda_{0,k},
\end{equation*}
for each $k \in \{1,\ldots,K\}$, where $(\lambda_{0,k},\phi_{0,k})$ are the eigenpairs of $\Gamma_0$. This proves the first claim.

If, in addition, $\mathbb{E}\|\mathcal{X}\|_{\mathcal{H}}^2<\infty$, then the covariance operator $\Gamma$ exists. For elliptical elements, when the covariance exists, it is proportional to the shape operator $\Gamma_0$, i.e. $\Gamma=\alpha \Gamma_0$ for some $\alpha>0$ \citep[][Lemma 2.2~(b)]{Bali2009}. Hence, when $\Gamma_0$ is chosen as the covariance operator itself (or, equivalently, absorbed into the calibration), and $\sigma_M$ is calibrated to be Fisher-consistent at the normal law, we may take $c=1$ and $\Gamma_0=\Gamma$, which gives $\phi_{\mathrm{R},k}=\phi_k$ and $\lambda_{\mathrm{R},k}=\lambda_k$.

Finally, by definition of the projection scores in the population functional, for any Fisher-consistent location functional $\mu_R(\mathbb{P}_{\mathcal{X}})=\mu$, we have $\xi_k(\mathbb{P}_{\mathcal{X}})=\langle \mathcal{X}-\mu,\phi_{\mathrm{R},k}(\mathbb{P}_{\mathcal{X}})\rangle$. This completes the proof.
\end{proof}

\begin{lemma}\label{lem:fisher_consistency}
Under Assumptions~\ref{as:5} and~\ref{as:6}, and with the bias correction term $\mathrm{bias}_i(\boldsymbol{\theta})$ defined in Lemma~\ref{lemma:bias_term}, the estimating functions in
\eqref{eq:est_eq_gamma}-\eqref{eq:est_eq_rho} satisfy $\mathbb{E}\bigl\{\boldsymbol{\psi}_{2,n,i}(\boldsymbol{\theta}_0)\bigr\}=\boldsymbol{0}$, where $\boldsymbol{\theta}_0=(\boldsymbol{\gamma}_0^\top,\sigma_0,\vartheta_0)^\top$ is the true parameter. Consequently, $\boldsymbol{\theta}_0$ is a root of the population estimating equations $\mathbb{E}\{\boldsymbol{\psi}_{2,n,i}(\boldsymbol{\theta})\}=\boldsymbol{0}$. If this root is unique (or the solution is defined as the root in a neighborhood of $\boldsymbol{\theta}_0$), then the corresponding M-functional is Fisher consistent, and any consistent root $\widehat{\boldsymbol{\theta}}_n$ of $\sum_{i=1}^n \boldsymbol{\psi}_{2,n,i}(\boldsymbol{\theta})=\boldsymbol{0}$ inherits Fisher consistency.
\end{lemma}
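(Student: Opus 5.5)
The plan is to verify directly that each of the three blocks of $\boldsymbol{\psi}_{2,n,i}(\boldsymbol{\theta}_0)$ has mean zero. The key observation is that at the true parameter the residual equals the structural error, $e_i(\boldsymbol{\theta}_0)=\{(\mathbf{I}_n-\vartheta_0\mathbf{W})\mathbf{Y}-\mathbf{Z}\boldsymbol{\gamma}_0\}_i=\varepsilon_i$, by the projected model of Appendix~\ref{sec:appendix_setup}. Under Assumption~\ref{as:4} this model holds exactly, with no truncation remainder. All three estimating functions therefore become explicit functions of $u_i=\varepsilon_i/\sigma_0$, $\mathbf{z}_i$ and $(\mathbf{W}\mathbf{Y})_i$. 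After that, each block is a one-line expectation computation. The uniqueness and inheritance statements then follow from the definition of an M-functional as a root of the population map.

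For the regression block, I would condition on $\mathbf{z}_i$. Independence of $\varepsilon_i$ and $\mathbf{z}_i$ (Assumption~\ref{as:5}) gives
\[
\mathbb{E}\{\psi_2(u_i)\mathbf{z}_i\}=\mathbb{E}\{\psi_2(u_i)\}\,\mathbb{E}(\mathbf{z}_i)=\mathbf{0},
\]
using Assumption~\ref{as:6}(1), which holds because $\psi_2$ is odd and $f_\varepsilon$ is symmetric. Integrability comes from Assumption~\ref{as:6}(3) together with finite second moments of $\mathbf{z}_i$ (Assumption~\ref{as:8}) via Cauchy--Schwarz.

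For the scale block, $\mathbb{E}\{\psi_2(u_i)u_i\}-\kappa=0$ holds by the very definition of $\kappa$ in Assumption~\ref{as:6}(4).

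For the spatial block, I invoke Lemma~\ref{lemma:bias_term}(2), which states $\mathbb{E}\{\psi_2(u_i)(\mathbf{W}\mathbf{Y})_i\}=\mathrm{bias}_i(\boldsymbol{\theta}_0)$. Part (1) of the same lemma shows that the correction term evaluated at $\boldsymbol{\theta}_0$ is exactly this quantity. The difference therefore vanishes for every $i$ and $n$.

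The main obstacle, mild as it is, lies in the spatial block. The expansion $Y_j=\mathbf{s}_j^\top(\mathbf{Z}\boldsymbol{\gamma}_0+\boldsymbol{\varepsilon})$ involves the random, generated regressors $\mathbf{Z}$. The cross term $\mathbb{E}\{\psi_2(u_i)\mathbf{s}_j^\top\mathbf{Z}\boldsymbol{\gamma}_0\}$ vanishes only if $\varepsilon_i$ is independent of the whole design, not merely of its own row $\mathbf{z}_i$. I would read Assumption~\ref{as:5} jointly with the model statement $\mathbb{E}(\varepsilon_i\mid\mathcal{X}_1,\ldots,\mathcal{X}_n)=0$ as errors independent of $\mathbf{Z}$. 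With that reading, conditioning on $\mathbf{Z}$ and applying Lemma~\ref{lemma:bias_term} conditionally yields the same conclusion. Summability of $\sum_j w_{ij}s_{ji}$ follows from the bounded row and column sums in Assumption~\ref{as:9}.

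Finally, $\mathbb{E}\{\boldsymbol{\psi}_{2,n,i}(\boldsymbol{\theta}_0)\}=\mathbf{0}$ for all $i$ shows that $\boldsymbol{\theta}_0$ is a root of the population map; averaging over $i$ and letting $n\to\infty$ gives $\boldsymbol{\Psi}(\boldsymbol{\theta}_0)=\mathbf{0}$. If the root is unique, or if the functional is defined as the local root, the M-functional evaluated at the model law equals $\boldsymbol{\theta}_0$, which is Fisher consistency. Any consistent sample root converges to this same point, so it targets $\boldsymbol{\theta}_0$.
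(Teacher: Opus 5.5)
Your proposal is correct and follows the paper's proof block by block: you use $e_i(\boldsymbol{\theta}_0)=\varepsilon_i$; independence together with $\mathbb{E}\{\psi_2(\varepsilon_i/\sigma_0)\}=0$ for the regression block; the definition of $\kappa$ for the scale block; and the identity $\mathbb{E}\{\psi_2(\varepsilon_i/\sigma_0)(\mathbf{W}\mathbf{Y})_i\}=\sum_j w_{ij}s_{ji}\sigma_0\kappa=\mathrm{bias}_i(\boldsymbol{\theta}_0)$ of Lemma~\ref{lemma:bias_term} for the spatial block. Your remark that the design cross term needs $\varepsilon_i$ independent of the whole generated design $\mathbf{Z}$, not merely of $\mathbf{z}_i$, is accurate and makes explicit what the paper uses silently when it factors $\mathbb{E}\{\psi_2(\varepsilon_i/\sigma_0)\sum_j w_{ij}\mathbf{s}_j^\top\mathbf{Z}\boldsymbol{\gamma}_0\}$; this is a genuine strengthening of the stated $\mathbb{E}(\varepsilon_i\mid\mathcal{X}_1,\ldots,\mathcal{X}_n)=0$, because mean independence does not control $\mathbb{E}\{\psi_2(\varepsilon_i/\sigma_0)\mid\mathbf{Z}\}$ for nonlinear $\psi_2$.
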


\begin{proof}[Proof of Lemma~\ref{lem:fisher_consistency}]
We verify the unbiasedness of each component of $\boldsymbol{\psi}_{2,n,i}(\boldsymbol{\theta}_0)$. At the true parameter, the model implies
\begin{equation*}
(\mathbf{I}_n-\vartheta_0\mathbf{W})\mathbf{Y}-\mathbf{Z}\boldsymbol{\gamma}_0=\boldsymbol{\varepsilon},
\end{equation*}
hence $e_i(\boldsymbol{\theta}_0)=\varepsilon_i$.

\textit{(i)} At $\boldsymbol{\theta}_0$, the contribution is $\psi_2(\varepsilon_i/\sigma_0)\mathbf{z}_i$. Using Assumption~\ref{as:5} (independence of $\varepsilon_i$ and $\mathbf{z}_i$) and Assumption~\ref{as:6}~(1),
\begin{equation*}
\mathbb{E} \left\lbrace\psi_2 \left(\frac{\varepsilon_i}{\sigma_0}\right)\mathbf{z}_i\right\rbrace =\mathbb{E} \left[\mathbf{z}_i \mathbb{E} \left\{\psi_2 \left(\frac{\varepsilon_i}{\sigma_0}\right)\Bigm|\mathbf{z}_i\right\}\right] =\mathbb{E}(\mathbf{z}_i) \cdot 0=\mathbf{0}_{K+1}.
\end{equation*}

\textit{(ii)} At $\boldsymbol{\theta}_0$, the contribution is
$\psi_2(\varepsilon_i/\sigma_0)(\varepsilon_i/\sigma_0)-\kappa$.
By Assumption~\ref{as:6}~(4), $\kappa=\mathbb{E} \left\{\psi_2(\varepsilon_i/\sigma_0)(\varepsilon_i/\sigma_0)\right\}$,
so
\begin{equation*}
\mathbb{E} \left\{\psi_2 \left(\frac{\varepsilon_i}{\sigma_0}\right)\frac{\varepsilon_i}{\sigma_0}-\kappa\right\}=0.
\end{equation*}

\textit{(iii)} At $\boldsymbol{\theta}_0$, the contribution is
$\psi_2(\varepsilon_i/\sigma_0)(\mathbf{W}\mathbf{Y})_i-\mathrm{bias}_i(\boldsymbol{\theta}_0)$. We show $\mathbb{E}\{\psi_2(\varepsilon_i/\sigma_0)(\mathbf{W}\mathbf{Y})_i\}=\mathrm{bias}_i(\boldsymbol{\theta}_0)$. From \eqref{eq:proj_SAR},
\begin{equation*}
\mathbf{Y}=(\mathbf{I}_n-\vartheta_0\mathbf{W})^{-1}(\mathbf{Z}\boldsymbol{\gamma}_0+\boldsymbol{\varepsilon})
=:\mathbf{S}_0(\mathbf{Z}\boldsymbol{\gamma}_0+\boldsymbol{\varepsilon}),
\end{equation*}
and writing $Y_j=\mathbf{s}_j^\top\mathbf{Z}\boldsymbol{\gamma}_0+\sum_{k=1}^n s_{jk}\varepsilon_k$ gives
\begin{equation*}
(\mathbf{W}\mathbf{Y})_i=\sum_{j=1}^n w_{ij}\mathbf{s}_j^\top\mathbf{Z}\boldsymbol{\gamma}_0
+\sum_{j=1}^n w_{ij}\sum_{k=1}^n s_{jk}\varepsilon_k.
\end{equation*}
Therefore,
\begin{equation*}
\mathbb{E} \left\{\psi_2 \left(\frac{\varepsilon_i}{\sigma_0}\right)(\mathbf{W}\mathbf{Y})_i\right\} = \underbrace{\mathbb{E} \left\{\psi_2 \left(\frac{\varepsilon_i}{\sigma_0}\right)\right\}}_{=0} \mathbb{E} \left\{\sum_{j=1}^n w_{ij}\mathbf{s}_j^\top\mathbf{Z}\boldsymbol{\gamma}_0\right\}
+\sum_{j=1}^n w_{ij}\sum_{k=1}^n s_{jk} \mathbb{E} \left\{\psi_2 \left(\frac{\varepsilon_i}{\sigma_0}\right)\varepsilon_k\right\}.
\end{equation*}
By independence, $\mathbb{E}\{\psi_2(\varepsilon_i/\sigma_0)\varepsilon_k\}=0$ for $k\neq i$, and by Assumption~\ref{as:6}~(4), $\mathbb{E}\{\psi_2(\varepsilon_i/ \sigma_0) \varepsilon_i\} = \sigma_0\kappa$. Hence,
\begin{equation*}
\mathbb{E} \left\{\psi_2 \left(\frac{\varepsilon_i}{\sigma_0}\right)(\mathbf{W}\mathbf{Y})_i\right\} =\sum_{j=1}^n w_{ij}s_{ji} \sigma_0\kappa =\bigl(\mathbf{w}_i^\top\mathbf{S}_0\bigr)_i \sigma_0\kappa =\mathrm{bias}_i(\boldsymbol{\theta}_0),
\end{equation*}
where the last equality uses Lemma~\ref{lemma:bias_term}. Therefore
\begin{equation*}
\mathbb{E} \left\{\psi_2 \left(\frac{\varepsilon_i}{\sigma_0}\right)(\mathbf{W}\mathbf{Y})_i-\mathrm{bias}_i(\boldsymbol{\theta}_0)\right\}=0.
\end{equation*}

Combining (i)-(iii) yields $\mathbb{E}\{\boldsymbol{\psi}_{2,n,i}(\boldsymbol{\theta}_0)\}=\boldsymbol{0}$.
\end{proof}

\section{Proof of main Theorems}\label{sec:proofs}

\begin{proof}[Proof of Theorem~\ref{thm:FCSAR_fisher}]
The proof combines the Fisher consistency of the RFPCA functional (Lemma~\ref{lem:rfpca_fisher}) with that of the bias-corrected M-estimator for the finite-dimensional SAR model (Lemma~\ref{lem:fisher_consistency}).

By Lemma~\ref{lem:rfpca_fisher}, the population RFPCA functional satisfies \(\phi_{\mathrm{R},k}(\mathbb{P}_{\mathcal{X}}) = \phi_k\) and \(\xi_k(\mathbb{P}_{\mathcal{X}}) = \langle \mathcal{X}-\mu,\phi_k\rangle\). Hence, at the true distribution, the design matrix \(\mathbf{Z}(\mathbb{P}_{\mathcal{X} Y}) = [\mathbf{1}_n,\boldsymbol{\Xi}]\) is exactly the matrix of true scores, and the finite-dimensional approximation \eqref{eq:proj_SAR} holds with the true parameters \(\boldsymbol{\gamma}_0 = (\beta_0,\mathbf{b}^\top)^\top\) (Assumption~\ref{as:4} guarantees that \(\beta(t) = \sum_{k=1}^K b_k\phi_k(t)\)).

Lemma~\ref{lem:fisher_consistency} proves that the estimating functions $\boldsymbol{\psi}_{2,n,i}(\boldsymbol{\theta})$ are unbiased at $\boldsymbol{\theta}_0$, so that the corresponding population estimating equation has root $\boldsymbol{\theta}_0$. By assumption~(iii), this root is unique. Therefore, the M-functional defined by the population bias-corrected estimating equation is Fisher consistent at $\boldsymbol{\theta}_0 = (\boldsymbol{\gamma}_0^\top,\sigma_0,\vartheta_0)^\top$. Consequently,
\begin{equation*}
\widehat{\boldsymbol{\gamma}}(\mathbb{P}_{\mathcal{X} Y}) = \boldsymbol{\gamma}_0, \qquad
\widehat{\sigma}(\mathbb{P}_{\mathcal{X} Y}) = \sigma_0, \qquad
\widehat{\vartheta}(\mathbb{P}_{\mathcal{X} Y}) = \vartheta_0.
\end{equation*}

The estimated coefficient function is \(\widehat{\beta}(t) = \sum_{k=1}^K \widehat{b}_k \widehat{\phi}_k(t)\). By Lemma~\ref{lem:rfpca_fisher}, \(\widehat{\phi}_k(\mathbb{P}_{\mathcal{X} Y}) = \phi_k\); by Lemma~\ref{lem:fisher_consistency}, \(\widehat{b}_k(\mathbb{P}_{\mathcal{X} Y}) = b_k\). Thus, \(\widehat{\beta}(\mathbb{P}_{\mathcal{X} Y})(t) = \sum_{k=1}^K b_k\phi_k(t) = \beta(t)\) for all \(t\). All four components of the FCSAR estimator are therefore Fisher consistent.
\end{proof}

\begin{proof}[Proof of Theorem~\ref{thm:consistency_main}]
Let $\boldsymbol{\Psi}(\boldsymbol{\theta}) := \lim_{n\to\infty}\mathbb{E}\{\boldsymbol{\Psi}_n(\boldsymbol{\theta})\}$ and $\boldsymbol{\Psi}_n(\boldsymbol{\theta}) := \frac{1}{n}\sum_{i=1}^n \boldsymbol{\psi}_{2,n,i}(\boldsymbol{\theta})$. By Lemma~\ref{lem:fisher_consistency}, $\boldsymbol{\Psi}(\boldsymbol{\theta}_0)=\mathbf{0}$.
By Assumption~\ref{as:10}, $\boldsymbol{\theta}_0$ is the unique zero of $\boldsymbol{\Psi}$ on $\Theta$. Further, by Assumption~\ref{as:13}, $\sup_{\boldsymbol{\theta}\in\Theta} \|\boldsymbol{\Psi}_n(\boldsymbol{\theta})-\boldsymbol{\Psi}(\boldsymbol{\theta})\|
\xrightarrow{p}0$.

Fix $\epsilon>0$. By Assumption~\ref{as:10} and compactness of $\Theta$, $\delta(\epsilon) := \inf_{\|\boldsymbol{\theta}-\boldsymbol{\theta}_0\|\ge \epsilon} \|\boldsymbol{\Psi}(\boldsymbol{\theta})\| >0$. Consider the event
\begin{equation*}
E_n(\epsilon) := \left\{ \sup_{\boldsymbol{\theta}\in\Theta}
\|\boldsymbol{\Psi}_n(\boldsymbol{\theta})-\boldsymbol{\Psi}(\boldsymbol{\theta})\| < \frac{\delta(\epsilon)}{2} \right\}.
\end{equation*}
On $E_n(\epsilon)$, for every $\boldsymbol{\theta}$ satisfying
$\|\boldsymbol{\theta}-\boldsymbol{\theta}_0\|\ge \epsilon$,
\begin{equation*}
\|\boldsymbol{\Psi}_n(\boldsymbol{\theta})\| \ge \|\boldsymbol{\Psi}(\boldsymbol{\theta})\| - \sup_{\boldsymbol{\vartheta}\in\Theta} \|\boldsymbol{\Psi}_n(\boldsymbol{\vartheta})-\boldsymbol{\Psi} (\boldsymbol{\vartheta})\| \ge \delta(\epsilon)-\frac{\delta(\epsilon)}{2} = \frac{\delta(\epsilon)}{2}.
\end{equation*}

Now let $\widehat{\boldsymbol{\theta}}_n$ satisfy
$\boldsymbol{\Psi}_n(\widehat{\boldsymbol{\theta}}_n)=\mathbf{0}$, or more generally $\|\boldsymbol{\Psi}_n(\widehat{\boldsymbol{\theta}}_n)\|=o_p(1)$. Then, with probability tending to one, $\|\boldsymbol{\Psi}_n(\widehat{\boldsymbol{\theta}}_n)\|
< \frac{\delta(\epsilon)}{2}$. Hence, on the event $E_n(\epsilon)$ and for all sufficiently large $n$,
$\widehat{\boldsymbol{\theta}}_n$ cannot satisfy
$\|\widehat{\boldsymbol{\theta}}_n-\boldsymbol{\theta}_0\|\ge \epsilon$. Therefore $\mathbb{P} \left( \|\widehat{\boldsymbol{\theta}}_n-\boldsymbol{\theta}_0\|\ge \epsilon \right)\to 0$. Since $\epsilon>0$ is arbitrary, it follows that $\widehat{\boldsymbol{\theta}}_n \xrightarrow{p} \boldsymbol{\theta}_0$. 
\end{proof}

\begin{proof}[Proof of Theorem~\ref{thm:asymptotic_normality_main}]
We first derive the explicit form of the limiting Jacobian matrix
$\boldsymbol{\mathcal A}(\boldsymbol{\theta}_0)$. Write $u_i:=\varepsilon_i/\sigma_0$. At $\boldsymbol{\theta}_0$,
$e_i(\boldsymbol{\theta}_0)=\varepsilon_i$, $\partial e_i/\partial\boldsymbol{\gamma}=-\mathbf{z}_i$, and $\partial e_i/\partial\vartheta=-(\mathbf{W}\mathbf{Y})_i$. Also, $(\mathbf{W}\mathbf{Y})_i$ is data and therefore has zero derivative with respect to the parameters. Let $d_i(\vartheta) := \bigl[\mathbf{W}(\mathbf{I}_n-\vartheta\mathbf{W})^{-1}\bigr]_{ii}$, so that $\mathrm{bias}_i(\boldsymbol{\theta}) = d_i(\vartheta)\sigma\kappa$. Then, 
\begin{equation*}
\frac{\partial}{\partial\sigma}\mathrm{bias}_i(\boldsymbol{\theta}) = d_i(\vartheta)\kappa = \frac{\mathrm{bias}_i(\boldsymbol{\theta})}{\sigma},
\qquad
\frac{\partial}{\partial\boldsymbol{\gamma}}\mathrm{bias}_i(\boldsymbol{\theta}) = \mathbf{0}.
\end{equation*}

Define
\begin{equation*}
\boldsymbol{\psi}_{2,n,i}(\boldsymbol{\theta})=
\begin{bmatrix}
\psi_2(e_i/\sigma)\mathbf{z}_i\\
\psi_2(e_i/\sigma)(e_i/\sigma)-\kappa\\
\psi_2(e_i/\sigma)(\mathbf{W}\mathbf{Y})_i-\mathrm{bias}_i(\boldsymbol{\theta})
\end{bmatrix}.
\end{equation*}
Then $\boldsymbol{\mathcal A} = \boldsymbol{\mathcal A}(\boldsymbol{\theta}_0) = \lim_{n\to\infty} \mathbb E\left\{
\frac{\partial}{\partial\boldsymbol{\theta}^\top} \boldsymbol{\Psi}_n(\boldsymbol{\theta}_0) \right\}$ is the $(K+3)\times(K+3)$ matrix 
\begin{equation*}
\boldsymbol{\mathcal A} =
\begin{bmatrix}
A_{\gamma\gamma} & A_{\gamma\sigma} & A_{\gamma\vartheta}\\
A_{\sigma\gamma} & A_{\sigma\sigma} & A_{\sigma\vartheta}\\
A_{\vartheta\gamma} & A_{\vartheta\sigma} & A_{\vartheta\vartheta}
\end{bmatrix},
\end{equation*}
with blocks
\begin{align*}
A_{\gamma\gamma} &= -\frac{1}{\sigma_0} \mathbb{E}\left\{
\psi_2'(u_i)\mathbf{z}_i\mathbf{z}_i^\top \right\}, \\
A_{\gamma\sigma} &= -\frac{1}{\sigma_0^2} \mathbb{E}\left\{
\psi_2'(u_i)\varepsilon_i\mathbf{z}_i \right\}, \\
A_{\gamma\vartheta} &= -\frac{1}{\sigma_0} \mathbb{E}\left\{
\psi_2'(u_i)\mathbf{z}_i(\mathbf{W}\mathbf{Y})_i \right\}, \\
A_{\sigma\gamma} &= -\frac{1}{\sigma_0} \mathbb{E}\left\{
\bigl[\psi_2'(u_i)u_i+\psi_2(u_i)\bigr]\mathbf{z}_i^\top
\right\}, \\
A_{\sigma\sigma} &= -\frac{1}{\sigma_0} \mathbb{E}\left\{
\psi_2'(u_i)u_i^2+\psi_2(u_i)u_i \right\}, \\
A_{\sigma\vartheta} &= -\frac{1}{\sigma_0} \mathbb{E}\left\{
\bigl[\psi_2'(u_i)u_i+\psi_2(u_i)\bigr](\mathbf{W}\mathbf{Y})_i
\right\}, \\
A_{\vartheta\gamma} &= -\frac{1}{\sigma_0} \mathbb{E}\left\{
\psi_2'(u_i)(\mathbf{W}\mathbf{Y})_i\mathbf{z}_i^\top \right\}, \\
A_{\vartheta\sigma} &= -\frac{1}{\sigma_0} \mathbb{E}\left\{
\psi_2'(u_i)u_i(\mathbf{W}\mathbf{Y})_i \right\} - \mathbb{E}\left\{ d_i(\vartheta_0)\kappa \right\}, \\
A_{\vartheta\vartheta} &= -\frac{1}{\sigma_0} \mathbb{E}\left\{
\psi_2'(u_i)(\mathbf{W}\mathbf{Y})_i^2 \right\} - \mathbb{E}\left\{ \frac{\partial}{\partial\vartheta}\mathrm{bias}_i(\boldsymbol{\theta}_0) \right\}, 
\end{align*}
where $\frac{\partial}{\partial\vartheta}\mathrm{bias}_i(\boldsymbol{\theta}_0) = \bigl\{\mathbf{W} (\mathbf{I}_n-\vartheta_0\mathbf{W})^{-1} \mathbf{W} (\mathbf{I}_n-\vartheta_0\mathbf{W})^{-1} \bigr\}_{ii} \sigma_0\kappa$.

Now write $\boldsymbol{\Psi}_n(\boldsymbol{\theta}) = \frac{1}{n}\sum_{i=1}^n \boldsymbol{\psi}_{2,n,i}(\boldsymbol{\theta})$ and $\boldsymbol{\Psi}(\boldsymbol{\theta}) = \lim_{n\to\infty}\mathbb{E}\{\boldsymbol{\Psi}_n(\boldsymbol{\theta})\}$. By Lemma~\ref{lem:fisher_consistency}, $\boldsymbol{\Psi}(\boldsymbol{\theta}_0)=\mathbf{0}$. Assume first that
$\boldsymbol{\Psi}_n(\widehat{\boldsymbol{\theta}}_n)=\mathbf{0}$. A mean-value expansion around $\boldsymbol{\theta}_0$ gives
\begin{equation*}
\mathbf{0} = \boldsymbol{\Psi}_n(\boldsymbol{\theta}_0)
+ \frac{\partial}{\partial\boldsymbol{\theta}^\top}
\boldsymbol{\Psi}_n(\widetilde{\boldsymbol{\theta}}_n)
(\widehat{\boldsymbol{\theta}}_n-\boldsymbol{\theta}_0),
\end{equation*}
for some intermediate value $\widetilde{\boldsymbol{\theta}}_n$
on the line segment joining $\widehat{\boldsymbol{\theta}}_n$
and $\boldsymbol{\theta}_0$. Hence
\begin{equation*}
\sqrt{n}(\widehat{\boldsymbol{\theta}}_n-\boldsymbol{\theta}_0)
= - \left\{ \frac{\partial}{\partial\boldsymbol{\theta}^\top}
\boldsymbol{\Psi}_n(\widetilde{\boldsymbol{\theta}}_n)
\right\}^{-1} \frac{1}{\sqrt{n}} \sum_{i=1}^n
\boldsymbol{\psi}_{2,n,i}(\boldsymbol{\theta}_0).
\end{equation*}

By Theorem~\ref{thm:consistency_main}, $\widehat{\boldsymbol{\theta}}_n\xrightarrow{p}\boldsymbol{\theta}_0$, and therefore $\widetilde{\boldsymbol{\theta}}_n\xrightarrow{p}\boldsymbol{\theta}_0$. By Assumption~\ref{as:14}, $\sup_{\boldsymbol{\theta}\in\mathcal N(\boldsymbol{\theta}_0)}
\left\| \frac{\partial}{\partial\boldsymbol{\theta}^\top}
\boldsymbol{\Psi}_n(\boldsymbol{\theta}) - \boldsymbol{\mathcal A}(\boldsymbol{\theta}_0) \right\| \xrightarrow{p}0$ so that $\frac{\partial}{\partial\boldsymbol{\theta}^\top} \boldsymbol{\Psi}_n(\widetilde{\boldsymbol{\theta}}_n)
\xrightarrow{p} \boldsymbol{\mathcal A}(\boldsymbol{\theta}_0)$. Since $\boldsymbol{\mathcal A}(\boldsymbol{\theta}_0)$ is nonsingular by Assumption~\ref{as:14},
\begin{equation*}
\left\{ \frac{\partial}{\partial\boldsymbol{\theta}^\top}
\boldsymbol{\Psi}_n(\widetilde{\boldsymbol{\theta}}_n)
\right\}^{-1} \xrightarrow{p} \boldsymbol{\mathcal A}(\boldsymbol{\theta}_0)^{-1}.
\end{equation*}

Further, by Assumption~\ref{as:14}, $\frac{1}{\sqrt{n}} \sum_{i=1}^n \boldsymbol{\psi}_{2,n,i}(\boldsymbol{\theta}_0) \Rightarrow N(\mathbf 0,\boldsymbol{\mathcal B})$. Therefore, by Slutsky's theorem,
\begin{equation*}
\sqrt{n}(\widehat{\boldsymbol{\theta}}_n-\boldsymbol{\theta}_0)
\Rightarrow N \left( \mathbf 0, \boldsymbol{\mathcal A}^{-1}
\boldsymbol{\mathcal B} \boldsymbol{\mathcal A}^{-\top} \right).
\end{equation*}

If one allows approximate roots, the same conclusion holds provided $\|\boldsymbol{\Psi}_n(\widehat{\boldsymbol{\theta}}_n)\|=o_p(n^{-1/2})$ so that the expansion contributes only an asymptotically negligible remainder. This completes the proof.
\end{proof}

\begin{proof}[Proof of Theorem~\ref{thm:asymp_beta_main}]
We decompose $\sqrt{n}(\widehat{\beta}-\beta)$ into manageable parts. Write
\begin{equation*}
\widehat{\beta}(t)-\beta(t)=\sum_{k=1}^K (\widehat{b}_k-b_k)\phi_k(t)+\sum_{k=1}^K b_k\bigl\{\widehat{\phi}_k(t)-\phi_k(t)\bigr\}
+\sum_{k=1}^K (\widehat{b}_k-b_k)\bigl\{\widehat{\phi}_k(t)-\phi_k(t)\bigr\}.
\end{equation*}
From Theorem~\ref{thm:asymptotic_normality_main}, $\sqrt{n}(\widehat{\mathbf{b}}-\mathbf{b}) = \mathcal{O}_p(1)$. Under distinct eigenvalues, the RFPCA estimators satisfy $\|\widehat{\phi}_k-\phi_k\| = \mathcal{O}_p(n^{-1/2})$, e.g., Theorem~4.1 in \cite{Bali2011}. Hence the product term is $\mathcal{O}_p(n^{-1})$ and asymptotically negligible. Thus,
\begin{equation}\label{tag1}
\sqrt{n}\bigl(\widehat{\beta}-\beta\bigr)=\sqrt{n}\sum_{k=1}^K (\widehat{b}_k-b_k)\phi_k +\sqrt{n}\sum_{k=1}^K b_k\bigl(\widehat{\phi}_k-\phi_k\bigr)+o_p(1) \quad\text{in }\mathcal{L}^2[0,1].
\end{equation}

From Theorem~\ref{thm:asymptotic_normality_main}, the subvector $\widehat{\mathbf{b}}$ satisfies $\sqrt{n}\bigl(\widehat{\mathbf{b}}-\mathbf{b}\bigr)\xrightarrow{d}N(\mathbf{0},\boldsymbol{V}_{\mathbf{bb}})$. For the eigenfunctions, Theorem~3.1 in \cite{Bali2015} gives the influence function $\operatorname{IF}(\cdot;\phi_k,\mathbb{P}_{\mathcal{X}})$ and the Bahadur expansion
\begin{equation}\label{tag2}
\sqrt{n}\bigl(\widehat{\phi}_k-\phi_k\bigr)=\frac{1}{\sqrt{n}}\sum_{i=1}^n \operatorname{IF}(\mathcal{X}_i;\phi_k,\mathbb{P}_{\mathcal{X}})+o_p(1)
\quad\text{in }\mathcal{L}^2[0,1]
\end{equation}
Consequently, the $K$-tuple $(\widehat{\phi}_1-\phi_1,\ldots,\widehat{\phi}_K-\phi_K)$ converges jointly to a mean-zero Gaussian element $(\Delta_1,\ldots,\Delta_K)$ in $\mathcal{L}^2[0,1]^K$ with covariance
\begin{equation}\label{tag3}
\mathbb{E}\bigl\{\langle\Delta_k,f\rangle\langle\Delta_j,g\rangle\bigr\}
=\mathbb{E}\bigl\{\langle\operatorname{IF}(\mathcal{X};\phi_k),f\rangle\langle\operatorname{IF}(\mathcal{X};\phi_j),g\rangle\bigr\}, \qquad \forall f,g\in\mathcal{L}^2[0,1].
\end{equation}

Define $\Phi:\mathbb{R}^K\times\mathcal{L}^2[0,1]^K\to\mathcal{L}^2[0,1]$ by $\Phi(\mathbf{b},\phi_1,\ldots,\phi_K)=\sum_{k=1}^K b_k\phi_k$. This map is linear, hence Hadamard differentiable with derivative
\begin{equation*}
\Phi'_{(\mathbf{b},\boldsymbol{\phi})}(\mathbf{h}_b,\mathbf{h}_{\phi_1},\ldots,\mathbf{h}_{\phi_K}) = \sum_{k=1}^K h_{b,k}\phi_k + \sum_{k=1}^K b_k h_{\phi_k}.
\end{equation*}

To apply the functional delta method, we need the joint convergence of $(\widehat{\mathbf{b}},\widehat{\phi}_1,\ldots,\widehat{\phi}_K)$. The expansions~\eqref{tag1} and~\eqref{tag2} express the estimators as sample averages of influence functions. Under Assumptions~\ref{as:2}-\ref{as:7}, $\{(\mathcal{X}_i,\varepsilon_i)\}_{i\ge1}$ are i.i.d., hence the stacked influence functions form an i.i.d. sequence in the product Hilbert space. Stack these influence functions into a single Hilbert‑space‑valued random element. Assume, in addition, that
\begin{equation*}
\sqrt{n}\bigl(\widehat{\mathbf b}-\mathbf b,\widehat{\phi}_1-\phi_1,\ldots,\widehat{\phi}_K-\phi_K\bigr) \xrightarrow{d}
(\mathbf Z_b,\Delta_1,\ldots,\Delta_K)
\end{equation*}
in $\mathbb{R}^K\times\mathcal{L}^2[0,1]^K$, where $\mathbf Z_b\sim N(\mathbf 0,\boldsymbol{\mathcal V}_{\mathbf{bb}})$ and $(\Delta_1,\ldots,\Delta_K)$ is a mean-zero Gaussian element with covariance induced by the influence functions. The covariance between $\mathbf{Z}_b$ and the $\Delta_k$'s is given by the limit of
\begin{equation*}
\frac{1}{n}\sum_{i=1}^n \mathbb{E} \left\{ \boldsymbol{\mathcal{A}}_{\mathbf{bb}}^{-1}\boldsymbol{\psi}_{\mathbf{b},i}(\boldsymbol{\theta}_0) \operatorname{IF}(\mathcal{X}_i;\phi_k) \right\},
\end{equation*}
where $\boldsymbol{\psi}_{\mathbf{b},i}(\boldsymbol{\theta}_0)=\psi_2(\varepsilon_i/\sigma_0) \mathbf{z}_i$ (the component of the estimating function for $\mathbf{b}$). By Assumption~\ref{as:5}, $\varepsilon_i$ is independent of $\mathcal{X}_i$ (and thus of $\mathbf{z}_i$). Moreover, Assumption~\ref{as:6}~(1) gives $\mathbb{E}\{\psi_2(\varepsilon_i/\sigma_0)\}=0$. Hence, conditional on $\mathcal{X}_i$, the expectation of $\boldsymbol{\psi}_{\mathbf{b},i}$ is zero, implying
\begin{equation*}
\mathbb{E} \left\{ \boldsymbol{\mathcal{A}}_{\mathbf{bb}}^{-1}\boldsymbol{\psi}_{\mathbf{b},i}(\boldsymbol{\theta}_0) \operatorname{IF}(\mathcal{X}_i;\phi_k) \right\} = \boldsymbol{\mathcal{A}}_{\mathbf{bb}}^{-1} \mathbb{E} \left\{ \mathbb{E}[\boldsymbol{\psi}_{\mathbf{b},i}\mid\mathcal{X}_i] \operatorname{IF}(\mathcal{X}_i;\phi_k) \right\} = \mathbf{0}.
\end{equation*}
Thus the asymptotic cross‑covariance vanishes; $\mathbf{Z}_b$ and $(\Delta_1,\ldots,\Delta_K)$ are independent.

Now apply the functional delta method to the joint convergence:
\begin{equation*}
\sqrt{n}\bigl(\widehat{\beta}-\beta\bigr)
\xrightarrow{d} \Phi'_{(\mathbf{b},\boldsymbol{\phi})}(\mathbf{Z}_b,\Delta_1,\ldots,\Delta_K) = \sum_{k=1}^K Z_{b,k}\phi_k + \sum_{k=1}^K b_k\Delta_k \equiv \mathcal{Z}.
\end{equation*}

Finally, compute the covariance of $\langle\mathcal{Z},f\rangle$ and $\langle\mathcal{Z},g\rangle$. Using the independence between $\mathbf{Z}_b$ and the $\Delta_k$'s,
\begin{equation*}
\mathbb{E}\bigl\{\langle\mathcal{Z},f\rangle\langle\mathcal{Z},g\rangle\bigr\} = \mathbb{E} \left\{\Bigl(\sum_{k=1}^K Z_{b,k}\langle\phi_k,f\rangle\Bigr) \Bigl(\sum_{j=1}^K Z_{b,j}\langle\phi_j,g\rangle\Bigr)\right\} + \mathbb{E} \left\{\Bigl(\sum_{k=1}^K b_k\langle\Delta_k,f\rangle\Bigr)
\Bigl(\sum_{j=1}^K b_j\langle\Delta_j,g\rangle\Bigr)\right\}.
\end{equation*}
The first term gives $\sum_{k,j}(\boldsymbol{V}_{\mathbf{bb}})_{kj}\langle\phi_k,f\rangle\langle\phi_j,g\rangle$ because $\mathbb{E}(Z_{b,k}Z_{b,j})=(\boldsymbol{V}_{\mathbf{bb}})_{kj}$. The second term, using~\eqref{tag3}, equals $\sum_{k,j} b_k b_j \mathbb{E}\{\langle\operatorname{IF}(\mathcal{X};\phi_k),f\rangle\langle\operatorname{IF}(\mathcal{X};\phi_j),g\rangle\}$. This yields the covariance stated in the theorem.
\end{proof}

\newpage
\bibliographystyle{agsm}
\bibliography{Rob_Spatial_SoF_M}

\end{document}